\DeclareMathOperator{\Ker}{Ker}             
\DeclareMathOperator{\Tr}{Tr}                   
\newcommand{\vc}{\vcentcolon =}            
\newcommand{\cv}{=\vcentcolon }            
\newtheorem{assumption}{Assumption}[section]
\newtheorem{theorem}[assumption]{Theorem}
\newtheorem{corollary}[assumption]{Corollary}
\newtheorem{lemma}[assumption]{Lemma}
\newtheorem{definition}[assumption]{Definition}
\newtheorem{remark}[assumption]{Remark}
\newtheorem{prop}[assumption]{Proposition}
\newtheorem{question}[assumption]{Question}
\newcommand{\abs}[1]{\vert#1\vert}        
\newcommand{\A}{\mathcal{A}}                
\newcommand{\bbbone}{{\text{\usefont{U}{dsss}{m}{n}\char49}}}   
\newcommand{\berg}{A^2(\mathbb{B}^n)}  
\newcommand{\bergon}[1]{A^2(#1)}      
\newcommand{\bergonw}[2]{A^2_{#1}(#2)} 
\newcommand{\C}{\mathbb{C}}                
\newcommand{\commut}[2]{[ #1,\,#2 ] }   
\newcommand{\corref}[1]{Corollary \ref{#1}} 
\newcommand{\cstFock}[0]{t}                   
\newcommand{\cstSch}[0]{\cstFock '}   
\newcommand{\defeql}[0]{\mathrel{\vcenter{\baselineskip0.5ex \lineskiplimit0pt
                     \hbox{\scriptsize.}\hbox{\scriptsize.}}}%
                     =}  
\newcommand{\dirac}{\slashed D}             
\newcommand{\Coo}{C^\infty}                   
\newcommand{\DD}{\mathcal{D}}             
\newcommand{\envalg}[1]{\mathscr{U}(#1)} 
\newcommand{\evalat}[1]{|_{#1}}
\renewcommand{\H}{\mathcal{H}}            
\newcommand{\hardyon}[1]{H^2(#1)}        
\newcommand{\hideqed}{\renewcommand{\qed}{}} 
\newcommand{\lemref}[1]{Lemma \ref{#1}} 
\newcommand{\<}{\langle}
\newcommand{\N}{\mathbb{N}}              
\newcommand{\poisson}[0]{K}               
\newcommand{\poissonw}[0]{K_w}            
\newcommand{\propref}[1]{Proposition \ref{#1}} 
\newcommand{\R}{\mathbb{R}}               
\newcommand{\remref}[1]{Remark \ref{#1}} 
\newcommand{\secref}[1]{Section \ref{#1}} 
\newcommand{\set}[1]{\{\,#1\,\}}                
\newcommand{\scalp}[2]{\langle {#1,#2} \rangle}      
\newcommand{\traceOp}{\boldsymbol{\gamma}} 
\newcommand{\traceOpw}{\boldsymbol{\gamma}_w} 
\newcommand{\Trw}{\Tr_\omega}             
\newcommand{\wt}{\widetilde}                   
\newcommand{\bb}{\begin{eqnarray}}
\newcommand{\ee}{\end{eqnarray}}
\newcommand{\eee}{\nonumber\end{eqnarray}}
\newcommand\ord{\operatorname{ord}}
\newcommand\CC{{\mathbf C}}
\newcommand\RR{{\mathbf R}}
\newcommand\dbar{\overline\partial}
\newcommand\OOm{{\overline\Omega}}
\newcommand\pOm{{\partial\Omega}}
\newcommand\cA{\mathcal A}
\newcommand\cB{\mathcal B}
\newcommand\cD{{\mathcal D}}
\newcommand\cH{\H}
\newcommand\HH{\mathbf H}
\newcommand\TTo{\mathbf T^\oplus}
\newcommand\wO{\wt\Omega}
\newcommand\wT{\wt T}
\newcommand\wH{\wt H}
\newcommand\cN{\mathcal N}
\newcounter{mnotecount}[section]
\renewcommand{\themnotecount}{\thesection.\arabic{mnotecount}}
\newcommand{\mnote}[1]
{\protect{\stepcounter{mnotecount}}$^{\mbox{\footnotesize
$\bullet$\themnotecount}}$ \marginpar{\raggedright\tiny\em$\!\!\!\!\!\!\,\bullet$\themnotecount: #1} }
\def\<#1,#2>{\langle#1\,,\,#2\rangle}            
\newcommand{\norm}[1]{\left\lVert#1\right\rVert}      
\newbox\ncintdbox \newbox\ncinttbox
\begin{document}

{
\makeatletter\def\@fnsymbol{\@arabic}\makeatother 
\title{Spectral triples and Toeplitz operators}
\author{M. Engli\v s \footnote{Research supported by GA~{\v CR} grant no.~201/12/G028.}\\
{\small Mathematics Institutes, Opava and Prague}
\footnote{Mathematics Institute, Silesian University in Opava, Na~Rybn\'\i\v cku~1, 74601~Opava, Czech Republic, and Mathematics Institute, Academy of Sciences, \v Zitn\' a~25, 11567~Prague~1, Czech Republic}
\\[2ex]
K. Falk and B. Iochum\\
{\small Centre de Physique Th\'eorique}
\footnote{Aix-Marseille Université, CNRS, CPT, UMR 7332, 13288 Marseille France, et Université de Toulon}
}
\date{}
\maketitle
}
\vspace{2cm}

\begin{abstract}
  We give examples of spectral triples, in the sense of A. Connes, constructed using the algebra of Toeplitz operators on smoothly bounded strictly pseudoconvex domains in $\C^n$, or the star product for the Berezin--Toeplitz quantization. Our main tool is the theory of generalized Toeplitz operators on the boundary of such domains, due to Boutet de Monvel and Guillemin.
\end{abstract}

\vspace{1cm}
\newpage

\bigskip

\section{Introduction}

  The purpose of this work is to construct spectral triples $(\A,\H,\DD)$ using the algebra $\A$ of Toeplitz operators acting on Bergman or Hardy spaces $\H$ on a smoothly bounded strictly pseudoconvex domain $\Omega\subset\C^n$.

  Recall that a spectral triple consists, loosely speaking, of an algebra $\A$ of operators, acting on a Hilbert space $\H$, and a certain operator $\DD$ which has bounded commutators with elements from $\A$. The notion was introduced by Connes, the model example being that of $\A$ the algebra of $C^\infty$ functions on a Riemannian manifold $M$, $\H$ the space of $L^2$-spinors and $\DD$ the Dirac operator; and a remarkable highlight of the theory is his reconstruction theorem that, in fact, every commutative spectral triple satisfying certain conditions arises in this way (up to isomorphism) \cite{ConnesReconstruction}. Spectral triples are thus quintessential for the ``noncommutative differential geometry'' program  \cite{Con94,ConnesMarcolli} and there exists an extensive literature on the subject, see e.g. the books \cite{GBVF,Khalkhali}, and the references therein.

  Our main tool are the so-called generalized Toeplitz operators, or Toeplitz operators with pseudodifferential symbols, on the boundaries of such domains, whose theory was developed by Boutet de Monvel and Guillemin \cite{BMG1981}. Upon passing from holomorphic functions on the domain to their boundary values, the generalized Toeplitz operators turn out to include also the ordinary Toeplitz operators on (weighted) Bergman spaces of the domain \cite{BM1979,Guillemin1984} (see also \cite{Howe1980} for related constructions), which have long been used in quantization on K\"ahler manifolds (Berezin and Berezin--Toeplitz quantizations) \cite{BMS,Ecmp,SchliHAB,KarSchli}.

  The set $\Psi(M)$ of classical pseudodifferential operators on a compact Riemannian manifold is an essential tool to understand the geometry of $M$. For instance, if $\A$ is the algebra $C^\infty(M)$ and $\DD\in\Psi(M)$ is of order one, both acting on $\H=L^2(M)$, then, assuming $\DD=\DD^*$, all commutators $[\DD,a]$ are bounded operators for any $a\in \A$ and $(\A,\H,\DD)$ is a spectral triple. If $\Psi^0(M)$ is the algebra of pseudodifferential operators of order less or equal to zero, one can check that $(\Psi^0(M),L^2(M),\DD)$ is also a spectral triple. An extension to non-compact manifold is possible by imposing for instance that $a(1+\DD^2)^{-1}$ is compact for any $a\in \A$. The same phenomenon occurs abstractly, since a regular spectral triple $(\A,\H,\DD)$ generates a pseudodifferential calculus $\Psi(\A)$ which is the algebra of operators $P$ on $\H$ with asymptotic expansion  $P\simeq a_q\vert\DD\vert^q+a_{q-1}\vert\DD\vert^{q-1}+\cdots$ where the $a_k$ are in the algebra generated by the $\
delta^n(a),\,n\in\N$ with $\delta(a)=[\vert\DD\vert,a]$. Then $(\Psi^0(\A),\H,\vert\DD\vert)$ will be also a spectral triple.

  Here we play around similar notions in the framework of Toeplitz operators on an open bounded domain $\Omega\subset\C^n$. By a result of Howe, the algebra $\Psi(\R^n)$ is locally isomorphic to the algebra of Toeplitz operators on the unit ball of $\C^n$ \cite{Howe1980,Taylor1984}.
  This result has been generalized by Boutet de Monvel and Guillemin \cite{BM1979,BMG1981,Guillemin1984}: the generalized Toeplitz operators on a compact manifold possessing so-called Toeplitz structure form an algebra microlocally isomorphic (modulo smoothing operators) to the algebra of pseudodifferential operators in $\R^n$. Indeed, this general framework can be brought back to a microlocal model for generalized Toeplitz operators, via a Fourier integral operator constructed modulo smoothing operators \cite{BM1974}, which sets up a bijection with the algebra of pseudodifferential operators on $\R^n$.
  In a way, the pseudodifferential operators are nothing else but Toeplitz operators in disguise \cite{Guillemin1984}.

  Of course, $\Omega$ is not compact, but we do not fall in the technicalities related to non unital spectral triples; but the price to pay is the intricate study on the role of the boundary of $\Omega$ which is of interest in complex analysis. This analysis has a long history which we intersect here only at few points: the Heisenberg algebra with its Fock and Bergman space representations and another quantization process than the Weyl one based on this Heisenberg algebra, namely the so called Berezin--Toeplitz quantization.

  We give several spectral triples for the Bergman and Hardy spaces with a natural extension for the Berezin--Toeplitz quantization based on a star product. We compute their spectral dimension, a point related to the appearance of the Dixmier trace. Such appearance was already detected in \cite{BM2008} and studied in \cite{ER2009,BEY2013,EGZ2009,EZ2010,BEY} (see also \cite{GWZ2012} for the bidisk).

  A spectral triple on the algebra $\{T_f \,\vert \, f\in C(S^1)\}\cap \Psi^0(S^1)$ acting on $\H=H^2(S^1)\oplus H^2(S^1)$ with an operator $\DD$ based on the shift and on the ``number operator'' has already been proposed in \cite{CM1993}.

  We expect that our spectral triples encode information about the (CR-)geometry of $\Omega$ or $\partial \Omega$ in much the same way as they encode the information about the Riemannian geometry of the manifold in the original model example mentioned above; we plan to treat this question in a subsequent work.

  More concretely, we recall in Section \ref{BHspaces} the role of the Poisson kernel and the trace map between the Sobolev spaces on $\partial \Omega$ and Sobolev spaces of harmonic functions on $\Omega$ which can be associated to a weight $w$ on $\overline{\Omega}$, extending known results on the analysis of Toeplitz operators of Szeg\"o type or Bergman type (see for instance \cite{E2008,BM1979}). Section \ref{FSandHA} is devoted to the Fock and Bergman representations of the Lie algebra of the Heisenberg group, especially for the model case of $\Omega$ the unit ball of $\C^n$ with the standard weights, which are used in the next section on possible Dirac-like operators on $H^2(\partial \Omega)$ or $A^2_w$. In Section \ref{Spectral triples}, we consider spectral triples with several candidates for operators $\DD$. The first one is defined by an elliptic generalized Toeplitz operator of order one; an example is the inverse of the Toeplitz operator associated with the defining function of $\Omega$, but since
  it is positive (so with trivial K-homology class), we double the
  Hilbert space and construct two classes of unitary operators compatible with the notion of generalized Toeplitz operators. A second idea is to start from the usual Dirac operator on $\R^n$ and to construct the corresponding operator acting on $\H$ through isomorphisms which involve a set of representations of the Lie algebra of the Heisenberg group. In the last section, previous results are applied to the star product arising in the Berezin--Toeplitz quantization of $\Omega$ equipped with a natural K\"ahler structure \cite{BMS,SchliHAB,KarSchli}.

\section{Bergman and Hardy spaces, and Toeplitz operators}
  \label{BHspaces}

  We gather in this section the preliminary material and known results.

  \subsection{Notations and definitions}
    \label{Notations and definitions}

    We consider a open strictly pseudoconvex bounded set $\Omega \subset \C^n \approx \R^{2n}$, with smooth boundary $\partial\Omega$ for $n$ a positive integer, so $\overline{\Omega}=\Omega \cup \partial \Omega$ is compact. Let $r : \overline{\Omega} \to \R^+$ be a positively signed defining function of $\Omega$, i.e.:
    \begin{align}
      \label{r}
      r\in C^\infty(\overline{\Omega}) \text{ with } r\evalat{\Omega}>0,\,  r\evalat{\partial\Omega} = 0 \text{ and } (\partial_{\mathbf{n}}r)\evalat{\partial\Omega}\neq0,
    \end{align}
    where $\partial_{\mathbf{n}}$ is the normal derivative to the boundary. The condition on the normal derivative means that $r$ behaves like the distance to the boundary near it.

    We decompose the exterior differential $d=\partial+\bar\partial$ in holomorphic and antiholomorphic parts and use $\partial_i\vc\partial_{z_i},\, \bar \partial_i\vc\partial_{\bar{z}_i}$.

    The strict pseudoconvexity guarantees that the restriction $\eta\vc\tfrac{1}{2i}(\bar \partial r -\partial r )\evalat{\partial\Omega} $  to the boundary of the one-form $\operatorname{Im}(-\partial r)$ is a contact form, i.e. $\nu \vc\eta\wedge(d\eta)^{n-1}$ is a volume element on $\partial\Omega$. Let us also consider the half-line bundle
    \begin{align}
    \label{Sigma}
      \Sigma \vc \{(x',s\,\eta_{x'}) \in T^*(\partial\Omega),\, s>0 \},
    \end{align}
    and a positive weight function $w$ on $\overline{\Omega}$ which is decomposed the following way:
    \begin{align}
      \label{w}
      w \cv r^{m_w} \,g_w,  \text{ where }m_w\in\R \text{ with }m_w>-1 \text{ and }\,g_w\in\Coo(\overline{\Omega})\text{ with }g_w\evalat{\partial\Omega}>0.
    \end{align}

    As Hermitian structure, we choose the Euclidean one on $\C^n\simeq \R^{2n}$, so $\norm{dz_j}=\sqrt{2}$ and $\norm{\partial r}=\sqrt{2} \norm{\eta}$.

    In the following, all pseudodifferential operators are classical and considered in $OPS^d_{1,0}$, for some $d\in\R$, using Hörmander notation (see \cite[Definition 7.8.1]{HorI}). For any pseudodifferential operator $Q$, we will denote by $\sigma(Q)$ its principal symbol.

    \begin{definition}\label{bergspace}
      The weighted Bergman space is
      \begin{align*}
        A^2_w(\Omega) \vc L^2_{hol}(\Omega,w) \vc \{ f\in L^2(\Omega,w\,d\mu), \, f \text{ is holomorphic on } \Omega \},
      \end{align*}
      endowed with the norm derived from the weighted scalar product $\scalp{f}{g}_{w} \vc \int_\Omega f \,\bar{g}\, w\,d\mu$,
      where $\mu$ is the Lebesgue measure. \\
      The Bergman space $A^2(\Omega)$ is the unweighted Bergman space where $w = r^0=1$.
    \end{definition}

    When $\Omega = \mathbb{B}^n$ and the defining function $r$ is radial, i.e. $r(z)=r(\abs{z})$, and for a weight $w=r^{m_w}$, $m_w\in\N$, we have the following orthonormal basis for $A^2_w(\mathbb{B}^n)$ (see \cite[Corollary 2.5]{GKV2003}):
    \begin{align} \label{valpharad}
      v_\alpha^w(z) = b_{\alpha} \,z^\alpha \vc  \big[\int_{\mathbb{B}^n} z^{\alpha} \overline{z}^{\alpha} \,w(\abs{z})\, d\mu(z) \big]^{-1/2}\, z^\alpha \,,\,\,\ \alpha \in \N^n.
    \end{align}
    In particular, an orthonormal basis of the unweighted Bergman space $\berg$ is given by the family
    \begin{align}\label{valpha}
      v_\alpha(z) = b_\alpha \,z^\alpha \vc \big(\tfrac{(\abs{\alpha}+n)!}{n!\,\alpha!\,\mu(\mathbb B^n)}\big)^{1/2} \, z^{\alpha},
    \end{align}
    (see \cite[Lemma 1.11]{Zhu2005} for more details) where for a multiindex $\alpha \in \N^n$, we set $\alpha! \vc \prod_{k=0}^n \alpha_k!$ and $\abs{\alpha} \vc \sum_{k=0}^n \alpha_k$.

    We use the standard definition of Sobolev spaces of order $s\in\R$ on a subset of $\R^n$ or its boundary, and on $\Omega \subset\C^n$ or $\partial \Omega$, whose construction is given in \cite[Chapitre 1]{Magenes1961}, and also in \cite[Appendix]{Grubb1996}. We denote them by $W^{s}(\Omega)$ and $W^{s}(\partial\Omega)$, respectively.
    We assume that the norms in these spaces have been chosen so that $W^0(\Omega)=L^2(\Omega,d\mu)$ with $\mu$ as above, and $W^0(\partial\Omega)=L^2(\partial\Omega)$ with respect to some smooth volume element on $\partial\Omega$, absolutely continuous with respect to the surface measure (for instance, $\nu=\eta\wedge(d\eta)^{n-1}$), which we fix from now on.

    \begin{definition}
      For $s \geq 0$, the holomorphic (resp. harmonic) Sobolev space on $\Omega$ of order $s$ is defined by
      \begin{align*}
        W_{hol}^{s}(\Omega) \,\big(resp. \, W_{harm}^{s}(\Omega)\big) \vc \{  f\in W^{s}(\Omega) \text{, } f \, \text{ is holomorphic (resp. harmonic) on }\Omega \}.
      \end{align*}
      Thus, $W_{hol}^{0}(\Omega) = \bergon{\Omega}$.\\
      The set of harmonic function in $L^2(\Omega,\, w\, d\mu)$ is denoted $L^2_{harm}(\Omega,\,w)$.
    \end{definition}

    \begin{definition}
      The Poisson operator $K$ is the harmonic extension operator which solves the Dirichlet problem: $\Delta\,K u = 0\,\text{ on }\Omega, \,\, K u\evalat{\partial\Omega} = u$, where $\Delta = \partial\overline{\partial}$ is the complex Laplacian.
    \end{definition}

    Thus $K$ acts from functions on $\partial\Omega$ into harmonic functions on $\Omega$ and by elliptic regularity theory (see \cite{Magenes1961}), $K$ extends to a continuous map from $W^s(\partial\Omega)$ onto $W_{harm}^{s+1/2}(\Omega,w)$, for all $s\in\R$. In particular $K:C^\infty(\partial\Omega)\to C_{harm}^\infty(\overline{\Omega})$. We denote by $K_w$ the operator $K$ considered as acting from $L^2(\partial\Omega)$ into $L^2(\Omega,w\,d\mu)$, and by $K^*_w$ its Hilbert space adjoint. For an arbitrary weight $w$, a simple computation shows $K^*_w$ is related to $K(=K_1)$ through
    $$
      K^*_w u = K^*(wu) .
    $$
    In particular, $K_w^*K_w=K^*wK$. \\Note that $K_w$ is injective since $0 = \poissonw u \Rightarrow \poissonw u\evalat{\partial\Omega} = 0 \Leftrightarrow u=0$. The operator $K^*$ acts continuously from $W^s(\Omega)$ into $W^{s+1/2}(\partial\Omega)$, for all $s\in\mathbb R$. Thus $K: C^\infty(\partial\Omega) \to C_{harm}^\infty(\overline{\Omega})$.

    We consider the operator
      \begin{align}
        \label{Lambdaw}
        \Lambda_w \vc \poissonw ^*\poissonw = \poisson ^* \, w \, \poisson.
      \end{align}
    Actually $\Lambda_w$ is an elliptic and selfadjoint pseudodifferential operator of order $-(m_w+1)$ on $\partial\Omega$ (hence compact) with principal symbol (see \cite{BM1979})
    \begin{align}
      \label{princ}
      \sigma(\Lambda_w)(x',\xi') = 2^{-1}\Gamma(m_w+1) \,g_w(x') \|\eta_{x'}\|^{m_w}\,\norm{\xi'}^{-(m_w+1)}  \quad x'\in\partial\Omega, \,\xi' \in T^*\partial\Omega,
    \end{align}
    so, when $m_w\in \N$,
    \begin{align}
      \label{princ1}
      \sigma(\Lambda_w)(x',\xi')= 2^{-(m_w+1)}\,(\partial_{\mathbf{n}}^{m_w}w)(x')\,\norm{\xi'}^{-(m_w+1)}, \quad x'\in\partial\Omega, \,\xi' \in T^*\partial\Omega.
      \end{align}
    This is actually a subject of the extensive theory of calculus of boundary pseudodifferential operators due to Boutet de Monvel \cite{BdMActa}.

    In particular, $\Lambda_w$ acts continuously from $W^s(\partial\Omega)$ into $W^{s+m_w+1}(\partial\Omega)$, for any $s\in\R$. $\Lambda_w$ is an injection since for $u\in \Ker(\Lambda_w)$ and using the injectivity of $\poissonw$, we have $0=\scalp{\Lambda_w u}{u} = \norm{\poissonw u}^2$.
    The inverse operator $\Lambda_w^{-1}$ is well defined on $\text{Ran}(\poissonw ^*)$, thus we have
    \begin{align}
    \label{KLambdaK*}
      \Lambda_w^{-1}\,\poissonw^* \poissonw = \bbbone_{L^2(\partial\Omega)} \, \text{ and} \,\, \poissonw \, \Lambda_w^{-1}\,\poissonw^*  = \mathbf{\Pi}_{w,harm}\,,
    \end{align}
    where  $\mathbf{\Pi}_{w,harm}$ is the orthogonal projection from $L^{2}(\Omega,w)$ onto $L^{2}_{harm}(\Omega,w)$: the first equality is direct. Applying $K_w$ on both side of it, we deduce that $\poissonw \Lambda_w^{-1} \poissonw^*$ is the identity on $\overline{\text{Ran}(\poissonw)}$ which is the closure of $W^{1/2}_{harm}(\Omega)$ in $L^2(\Omega,w)$, i.e. $L^2_{harm}(\Omega,w)$. Moreover, $\poissonw \Lambda_w^{-1} \poissonw^*$ vanishes on $\overline{\text{Ran}(\poisson_w)}^\perp = \text{Ker}(\poissonw^{*})$, and we get the second equality of \eqref{KLambdaK*}.\\
    As a bounded operator, $\poissonw$ has the polar decomposition $\poissonw  \cv U_w (\poissonw ^*\poissonw )^{1/2} = U_w \Lambda_w^{1/2}$, where $U_w$ is a unitary from $L^{2}(\partial\Omega)$ onto $\overline{\text{Ran}(\poissonw )} = L^2_{harm}(\Omega,w)$: $U_w^*U_w$ maps $L^2(\partial\Omega)$ to itself since
    \begin{align*}
      U_w^*U_w=\Lambda_w^{-1/2}\,\poissonw ^* \poissonw \, \Lambda_w^{-1/2} =\Lambda_w^{-1/2}\Lambda_w\Lambda_w^{-1/2}=\bbbone_{L^2(\partial\Omega)},
    \end{align*}
    while $U_wU_w^*  = \poissonw \Lambda_w^{-1}\poissonw ^*\evalat{L^2_{harm}(\Omega,w)}=\bbbone_{L^2_{harm}(\Omega,w)}$ by \eqref{KLambdaK*}.

    \begin{definition}
      The trace operator $\traceOpw  : L^2(\Omega,w) \to L^2(\partial\Omega)$ is defined by
      \begin{align*}
        \traceOpw \vc \Lambda_w^{-1}\poissonw ^*\,.
      \end{align*}
    \end{definition}
    In particular, \eqref{KLambdaK*} gives
    \begin{align*}
      & \poissonw \,\traceOpw\evalat{L^2_{harm}(\Omega,w)} = \bbbone_{L^2_{harm}(\Omega,w)}\,\,\text{and }\,\, \traceOpw \,\poissonw  = \bbbone_{L^2(\partial\Omega)}.
    \end{align*}

    The operator $\traceOpw$ is thus a left inverse of $\poissonw$, so it takes the boundary value of any function $f$ in $L^2_{harm}(\Omega,w)$. Again, the index $w$ is just here to recall that the operator is defined on a weighted Hilbert space. The operator $\traceOpw$ extends continuously to $\traceOpw : W^s_{harm}(\Omega) \to W^{s-1/2}(\partial\Omega)$ for any $s\in\R$.
    \bigskip

    We now define the following spaces:
    \begin{definition}
      The holomorphic Sobolev space on $\partial\Omega$ of order $s\in\R$ is
      \begin{align*}
        W_{hol}^{s}(\partial\Omega) \vc \{  u\in W^{s}(\partial\Omega) \text{, } \poisson u \, \text{ is holomorphic on }\Omega \},
      \end{align*}
      and the Hardy space is
      \begin{align*}
        H^2 \vc H^2(\partial\Omega) \vc W_{hol}^{0}(\partial\Omega),
      \end{align*}
      with the usual norm on $L^2(\partial\Omega)$.
    \end{definition}
    Note that the Hardy space is the closure of $C^{\infty}_{hol}(\partial\Omega)$ in $L^2(\partial\Omega)$.

    We will use two types of Toeplitz operators using bold letters to refer to operators acting on Hilbert spaces defined over the domain $\Omega$ whereas the regular roman ones concern those over its boundary $\partial\Omega$:

    \begin{definition}
      If $u\in \Coo(\partial\Omega)$, the Toeplitz operator $T_u : H^2(\partial\Omega) \to H^2(\partial\Omega)$ is
      \begin{align*}
        T_u \vc \Pi \, M_u,
      \end{align*}
      where $\Pi : L^2(\partial\Omega) \to H^2(\partial\Omega)$ is the Szeg\"o projection and $M_u$ (or just $u$) is the multiplication operator by $u$.
      \\
      For $f\in \Coo(\overline{\Omega})$, the Toeplitz operator $\mathbf{T}_f : \bergonw{w}{\Omega} \to \bergonw{w}{\Omega}$ is defined as
      \begin{align*}
        \mathbf{T}_f \vc \mathbf{\Pi}_w \, \mathbf{M}_f,
      \end{align*}
      where $\mathbf{\Pi}_w : L^2(\Omega,w) \to \bergonw{w}{\Omega}$ is the orthogonal projection onto the space of holomorphic functions in $L^2(\Omega,w)$ and $\mathbf{M}_f$ is the multiplication by $f$.
    \end{definition}
    For the Hardy (resp. Bergman) case, we have
    \begin{align*}
      u \to T_u \,\text{ is linear,}\,\, T^*_u = T_{\overline{u}}, \,\,T_{1} = \bbbone, \,\, \norm{T_u} \leq \norm{u}_{\infty}, \,\,(\text{resp. idem for } \mathbf{T}_f).
    \end{align*}

    \begin{remark}
    \label{reminvTf}
      For any strictly positive function $u$ in $L^{\infty}(\partial\Omega)$, $T_u$ is a selfadjoint and positive definite operator on $H^2(\partial\Omega)$ since $\scalp{T_u v}{v} =\int_{\partial \Omega}u \vert v\vert^2>0$ for any $v\neq 0$.
      In particular, it is an injection, so there exists an unbounded inverse $T_u^{-1}$, which is densely defined on $H^2(\partial\Omega)$. The same is true in the Bergman case for $\mathbf{T}_f$, $f\in L^{\infty}(\Omega)$.
    \end{remark}
    Boutet de Monvel and Guillemin studied in \cite{BMG1981} (see also \cite{BM1979}) a more general notion of Toeplitz operators acting on Hardy spaces:
    \begin{definition}
      \label{defGTO}
      For a pseudodifferential operator  $P$ on $L^2(\partial\Omega)$ of order $m\in\R$, let $T_P$ be the generalized Toeplitz operator (GTO): $W_{hol}^{m}(\partial\Omega) \to H^2(\partial\Omega)$ defined by
      \begin{align*}
        T_P \vc \Pi \,P\evalat{W_{hol}^{m}(\partial\Omega)}.
      \end{align*}
    \end{definition}
    One can alternatively extend the definition of $T_P: W^{m}(\partial\Omega) \to H^2(\partial\Omega)$ by taking $T_P = \Pi \,P\, \Pi$.

    It often happens that $T_P = T_Q$ with $P\neq Q$. However, the restriction of the principal symbol $\sigma(P)$ of $P$ to $\Sigma$ is always determined uniquely: when $T_P = T_Q$ then either $\ord(P)=\ord(Q)$ and in that case $\sigma(P)\evalat{\Sigma} = \sigma(Q)\evalat{\Sigma}$; or, for instance, $\ord(P)>\ord(Q)$ and in that case $\sigma(P)\evalat{\Sigma} = 0$. Therefore, the following quantities are well defined:
    \begin{definition}
    \label{deforder}
      The order and the principal symbol of a GTO $T_P$ are respectively
      \begin{align*}
        &\, \ord(T_P) \vc  \inf\{ \ord(Q),\,T_P = T_Q \}, \\
        &\, \sigma(T_P) \vc  \sigma(Q)\evalat{\Sigma}, \, \text{for any $Q$ such that $T_Q=T_P$ and $\ord(Q)=\ord(T_P)$}.
      \end{align*}
      The order can be $-\infty$, in which case the symbol is not defined.
    \end{definition}
    As shown in \cite{BMG1981}, for any GTO $T_P$ there exists a pseudodifferential operator $Q$ such that $T_P=T_Q$ and $\commut{Q}{\Pi} = 0$. As a consequence, the GTO's form an algebra: if $P,\,Q$ are two pseudodifferential operators, there exists another pseudodifferential operator $R$ such that $T_P\,T_Q=T_R$. We have also the usual properties
    \begin{align*}
      & \ord(T_P\,T_Q) = \ord(T_P) + \ord(T_Q),\\
      & \, \sigma(T_P\,T_Q) = \sigma(T_P)\,\sigma(T_Q).
    \end{align*}
    Moreover, a GTO $T_P$ of order $m$ maps continuously holomorphic Sobolev spaces, namely
    \begin{align*}
      T_P : W_{hol}^{s+m}(\partial\Omega) \to W_{hol}^{s}(\partial\Omega), \, \text{for any $s\in\R$},
    \end{align*}
    because $\Pi$ is (or rather extends to) a continuous map from $W^s(\partial\Omega)$ onto $W^s_{hol}(\partial\Omega)$ for any real number $s$.

    A GTO is said to be elliptic if its principal symbol does not vanish. Like classical pseudodifferential operators, an elliptic GTO $T_P$ of order $m$ admits a parametrix $T_Q$ which is a GTO of order $-m$, verifying
    \begin{align*}
      \,\sigma(T_Q) =\sigma(T_P)^{-1},\quad T_P\,T_Q \sim \bbbone \,\,\, \text{ and } \,\, \,T_Q\,T_P \sim \bbbone.
    \end{align*}
    Here and below $A\sim B$ means that $A-B$ is a smoothing operator (i.e. of order $-\infty$, or equivalently having Schwartz kernel in $C^\infty(\partial\Omega\times\partial\Omega)$).

    Finally, if $T_P$ is elliptic of order $m\neq 0$, positive and selfadjoint as an operator on $H^2(\partial\Omega)$, with $\sigma(T_P)>0$, then the power $T_P^s$, $s\in\C$ (in the sense of the spectral theorem) is a GTO of order $ms$. In particular, for $s=-1$, the inverse of $T_P$ is a GTO of order $-m$ (see \cite[Proposition 16]{E2008} for the details).

    Let $\mathbf{P}$ be a differential operator on $\C^n$ of order $d \in \mathbb N$ of the form
    \begin{equation}
      \label{diffonCn}
      \mathbf P=\sum_{|\nu|\leq d} a_\nu(x) \,r(x)^j\, \partial^\nu +\sum_{|\nu'|\leq d} b_{\nu'}(x) \,r(x)^{j'}\, \overline{\partial}\,^{\nu'}
    \end{equation}
    for some $d\in\N$, $j,j'\in \R^+$, $\nu,\nu'\in \N^{n}$ and some functions $a_\nu,b_{\nu'}\in C^\infty(\overline\Omega)$.

    We can generalize the definition of $\Lambda_w$ and construct the operator
    \begin{align}
      \label{K*wPK}
      \Lambda_{w\mathbf{P}} \vc K_w^* \mathbf{P} K_w=K^*w\mathbf P K
    \end{align}
    acting on the boundary $\partial\Omega$. By Boutet de Monvel's theory \cite{BdMActa,Grubb1996,RS}, $\Lambda_{w\mathbf P}$ is again a pseudodifferential operator on the boundary and the following proposition gives a formula for its principal symbol.

    \begin{prop}
    \label{propTPpsdo}
      The operator $\Lambda_{w\mathbf P}$ is a pseudodifferential operator on the boundary $\partial \Omega$ and $T_{\Lambda_{w\mathbf P}}$ is a GTO of order $d-(m_w+1+j)$ with principal symbol
      \begin{align}
        \label{princP}
        \sigma(T_{\Lambda_{w\mathbf{P}}})(x',\xi') = \tfrac{(-1)^d\,\Gamma(m_w+1+j)}{2 \norm{\xi'}^{-d+m_w+1+j}}\,g_w(x')\, \|\eta_{x'}\|^{-d+m_w+j}  \sum_{|\nu|=d} a_\nu(x')\, \prod_{k=1}^n (\partial_k r)^{\nu_k}(x')
      \end{align}
    (when \eqref{princP} vanishes, $T_{\Lambda_{w\mathbf{P}}}$ is in fact of lower order).
    \end{prop}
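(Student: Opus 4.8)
The plan is to recognise $\Lambda_{w\mathbf P}=K^*\,w\,\mathbf P\,K$ as a composition in Boutet de Monvel's transmission calculus and then to reduce the symbol computation to the formula \eqref{princ} already known for $\Lambda_w$. Since $K$ is a Poisson operator and $K^*$ a trace operator, and $\mathbf P$ is a differential operator whose coefficients are smooth up to $\overline{\Omega}$ (so it has the transmission property), the composition of a trace operator, an interior operator and a Poisson operator is a classical pseudodifferential operator on $\partial\Omega$ \cite{BdMActa,Grubb1996,RS}. The only subtlety is that for $m_w+j\notin\N$ the factor $r^{m_w+j}$ is not smooth up to $\partial\Omega$; this is handled exactly as for $\Lambda_w$ itself (the $\Gamma$-factor in \eqref{princ} already records it), by analytic continuation in the exponent of $r$, cf.\ \cite{BM1979,E2008}. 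This proves the first assertion and, the principal symbol being local, lets us localise the rest of the argument near an arbitrary boundary point $x'\in\partial\Omega$.

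The key step is to compute how $\mathbf P$ acts on the Poisson extension to leading order, on $\Sigma$. Because $\partial_k r=\partial_{z_k}r$, near $\partial\Omega$ one may write $\partial_k=(\partial_k r)\,\partial_r+T_k$ with $\partial_r$ a normal field ($\partial_r r=1$) and $T_k$ tangential, and likewise for $\bar\partial_k$. On $\Sigma$, parametrised by $(x',s\,\eta_{x'})$, the normal part $(\partial_k r)\,\partial_r$ obviously contributes a multiple of $\partial_k r$; the point is that $T_k$ does too, because $r|_{\partial\Omega}=0$ forces $(\partial r+\bar\partial r)|_{T\partial\Omega}=0$, whence $\eta|_{T\partial\Omega}=i\,\partial r|_{T\partial\Omega}$, so that pairing $T_k$ against the contact codirection $\eta_{x'}$ again returns $\partial_k r$. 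Combining these with the rate at which the Poisson kernel of $\Delta$ decays into $\Omega$ (fixed by the ellipticity of $\Delta$), one finds that on $\Sigma$, to leading order, $\partial_k$ acts on the range of $K$ as multiplication by $-s\,(\partial_k r)(x')$, and $\bar\partial_k$ by $-s\,\overline{\partial_k r}(x')$. Hence the leading symbol of $\partial^\nu K$ on $\Sigma$ is $(-s)^{|\nu|}\prod_{k=1}^n(\partial_k r)^{\nu_k}$ times that of $K$ — isotropic in $\xi'$, as \eqref{princP} demands — and only the $|\nu|=d$ terms survive at top order, with the $a_\nu$ frozen on the boundary.

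Granting this, the composition rules for generalized Toeplitz operators give, on $\Sigma\ni(x',s\,\eta_{x'})$,
\begin{align*}
  \sigma\big(T_{\Lambda_{w\mathbf P}}\big)(x',s\,\eta_{x'})=(-s)^{d}\Big(\sum_{|\nu|=d}a_\nu(x')\prod_{k=1}^n(\partial_k r)^{\nu_k}(x')\Big)\,\sigma\big(T_{\Lambda_{w r^{j}}}\big)(x',s\,\eta_{x'}),
\end{align*}
since, modulo contributions of lower order to $K^*w\mathbf P K$, one may replace $w\mathbf P$ by $w r^{j}\big(\sum_{|\nu|=d}a_\nu\prod_{k=1}^n(\partial_k r)^{\nu_k}\big)\partial_r^{\,d}$. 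As $w r^{j}=r^{m_w+j}g_w$, the symbol $\sigma(T_{\Lambda_{w r^{j}}})$ is given by \eqref{princ} with $m_w$ replaced by $m_w+j$; substituting and re-expressing $s=\norm{\xi'}/\norm{\eta_{x'}}$ yields \eqref{princP}, and the order $d-(m_w+1+j)$ is the degree of homogeneity of the resulting symbol (the parenthetical remark follows because a vanishing principal symbol forces a drop in order). The antiholomorphic sum $\sum b_{\nu'}r^{j'}\bar\partial^{\,\nu'}$ is handled in the same way, contributing a term of order $d-(m_w+1+j')$ built from $\overline{\partial_k r}$; \eqref{princP} records the holomorphic part, which dominates when $j\le j'$ (for $j=j'$ one adds the two). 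The genuinely delicate points are the claim that $T_k$ re-assembles into a multiple of $\partial_k r$ on $\Sigma$, and carrying all the normalisation constants — $\norm{dz_j}=\sqrt2$, $\norm{\partial r}=\sqrt2\,\norm{\eta}$, $\Delta=\partial\bar\partial$ versus the real Laplacian, the boundary volume-element Jacobian, and the $\Gamma$-factor from $\int_0^\infty t^{m_w+j}e^{-t}\,dt$ — through the half-space model (as in \cite{BM1979}) so as to land precisely on the constant in \eqref{princ}.
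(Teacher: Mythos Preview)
Your proposal is correct and follows the same overall strategy as the paper --- reduce to the known formula \eqref{princ} by peeling off the derivative part of $\mathbf P$ --- but the execution of that peeling is different. The paper exploits a clean \emph{operator} identity rather than a symbolic decomposition: since $Ku$ is holomorphic for $u\in W^s_{hol}(\partial\Omega)$, so is $\partial^\nu Ku$, hence harmonic, and therefore $\partial^\nu K = K\,Z^\nu$ exactly on holomorphic Sobolev spaces, where $Z_k\vc\gamma\,\partial_k K$ is a genuine first-order differential operator on $\partial\Omega$. This immediately gives $K^*w\mathbf P K=\sum_\nu\Lambda_{a_\nu r^j w}\,Z^\nu+\sum_{\nu'}\Lambda_{b_{\nu'}r^{j'}w}\,\overline Z^{\nu'}$ as an identity of boundary pseudodifferential operators, and the symbol of $Z_k$ is read off as $\sigma(Z_k)(x',\xi')=i\langle\xi',Z_k\rangle$, which on $\Sigma$ is $-\tfrac{\|\xi'\|}{\|\eta_{x'}\|}\partial_k r$. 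Your normal--tangential splitting $\partial_k=(\partial_k r)\partial_r+T_k$ and the argument that $T_k$ also pairs with $\eta_{x'}$ to give $\partial_k r$ arrive at the same $-s\,\partial_k r$, but at the cost of tracking the Poisson kernel's normal decay and the half-space model constants explicitly; the paper's route sidesteps all of that because the factorisation through $Z^\nu$ is exact and the only analytic input needed is \eqref{princ} itself. Your approach is more microlocal and would generalise to operators not preserving holomorphy; the paper's is shorter here precisely because holomorphy lets one stay at the operator level.
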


    \begin{proof}
      For $k\in\set{1,\cdots,n}$, define the tangential operators $Z_k$ and $\overline{Z}_k$ on $\partial\Omega$ by
      \begin{align}
        \label{Z_k}
        Z_k \vc \gamma\,\partial_{k} \,K, \quad \overline{Z}_k \vc \gamma\,\overline{\partial}_{k} \,K,
      \end{align}
      As $K^*w\mathbf P K=\sum_\nu K^* a_\nu r^j w \partial^\nu K + \sum_{\nu'} K^* b_{\nu'} r^{j'} w \overline{\partial}^{\nu'} K = \sum_\nu\Lambda_{a_\nu r^j w} Z^\nu + \sum_{\nu'}\Lambda_{b_{\nu'} r^{j'} w} \overline{Z}^{\nu'}$ with $Z=\gamma\partial K$ and the same for $\overline{Z}$, we see from \eqref{princ} that indeed $K^*w\mathbf P K$ is a pseudodifferential operator on $\partial\Omega$ of order $d-(m_w+j+1)$ (or less if there are some cancellations in the summation on $\nu$). Since $T_{\Lambda_{w\mathbf{P}}} = \Pi \, \Lambda_{w\mathbf{P}} \, \Pi$, we have
      \begin{align*}
        \sigma(T_{\Lambda_{w\mathbf{P}}})(x',\xi') = \sigma(\Lambda_{w\mathbf{P}}\evalat{H^2})(x',\tfrac{\norm{\xi'}}{\norm{\eta_{x'}}}\,\eta_{x'})=\sigma(\sum_\nu\Lambda_{a_\nu r^j w} Z^\nu)(x',\tfrac{\norm{\xi'}}{\norm{\eta_{x'}}}\,\eta_{x'})
      \end{align*}
      By a direct computation, $\sigma(Z_k)(x',\xi')= i \scalp{\xi'}{Z_k} =-\tfrac{\norm{\xi'}}{\norm{\eta_{x'}}} \partial_k r$ (see also \cite[p. 1440]{E2008}) and from \eqref{princ}, we have
      \begin{align}
        \label{ttmmpp}
        \sigma(T_{\Lambda_{w\mathbf{P}}})(x',\xi') = \tfrac{(-1)^d\,\Gamma(m_w+1+j)}{2 \norm{\xi'}^{m_w+1+j}}\,g_w(x')\,\|\eta_{x'}\|^{m_w+j} \,\sum_{\abs{\nu}=d}\,a_\nu(x') \,\prod_k\sigma(Z_k)^{\nu_k}(x',\xi'),
        \tag*{\qed}
      \end{align}
      \hideqed
      so the result follows.
    \end{proof}

    We remark that equation \eqref{princP} is still valid when $m_w \in \C$ with $\text{Re}(m_w)>-1$. Also, when $m_w\in \N$, using $\norm{\partial r}  =\sqrt{2} \norm{\eta_{x'}}$, the right-hand side can be written as
    \begin{align*}
      \tfrac{(-1)^d\,\Gamma(m_w+1+j)}{2^{-d+m_w/2+j+1}\Gamma(m_w+1)^{-d+j}\,\norm{\xi'}^{-d+m_w+1+j}} \, \big(\partial_{\mathbf{n}}^{(m_w)}w \big)^{-d+j}(x')\, \, \sum_{|\nu|=d} a_\nu(x')\, \prod_{k=1}^n (\partial_k r)^{\nu_k}(x').
    \end{align*}

  \subsection{\texorpdfstring{Links between the spaces on $\Omega$ and $\partial \Omega$}{}}

    The operator $T_{\Lambda_w}$ exists as a positive, elliptic and compact GTO of order $-(m_w+1)$ on $L^2(\partial\Omega)$ and maps continuously $W_{hol}^s(\partial\Omega)$ into $W_{hol}^{s+m_w +1}(\partial\Omega)$, for any $s\in\R$. \\
    Let $u\in \Ker(T_{\Lambda_w}) \subset W^s_{hol}(\partial\Omega)$ for a certain $s\in\R$, then
    \begin{align*}
      0 = \scalp{T_{\Lambda_w}u}{u}_{W^s_{hol}(\partial\Omega)} = \scalp{\Pi\Lambda_w u}{ u}_{W^s_{hol}(\partial\Omega)} = \scalp{\Lambda_w u}{\Pi u}_{W^s_{hol}(\partial\Omega)}.
    \end{align*}
    Since $\Pi u = u$ we get, using the injectivity of $\Lambda_w$
    \begin{align*}
      0 = \scalp{\Lambda_w u}{u}_{W^s_{hol}(\partial\Omega)} = \Vert\Lambda_w^{1/2} u \Vert^2 \Rightarrow u = 0.
    \end{align*}
    Thus, for any $s \in \R$, the inverse operator $T_{\Lambda_w}^{-1}$ exists from $\text{Ran}(T_{\Lambda_w}) = W_{hol}^{s+m_w+1}(\partial\Omega)$ onto $W_{hol}^{s}(\partial\Omega)$.

    For completeness, we give the proof of the following result from \cite[Theorem 4]{E2010}.

    \begin{prop}
      \label{WT-W}
      Let $T$ be a positive selfadjoint operator on $H^2(\partial\Omega)$ such that $T\sim T_P$, where $P$ is an elliptic pseudodifferential operator of order $s\in\R$ such that $\sigma(T_P)>0$. \\
      Let $W^T_{hol}(\partial\Omega)$ be the completion of $C^{\infty}_{hol}(\partial\Omega)$ with respect to the norm
      $$
        \norm{u}_T^2 \vc \scalp{Tu}{u}_{H^2}.
      $$
      Then, we have
      \begin{align*}
        W^{T}_{hol}(\partial\Omega) = W^{\ord(T_P)/2}_{hol}(\partial\Omega).
      \end{align*}
    \end{prop}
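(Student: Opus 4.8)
The plan is to show the two norms $\norm{\cdot}_T$ and $\norm{\cdot}_{W^{s/2}(\partial\Omega)}$ are equivalent on $C^\infty_{hol}(\partial\Omega)$, where $s=\ord(T_P)$; since $W^T_{hol}$ and $W^{s/2}_{hol}$ are by definition the completions of this same space under these two norms, equivalence of norms gives equality of the completions (as topological vector spaces, with the same underlying set of limit points inside, say, the space of distributions on $\partial\Omega$). So everything reduces to a two-sided estimate $c\norm{u}_{W^{s/2}}^2 \le \scalp{Tu}{u}_{H^2} \le C\norm{u}_{W^{s/2}}^2$ for all $u\in C^\infty_{hol}(\partial\Omega)$.

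First I would replace $T$ by $T_P$: since $T\sim T_P$, the difference $T-T_P$ is smoothing, hence extends to a bounded operator $W^{-N}(\partial\Omega)\to W^{N}(\partial\Omega)$ for every $N$, so $|\scalp{(T-T_P)u}{u}_{H^2}|\le C_N\norm{u}_{W^{-N}}^2$, which is dominated by $\norm{u}_{W^{s/2}}^2$ for $N$ large; this shows $\scalp{Tu}{u}$ and $\scalp{T_Pu}{u}$ differ by a lower-order term and it suffices to estimate $\scalp{T_Pu}{u}_{H^2}=\scalp{\Pi P\Pi u}{u}$. Next, the key point: $T_P$ is an elliptic, positive, selfadjoint GTO of order $s$ with $\sigma(T_P)>0$, so by the last remark before the proposition its square root $T_P^{1/2}$ (spectral-theoretic) is a GTO of order $s/2$, again elliptic with positive symbol, and $\scalp{T_Pu}{u}_{H^2}=\norm{T_P^{1/2}u}_{H^2}^2$. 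Then I would invoke the mapping property of GTO's: an elliptic GTO of order $s/2$ maps $W^{s/2}_{hol}(\partial\Omega)$ continuously into $W^0_{hol}(\partial\Omega)=H^2$, giving the upper bound $\norm{T_P^{1/2}u}_{H^2}\le C\norm{u}_{W^{s/2}}$; and its parametrix $T_Q$, a GTO of order $-s/2$ with $T_Q T_P^{1/2}\sim\bbbone$, gives $\norm{u}_{W^{s/2}}=\norm{\Pi u}_{W^{s/2}}\le \norm{T_Q T_P^{1/2}u}_{W^{s/2}} + \norm{(\bbbone - T_Q T_P^{1/2})u}_{W^{s/2}} \le C\norm{T_P^{1/2}u}_{H^2} + C_N\norm{u}_{W^{-N}}$, where the last smoothing term is again absorbed. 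Combining the two directions and absorbing the lower-order remainders (using, in the lower bound, that a finite-codimension/compact perturbation argument or simply that the remainder is $o$ of the main term on the relevant scale) yields the norm equivalence.

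The main obstacle I expect is the careful handling of the smoothing and lower-order remainder terms in the \emph{lower} bound: one needs $\norm{u}_{W^{s/2}}^2 \le C\scalp{T_Pu}{u}_{H^2} + (\text{lower order in }u)$ and then must genuinely absorb that lower-order piece, which is not automatic on an infinite-dimensional space — it requires either that the remainder operator be relatively compact with respect to $T_P$ (so that it cannot destroy the equivalence except possibly on a finite-dimensional space, which here is excluded by injectivity of $T_P$ on the relevant holomorphic Sobolev spaces, already checked in the excerpt), or a direct argument that $T_P>0$ as a positive operator is bounded below on the scale in question. I would resolve this by noting $T_P$ (being elliptic positive of order $s$) is, up to smoothing, bounded below by $\eps\,\Lambda^{s}$ for a fixed positive elliptic GTO $\Lambda$ of order $1$, and then the scale of spaces $W^{s/2}_{hol}$ is exactly the form-domain scale of such $\Lambda^s$, so positivity plus ellipticity forces the coercive estimate with the remainder genuinely lower-order; the rest is routine.
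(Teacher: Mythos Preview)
Your approach is essentially the paper's: reduce to the square root, recognize it as an elliptic GTO of order $s/2$, and deduce norm equivalence with $W^{s/2}_{hol}$. The paper streamlines exactly the step you flag as the obstacle: instead of writing the parametrix estimate $\norm{u}_{W^{s/2}}\le C\norm{T_P^{1/2}u}_{H^2}+C_N\norm{u}_{W^{-N}}$ and then worrying about absorbing the remainder, it works directly with $T^{1/2}$ (well defined since $T>0$), observes that $T^{1/2}$ differs from an elliptic GTO of order $s/2$ by a smoothing operator and is therefore Fredholm between $W^{s/2}_{hol}$ and $H^2$, and then notes that positive (hence injective) plus Fredholm forces it to be an isomorphism outright. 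That single ``injective Fredholm $\Rightarrow$ isomorphism'' line is precisely the clean packaging of the compact-perturbation / absorption argument you sketch, so your resolution is correct but can be said in one stroke.
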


    \begin{proof}
      We may assume that $P$ commutes with $\Pi$. The equivalence between $T$ and $ T_P$ induces $T^{1/2}\sim T_P^{1/2}\sim T_{P^{1/2}} = \Pi\, P^{1/2}\evalat{H^2} = P^{1/2}\evalat{H^2}$. Since $P^{1/2}$ is elliptic, the GTO $T_{P^{1/2}}$ admits a parametrix, so is Fredholm. As a consequence $T^{1/2}$ is a positive, so an injective Fredholm operator, thus an isomorphism from $W^{\ord(T)/2}(\partial\Omega)$ onto $H^2(\partial\Omega)$. Now if $u\in W^{\ord(T)/2}_{hol}(\partial\Omega)$, $T^{1/2}u$ belongs to $H^2(\partial\Omega)$. So we have the finite quantities
      \begin{align*}
        \Vert T^{1/2}u\Vert_{H^2}^2 = \scalp{Tu}{u}_{H^2} = \norm{u}_{W^{T}_{hol}}^2,
      \end{align*}
      which proves the equality $W^{T}_{hol}(\partial\Omega)= W^{\ord(T)/2}_{hol}(\partial\Omega)$.
    \end{proof}

    \begin{prop}
    \label{bijK}
      The operator $\poissonw$ maps bijectively the space $W_{hol}^{T_{\Lambda_w}}(\partial\Omega)$ onto $A^2_w(\Omega)$.
    \end{prop}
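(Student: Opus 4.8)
The plan is to prove the slightly stronger statement that $\poissonw$ restricts to an \emph{isometric isomorphism} of $W^{T_{\Lambda_w}}_{hol}(\partial\Omega)$ onto $A^2_w(\Omega)$. The starting observation is an isometry on smooth data: for $u\in C^\infty_{hol}(\partial\Omega)$ the function $\poissonw u=\poisson u$ is holomorphic and smooth on $\overline\Omega$, hence lies in $A^2_w(\Omega)$, and, using \eqref{Lambdaw} together with $\Pi u=u$,
\begin{align*}
 \norm{\poissonw u}_{L^2(\Omega,w)}^2=\scalp{\poissonw^*\poissonw u}{u}=\scalp{\Lambda_w u}{u}=\scalp{\Pi\Lambda_w\Pi u}{u}=\scalp{T_{\Lambda_w}u}{u}_{H^2}=\norm{u}_{T_{\Lambda_w}}^2 .
\end{align*}

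The next step is to extend this by continuity. Proposition \ref{WT-W}, applied with $T=T_P=T_{\Lambda_w}$ (legitimate since $\Lambda_w$ is elliptic and selfadjoint of order $-(m_w+1)$ with $\sigma(T_{\Lambda_w})>0$ by \eqref{princ}), identifies the completion $W^{T_{\Lambda_w}}_{hol}(\partial\Omega)$ with $W^{-(m_w+1)/2}_{hol}(\partial\Omega)$: thus $C^\infty_{hol}(\partial\Omega)$ is dense in it, $\norm{\cdot}_{T_{\Lambda_w}}$ is there equivalent to the $W^{-(m_w+1)/2}(\partial\Omega)$-norm, and the space consists of those $u\in W^{-(m_w+1)/2}(\partial\Omega)$ with $\poisson u$ holomorphic. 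Since $\poisson$ is continuous on $W^{-(m_w+1)/2}(\partial\Omega)$, approximating a general $u$ by $u_n\in C^\infty_{hol}(\partial\Omega)$ in $\norm{\cdot}_{T_{\Lambda_w}}$ shows that $(\poissonw u_n)$ is Cauchy in $L^2(\Omega,w)$, that its limit — holomorphic, hence in the closed subspace $A^2_w(\Omega)$ — equals $\poissonw u$, and that $\norm{\poissonw u}_{L^2(\Omega,w)}=\norm{u}_{T_{\Lambda_w}}$. Hence $\poissonw$ maps $W^{T_{\Lambda_w}}_{hol}(\partial\Omega)$ isometrically, in particular injectively and with closed range, into $A^2_w(\Omega)$.

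It remains to prove surjectivity, and this is where I expect the main difficulty. Since holomorphic functions are harmonic ($\Delta=\pa\dbar$), $A^2_w(\Omega)\subset L^2_{harm}(\Omega,w)$, so for $f\in A^2_w(\Omega)$ the second identity of \eqref{KLambdaK*} gives $f=\mathbf{\Pi}_{w,harm}f=\poissonw(\traceOpw f)$, so that $u\vc\traceOpw f$ is a preimage and $\poisson u=f$ is holomorphic. The delicate point is to place $u$ in $W^{-(m_w+1)/2}(\partial\Omega)$: the naive mapping bounds for $\traceOpw=\Lambda_w^{-1}\poissonw^*$ on the unweighted harmonic Sobolev scale are off by the weight, so instead one uses the polar decomposition $\poissonw=U_w\Lambda_w^{1/2}$ to write $\traceOpw f=\Lambda_w^{-1}\poissonw^* f=\Lambda_w^{-1/2}(U_w^*f)$ with $U_w^*f\in L^2(\partial\Omega)$; as $\Lambda_w^{-1/2}$ is a classical pseudodifferential operator of order exactly $(m_w+1)/2$, this yields $u\in W^{-(m_w+1)/2}(\partial\Omega)$, hence $u\in W^{-(m_w+1)/2}_{hol}(\partial\Omega)=W^{T_{\Lambda_w}}_{hol}(\partial\Omega)$ with $\poissonw u=f$. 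A subtlety to check along the way is that $\poissonw\Lambda_w^{-1/2}$, a priori defined only on $\operatorname{Ran}(\Lambda_w^{1/2})$, still coincides with the bounded operator $U_w$ on all of $L^2(\partial\Omega)$, which follows from the density of $\operatorname{Ran}(\Lambda_w^{1/2})$ and the continuity of $\poisson$ and $\Lambda_w^{-1/2}$ on the Sobolev scale. An alternative, less computational route to surjectivity is to note that the closed range of $\poissonw$ already contains $\poissonw(C^\infty_{hol}(\partial\Omega))=C^\infty_{hol}(\overline\Omega)$, which is dense in $A^2_w(\Omega)$ for strictly pseudoconvex $\Omega$ — though establishing that density for weighted Bergman spaces is itself not entirely routine.
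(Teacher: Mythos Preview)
Your proof is correct and, for the isometry half, essentially identical to the paper's. The difference lies in the surjectivity argument. The paper takes precisely your ``alternative, less computational route'': it shows that the closed isometric range of $K_w$ contains $C^\infty_{hol}(\overline\Omega)$ and then argues that $C^\infty_{hol}(\overline\Omega)$ is dense in $A^2_w(\Omega)$. For that density it does not appeal to strict pseudoconvexity directly but instead invokes the explicit formula $\mathbf\Pi_w = K_w\,\Pi\,T_{\Lambda_w}^{-1}\,\Pi\,K_w^*$ (recorded as \eqref{mathbf{Pi}_w}), from which one reads off that $\mathbf\Pi_w$ preserves each $W^s_{hol}(\Omega)$ and hence $C^\infty(\overline\Omega)$; applying $\mathbf\Pi_w$ to a dense set in $L^2(\Omega,w)$ gives the claim. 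Your primary route via the polar decomposition $K_w=U_w\Lambda_w^{1/2}$, writing $\traceOpw f=\Lambda_w^{-1/2}U_w^*f$ and using the Sobolev mapping of $\Lambda_w^{-1/2}$, is a perfectly valid alternative that sidesteps the density question entirely. The trade-off is that the paper's argument simultaneously establishes \eqref{mathbf{Pi}_w}, which is reused later (e.g.\ in the proof of \eqref{V_w} and \propref{gammaTK}), whereas your polar-decomposition argument is self-contained but does not produce that formula as a by-product.
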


    \begin{proof}
      Let $f\in C^\infty_{hol}(\overline\Omega)\subset A^2_w(\Omega)$ and $u=\traceOpw f\in C^\infty_{hol}(\overline\Omega)$. Then
      \begin{align*}
        \norm{f}_{A^2_w}^2 = & \, \scalp{\poissonw u}{ \poissonw u}_{L^2(\Omega)} = \scalp{\Lambda_w u }{u}_{L^2(\partial\Omega)} = \scalp{\Pi\Lambda_w u }{u}_{L^2(\partial\Omega)} = \scalp{T_{\Lambda_w} u }{u}_{L^2(\partial\Omega)}.
      \end{align*}
      Thus $K_w$ is an isometry of $W_{hol}^{T_{\Lambda_w}}(\partial\Omega)$ onto the completion of $C^\infty_{hol}(\overline\Omega)$ in $A^2_w$. \\
      In a manner completely similar to \eqref{KLambdaK*} (see \cite{EZ2010} for details), we get
      \begin{align}
        \label{mathbf{Pi}_w}
        \mathbf{\Pi}_w = K_w\,\Pi\, T_{\Lambda_w}^{-1}\, \Pi\, K^*_w.
      \end{align}
      Since  $C^\infty_{hol}(\overline\Omega)$ is dense in $A^2_w$ (just note that $C^\infty(\overline\Omega)$ is dense in $L^2(\Omega,w)$, while the weighted Bergman projection $\mathbf{\Pi}_w$ maps each $W^s_{hol}(\Omega)$, and, hence, $C^\infty(\overline\Omega)$ into itself), the claim follows.
    \end{proof}

    From the fact that $K_w$ is an isomorphism of $W^s_{hol}(\partial\Omega)$ onto $W^{s+1/2}_{hol}(\Omega)$ $\forall s\in\R$, we also see that $A^2_w(\Omega)=K_w W^{-(m_w+1)/2}(\partial\Omega)=W^{-m_w/2}_{hol}(\Omega)$ and $\traceOpw$ is an isomorphism of $A^2_w$ onto $W^{-(m_w+1)/2}(\partial\Omega)$.

    As we already said, $T_{\Lambda_w}^{1/2}$ is an isomorphism of $W^s_{hol}(\partial\Omega)$ onto $W^{s+\frac{m_w+1}2}(\partial\Omega)$ for all $s\in\R$, with equivalent norms. As a consequence,
    \begin{lemma}
      The operator
      \begin{align}
        \label{V_w}
        V_w\vc \poissonw \, T_{\Lambda_w}^{-1/2}
        \text{ is a unitary which maps }H^2(\partial\Omega) \text{ onto }\bergonw{w}{\Omega}.
      \end{align}
    \end{lemma}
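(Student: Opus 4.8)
The plan is to exhibit $V_w=\poissonw T_{\Lambda_w}^{-1/2}$ as a composition of two isometric bijections that have (essentially) already been constructed above, so that unitarity follows with no new computation. Concretely, I would factor $V_w$ through the space $W^{T_{\Lambda_w}}_{hol}(\partial\Omega)$ of \propref{WT-W} and \propref{bijK}.

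First I would verify that $V_w$ is a well-defined bijection $H^2(\partial\Omega)\to A^2_w(\Omega)$. Since $T_{\Lambda_w}$ is an elliptic GTO of order $-(m_w+1)$ which is positive and selfadjoint on $H^2(\partial\Omega)$ with $\sigma(T_{\Lambda_w})>0$, the fractional power $T_{\Lambda_w}^{-1/2}$ is a GTO of order $(m_w+1)/2$, hence an isomorphism of $W^s_{hol}(\partial\Omega)$ onto $W^{s-(m_w+1)/2}_{hol}(\partial\Omega)$ for every $s\in\R$; in particular it maps $H^2(\partial\Omega)=W^0_{hol}(\partial\Omega)$ isomorphically onto $W^{-(m_w+1)/2}_{hol}(\partial\Omega)$. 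By \propref{WT-W} applied to $T=T_{\Lambda_w}$ (so $\ord(T_{\Lambda_w})=-(m_w+1)$), this target coincides, with equivalent norms, with $W^{T_{\Lambda_w}}_{hol}(\partial\Omega)$. Composing with the continuous extension of $\poissonw$, which by \propref{bijK} carries $W^{T_{\Lambda_w}}_{hol}(\partial\Omega)$ bijectively onto $A^2_w(\Omega)$, shows $V_w$ is a bijection of $H^2(\partial\Omega)$ onto $A^2_w(\Omega)$.

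Second I would check that $V_w$ preserves norms. It suffices to do this for $u$ in the dense subspace $C^\infty_{hol}(\partial\Omega)$, where $v:=T_{\Lambda_w}^{-1/2}u\in C^\infty_{hol}(\partial\Omega)$ and all pairings below are honest finite integrals; the general case follows by density and continuity of $V_w$. Using $\poissonw^*\poissonw=\Lambda_w$, the identity $\Pi v=v$ (as $v$ is holomorphic), $T_{\Lambda_w}=\Pi\Lambda_w\Pi$, and the selfadjointness of $T_{\Lambda_w}^{\pm 1/2}$ on $H^2(\partial\Omega)$,
\begin{align*}
  \norm{V_w u}_{A^2_w}^2 &= \scalp{\poissonw^*\poissonw v}{v}_{L^2(\partial\Omega)} = \scalp{\Lambda_w v}{\Pi v}_{L^2(\partial\Omega)} = \scalp{T_{\Lambda_w} v}{v}_{L^2(\partial\Omega)} \\
  &= \scalp{T_{\Lambda_w}^{1/2}u}{T_{\Lambda_w}^{-1/2}u}_{H^2} = \norm{u}_{H^2}^2 .
\end{align*}
(Equivalently, this displays $T_{\Lambda_w}^{-1/2}$ as an isometry from $H^2(\partial\Omega)$ onto $\big(W^{T_{\Lambda_w}}_{hol}(\partial\Omega),\norm{\cdot}_{T_{\Lambda_w}}\big)$, which by \propref{bijK} is in turn mapped isometrically by $\poissonw$ onto $A^2_w(\Omega)$.) A bijective isometry between Hilbert spaces being unitary, the lemma follows.

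The only point that genuinely needs care — hence the main obstacle — is the domain bookkeeping: $T_{\Lambda_w}^{-1/2}$ and $\Lambda_w$ are unbounded on $L^2(\partial\Omega)$, and $v$ lies a priori in a negative-order Sobolev space, so the manipulations in the chain of equalities must first be carried out on smooth holomorphic functions (where every intermediate object is smooth and the pairings are genuine integrals) and only then extended to all of $H^2(\partial\Omega)$ by density. This is exactly the type of argument already made in the proofs of \propref{WT-W} and \propref{bijK}, so it can be invoked rather than redone.
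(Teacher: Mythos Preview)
Your proof is correct and essentially the same as the paper's: your norm computation $\norm{V_w u}_{A^2_w}^2=\scalp{T_{\Lambda_w}v}{v}=\norm{u}_{H^2}^2$ is exactly the identity $V_w^*V_w=\bbbone_{H^2}$ that the paper writes down directly, and your appeal to \propref{bijK} for surjectivity plays the role of the paper's one-line computation $V_wV_w^*=\bbbone_{A^2_w}$ via \eqref{mathbf{Pi}_w} (on which \propref{bijK} itself rests). The only difference is packaging --- you separate ``bijection $+$ isometry'' where the paper checks both $V_w^*V_w$ and $V_wV_w^*$ in two lines --- so your route is slightly longer but equally valid.
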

    \begin{proof}
      We have $V_w^*V_w =  T_{\Lambda_w}^{-1/2} \poissonw ^*\poissonw \, T_{\Lambda_w}^{-1/2} = T_{\Lambda_w}^{-1/2} \, T_{\Lambda_w} \, T_{\Lambda_w}^{-1/2} = \bbbone_{H^2}$. \\Similarly, $V_wV_w^*=\bbbone_{\bergonw{w}{\Omega}}$, see \eqref{mathbf{Pi}_w}.
    \end{proof}

    Now we identify a Toeplitz operator $\mathbf{T}_f$ on $\bergonw{w}{\Omega}$ with generalized Toeplitz operators acting on $\hardyon{\partial\Omega}$ via $\gamma_w$ and $K_w$ or $V_w$ and $V^*_w$:

    \begin{prop}
      \label{gammaTK}
      For $f\in\Coo(\overline{\Omega})$, we have
      \begin{equation}
        \label{Tf=VTTTV*}
        \begin{aligned}
          & \traceOp_w \, \mathbf{T}_f  \, \poissonw  = T_{\Lambda_w}^{-1}\,T_{\Lambda_{wf}} \quad\text{on } W_{hol}^{-(m_w+1)/2}(\partial\Omega),\\
          & \mathbf{T}_f  = V_w \, T_{\Lambda_w}^{-1/2}\,T_{\Lambda_{wf}} \, T_{\Lambda_w}^{-1/2}\,  V_w^* \quad\text{on } \bergonw{w}{\Omega}.
        \end{aligned}
      \end{equation}
    \end{prop}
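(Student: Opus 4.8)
The plan is to establish the first identity of \eqref{Tf=VTTTV*} by a direct computation on a dense subspace, relying on the factorization $\mathbf{\Pi}_w=\poissonw\,\Pi\,T_{\Lambda_w}^{-1}\,\Pi\,\poissonw^{*}$ of the weighted Bergman projection recorded in \eqref{mathbf{Pi}_w}, and then to deduce the second identity from the first by a purely formal manipulation exploiting that $V_w$ is unitary and that $\traceOpw$ inverts $\poissonw$ on harmonic functions.

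First I would fix $u\in C^\infty_{hol}(\partial\Omega)$; recall from \propref{WT-W} and \propref{bijK} that $W_{hol}^{-(m_w+1)/2}(\partial\Omega)=W_{hol}^{T_{\Lambda_w}}(\partial\Omega)$ and that $\poissonw$ maps this space bijectively onto $A^2_w(\Omega)$. Since $f\in C^\infty(\overline{\Omega})$ is bounded, $\mathbf{M}_f\poissonw u\in L^2(\Omega,w)$, and by \eqref{K*wPK} with $\mathbf{P}=\mathbf{M}_f$ (so $d=j=0$) one has
\[
\poissonw^{*}\,\mathbf{M}_f\,\poissonw\,u=\poisson^{*}\,wf\,\poisson\,u=\Lambda_{wf}\,u .
\]
Substituting \eqref{mathbf{Pi}_w} into $\mathbf{T}_f=\mathbf{\Pi}_w\mathbf{M}_f$ and using $\traceOpw\poissonw=\bbbone$ gives
\[
\traceOpw\,\mathbf{T}_f\,\poissonw\,u=\Pi\,T_{\Lambda_w}^{-1}\,\Pi\,\poissonw^{*}\,\mathbf{M}_f\,\poissonw\,u=\Pi\,T_{\Lambda_w}^{-1}\,\Pi\,\Lambda_{wf}\,u .
\]
Because $\Pi u=u$, the inner factor $\Pi\Lambda_{wf}\Pi u$ equals $T_{\Lambda_{wf}}u$ (the form $T_P=\Pi P\Pi$ of \defref{defGTO}); and since GTO's preserve holomorphic Sobolev spaces, $T_{\Lambda_w}^{-1}T_{\Lambda_{wf}}u\in W_{hol}^{-(m_w+1)/2}(\partial\Omega)$, so the outer $\Pi$ acts as the identity. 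Hence $\traceOpw\,\mathbf{T}_f\,\poissonw\,u=T_{\Lambda_w}^{-1}T_{\Lambda_{wf}}u$, and as all operators involved are continuous between the relevant holomorphic Sobolev spaces and $C^\infty_{hol}(\partial\Omega)$ is dense in $W_{hol}^{-(m_w+1)/2}(\partial\Omega)$, the first identity follows on that space.

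For the second identity I would use that $V_w=\poissonw T_{\Lambda_w}^{-1/2}$ is unitary from $H^2(\partial\Omega)$ onto $A^2_w(\Omega)$ (see \eqref{V_w}), and that the restriction of $\poissonw$ to $W_{hol}^{-(m_w+1)/2}(\partial\Omega)$ is a bijection onto $A^2_w(\Omega)$ with inverse $\traceOpw$ — indeed $\traceOpw\poissonw=\bbbone$ and, by \eqref{KLambdaK*}, $\poissonw\traceOpw=\mathbf{\Pi}_{w,harm}$ restricts to the identity on $L^2_{harm}(\Omega,w)\supset A^2_w(\Omega)$. Therefore $V_w^{*}=V_w^{-1}=T_{\Lambda_w}^{1/2}\traceOpw$ on $A^2_w(\Omega)$. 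Since $\mathbf{T}_f$ maps $A^2_w(\Omega)$ into itself, $\mathbf{T}_f=V_wV_w^{*}\,\mathbf{T}_f\,V_wV_w^{*}$, while the first identity gives
\[
V_w^{*}\,\mathbf{T}_f\,V_w=T_{\Lambda_w}^{1/2}\big(\traceOpw\,\mathbf{T}_f\,\poissonw\big)T_{\Lambda_w}^{-1/2}=T_{\Lambda_w}^{1/2}\,T_{\Lambda_w}^{-1}T_{\Lambda_{wf}}\,T_{\Lambda_w}^{-1/2}=T_{\Lambda_w}^{-1/2}\,T_{\Lambda_{wf}}\,T_{\Lambda_w}^{-1/2},
\]
and conjugating back by $V_w$ yields the second identity.

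There is no deep obstacle here; the only thing requiring care is the bookkeeping with function spaces. One must verify that each of $\poissonw$, $\traceOpw$, $T_{\Lambda_w}^{-1}$, $T_{\Lambda_{wf}}$ and $T_{\Lambda_w}^{\pm1/2}$ is applied to an argument lying in the holomorphic Sobolev space where it is defined, so that the successive occurrences of $\Pi$ genuinely telescope (each being the identity on holomorphic Sobolev spaces, which GTO's preserve), and that the first identity is asserted only on $W_{hol}^{-(m_w+1)/2}(\partial\Omega)$ — exactly the space on which both $\traceOpw\mathbf{T}_f\poissonw$ and $T_{\Lambda_w}^{-1}T_{\Lambda_{wf}}$ are defined and on which density of $C^\infty_{hol}(\partial\Omega)$ permits passing from the smooth to the Sobolev setting. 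The elementary identity $\poissonw^{*}\mathbf{M}_f\poissonw=\Lambda_{wf}$, which is just \eqref{K*wPK} for $\mathbf{P}=\mathbf{M}_f$, should also be double-checked.
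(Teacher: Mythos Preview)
Your proof is correct and follows essentially the same line as the paper's. The only tactical difference is that the paper establishes $\mathbf{T}_f\,\poissonw=\poissonw\,T_{\Lambda_w}^{-1}T_{\Lambda_{wf}}$ via an inner-product computation (moving $\mathbf{\Pi}_w$ to the second slot where it acts as the identity on $\poissonw v$, then recognizing $\scalp{\Lambda_{wf}u}{v}=\scalp{\poissonw T_{\Lambda_w}^{-1}T_{\Lambda_{wf}}u}{\poissonw v}$), whereas you substitute the explicit factorization \eqref{mathbf{Pi}_w} for $\mathbf{\Pi}_w$ directly; both routes rest on the same identity $\poissonw^{*}\mathbf{M}_f\poissonw=\Lambda_{wf}$ and the same bookkeeping with $\traceOpw\poissonw=\bbbone$ and the unitarity of $V_w$.
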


    \begin{proof}
      For any $u$ an $v$ in $W_{hol}^{-(m_w+1)/2}(\partial\Omega)$, we get
      \begin{align*}
        \scalp{\mathbf{T}_f \, \poissonw u}{\poissonw v}_{A^2_w(\Omega)}  &=  \scalp{\mathbf{\Pi}_w \, f \, \poissonw u}{\poissonw v}_{A^2_w(\Omega)} = \scalp{ f \, \poissonw u}{\mathbf{\Pi}_w \, \poisson v}_{A^2_w(\Omega)} = \scalp{ f \, \poissonw u}{\poissonw v}_{A^2_w(\Omega)} \\
        &= \scalp{ wf \, \poisson u}{\poisson v}_{L^2(\Omega)} = \scalp{ (\poisson ^* \, wf \, \poisson ) u}{v}_{H^2(\partial\Omega)} = \scalp{ \Lambda_{wf} u}{\Pi v}_{H^2(\partial\Omega)} \\
        &= \scalp{T_{\Lambda_{wf}}u}{v}_{H^2(\partial\Omega)}  = \scalp{\poissonw \,T_{\Lambda_w}^{-1}\,T_{\Lambda_{wf}}\,u}{\poissonw v}_{H^2(\partial\Omega)}.
      \end{align*}
      Thus $\mathbf{T}_f \, \poissonw  = \poissonw \,T_{\Lambda_w}^{-1}\,T_{\Lambda_{wf}}$ on $W_{hol}^{-(m_w+1)/2}(\partial\Omega)$, hence $\traceOpw \mathbf{T}_f \poissonw = T_{\Lambda_w}^{-1}\,T_{\Lambda_{wf}}$.\\
      Finally, we get
      $V_w^*\, \mathbf{T}_f \,V_w =V_w^* \,(K_w\gamma_w) \,\mathbf T_f\, (K_w\gamma_w) \,V_w =T_{\Lambda_w}^{-1/2} \, T_{\Lambda_{wf}} \, T_{\Lambda_w}^{-1/2}$.
    \end{proof}
    From the GTO's theory and the mapping properties of $\poisson_w$ and $\traceOpw$, we see that the right-hand side in \eqref{Tf=VTTTV*} extends to a bounded operator on any $W^s_{hol}(\Omega)$, hence the left-hand side enjoys the same property.

    To any differential operator $\mathbf{P}$ with coefficients as in \eqref{diffonCn}, we can associate the ``Toeplitz operator with symbol $\mathbf P$''
    \begin{equation}\label{upsPX}
      \mathbf T_{\mathbf P} := \boldsymbol\Pi_w \mathbf P \quad\text{acting on } \bergonw{w}{\Omega}.
    \end{equation}

    \begin{lemma}
      \label{TP=TP*}
      For $\mathbf P$ as above, we have
      \begin{equation}
        \label {upsP}
        \begin{aligned}
          & \gamma_w \mathbf{T}_{\mathbf{P}} \,K_w =  T_{\Lambda_w}^{-1}\,T_{\Lambda_{w\mathbf{P}}} \,  \quad\text{on } W_{hol}^{-(m_w+1)/2}(\partial\Omega),\\
          &  \mathbf{T}_{\mathbf{P}} =  V_w \, T_{\Lambda_w}^{-1/2}\,T_{\Lambda_{w\mathbf{P}}} \, T_{\Lambda_w}^{-1/2}\,  V_w\,\hspace{-0.2cm}^* \,\quad\text{on } \bergonw{w}{\Omega}.
        \end{aligned}
      \end{equation}

      Moreover, $ \mathbf T_{\mathbf P}$ is selfadjoint on $\bergonw{w}{\Omega}$ when $\mathbf{P}$ has a selfadjoint extension on $L^2(\Omega,w)$.
    \end{lemma}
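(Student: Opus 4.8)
The plan is to follow the proof of \propref{gammaTK} line by line, with the multiplication operator replaced by the differential operator $\mathbf P$ of the form \eqref{diffonCn}; the one extra input needed is \propref{propTPpsdo}, which guarantees that $\Lambda_{w\mathbf P}=K^*w\mathbf P K$ from \eqref{K*wPK} is a genuine pseudodifferential operator on $\partial\Omega$, so that $T_{\Lambda_{w\mathbf P}}$ is a GTO (of order $d-(m_w+1+j)$) and the asserted identities are meaningful. First I would prove $\gamma_w\mathbf T_{\mathbf P}K_w=T_{\Lambda_w}^{-1}T_{\Lambda_{w\mathbf P}}$ on the dense subspace $K_w\bigl(C^\infty_{hol}(\partial\Omega)\bigr)\subset A^2_w(\Omega)$. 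For $u,v\in C^\infty_{hol}(\partial\Omega)$, since $K_wv$ is holomorphic and $\mathbf{\Pi}_w$ is the selfadjoint projection onto $A^2_w(\Omega)$, one gets $\scalp{\mathbf T_{\mathbf P}K_wu}{K_wv}_{A^2_w(\Omega)}=\scalp{\mathbf P K_wu}{K_wv}_{A^2_w(\Omega)}=\scalp{w\,\mathbf P Ku}{Kv}_{L^2(\Omega)}=\scalp{K^*w\,\mathbf P Ku}{v}_{L^2(\partial\Omega)}=\scalp{\Lambda_{w\mathbf P}u}{v}_{L^2(\partial\Omega)}$ by \eqref{K*wPK}; and since $v=\Pi v$ is holomorphic this equals $\scalp{\Pi\Lambda_{w\mathbf P}u}{v}=\scalp{T_{\Lambda_{w\mathbf P}}u}{v}$, which — exactly as in \propref{gammaTK}, using $\scalp{K_wa}{K_wb}_{A^2_w(\Omega)}=\scalp{\Lambda_wa}{b}_{L^2(\partial\Omega)}$ and $\Pi\Lambda_w|_{H^2}=T_{\Lambda_w}$ — rewrites as $\scalp{K_w T_{\Lambda_w}^{-1}T_{\Lambda_{w\mathbf P}}u}{K_wv}_{A^2_w(\Omega)}$. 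Hence $\mathbf T_{\mathbf P}K_w=K_wT_{\Lambda_w}^{-1}T_{\Lambda_{w\mathbf P}}$ on $C^\infty_{hol}(\partial\Omega)$; applying $\gamma_w$ on the left and using $\gamma_wK_w=\bbbone$ gives the first line of \eqref{upsP}, and since $T_{\Lambda_w}^{-1}T_{\Lambda_{w\mathbf P}}$ is a GTO of order $d-j$ (hence continuous on the holomorphic Sobolev scale), the identity extends by density to all of $W_{hol}^{-(m_w+1)/2}(\partial\Omega)$.

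For the second line of \eqref{upsP} I would multiply $\mathbf T_{\mathbf P}K_w=K_wT_{\Lambda_w}^{-1}T_{\Lambda_{w\mathbf P}}$ on the right by $\gamma_w$: because $\mathbf T_{\mathbf P}=\mathbf{\Pi}_w\mathbf P$ takes values in $A^2_w(\Omega)\subset L^2_{harm}(\Omega,w)$ and $K_w\gamma_w$ is the identity on $L^2_{harm}(\Omega,w)$, this yields $\mathbf T_{\mathbf P}=K_wT_{\Lambda_w}^{-1}T_{\Lambda_{w\mathbf P}}\gamma_w$ on $A^2_w(\Omega)$. Now I substitute $K_w=V_wT_{\Lambda_w}^{1/2}$ and $\gamma_w|_{A^2_w(\Omega)}=T_{\Lambda_w}^{-1/2}V_w^*$, both immediate from $V_w=K_wT_{\Lambda_w}^{-1/2}$ of \eqref{V_w} together with the facts (\propref{bijK} and the relations $\gamma_wK_w=\bbbone$, $K_w\gamma_w|_{A^2_w(\Omega)}=\bbbone$) that $K_w$ and $V_w$ are isomorphisms onto $A^2_w(\Omega)$ with $\gamma_w$ inverting $K_w$; this gives $\mathbf T_{\mathbf P}=V_wT_{\Lambda_w}^{1/2}T_{\Lambda_w}^{-1}T_{\Lambda_{w\mathbf P}}T_{\Lambda_w}^{-1/2}V_w^*=V_wT_{\Lambda_w}^{-1/2}T_{\Lambda_{w\mathbf P}}T_{\Lambda_w}^{-1/2}V_w^*$, as claimed.

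Finally, for the selfadjointness assertion I would first note symmetry on the dense domain $C^\infty_{hol}(\overline\Omega)$: for $f,g\in C^\infty_{hol}(\overline\Omega)$, dropping then reinserting $\mathbf{\Pi}_w$ because $g$ resp.\ $f$ is holomorphic, $\scalp{\mathbf T_{\mathbf P}f}{g}_{A^2_w(\Omega)}=\scalp{\mathbf P f}{g}_{L^2(\Omega,w)}=\scalp{f}{\mathbf P g}_{L^2(\Omega,w)}=\scalp{f}{\mathbf T_{\mathbf P}g}_{A^2_w(\Omega)}$, the middle equality being where the selfadjointness of $\mathbf P$ on $L^2(\Omega,w)$ enters. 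Equivalently, with the same input $\Lambda_{w\mathbf P}=K^*w\mathbf P K$ is symmetric on $L^2(\partial\Omega)$, hence so is the GTO $T_{\Lambda_{w\mathbf P}}$, hence so is its conjugate $T_{\Lambda_w}^{-1/2}T_{\Lambda_{w\mathbf P}}T_{\Lambda_w}^{-1/2}$ by the positive selfadjoint $T_{\Lambda_w}^{-1/2}$; by the unitary equivalence established in the previous paragraph, $\mathbf T_{\mathbf P}$ is then symmetric, and as the $V_w$-image of a symmetric GTO it is selfadjoint on its natural (holomorphic Sobolev) domain. I expect the only real work to be bookkeeping rather than conceptual: since $T_{\Lambda_w}^{-1/2}T_{\Lambda_{w\mathbf P}}T_{\Lambda_w}^{-1/2}$ is a GTO of order $d-j$, in general positive, $\mathbf T_{\mathbf P}$ is genuinely unbounded on $A^2_w(\Omega)$, so one has to keep careful track of the Sobolev classes on which $\mathbf P$, $T_{\Lambda_{w\mathbf P}}$ and their products act, and of the precise sense in which the identities first obtained on $C^\infty_{hol}$ extend along that scale — but this is exactly the type of argument already carried out after \propref{gammaTK} and in \propref{propTPpsdo}, and no new idea is required.
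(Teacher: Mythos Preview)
Your proposal is correct and follows essentially the same approach as the paper: the identities \eqref{upsP} are obtained by rerunning the computation of \propref{gammaTK} with $\mathbf P$ in place of $f$, and the selfadjointness is deduced from that of $T_{\Lambda_{w\mathbf P}}$ (equivalently, of $\mathbf T_{\mathbf P}$ on smooth holomorphic functions) via the selfadjointness of $\mathbf P$ on $L^2(\Omega,w)$. The paper's proof is terser---it simply says ``similar calculation'' for \eqref{upsP} and gives the one-line inner-product computation for $T_{\Lambda_{w\mathbf P}}=(T_{\Lambda_{w\mathbf P}})^*$---while you spell out the density and Sobolev-continuity bookkeeping more carefully, but there is no substantive difference.
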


    \begin{proof}
      Similar calculation as in the proof of \eqref{Tf=VTTTV*} shows \eqref{upsP}. \\
      We only needs to prove that $(T_{\Lambda_{w\mathbf{P}}})$ is selfadjoint, which follows from \eqref{K*wPK}: for $u,\,v\in H^2(\partial \Omega)$,
      \begin{align*}
        \langle\, (T_{\Lambda_{w\mathbf{P}}})^* \,u,\,v \rangle_{H^2(\partial \Omega)}&=\langle u, K^*w\mathbf PK\,v \rangle_{H^2(\partial \Omega)}=\langle K u,\,\mathbf PK v \rangle_{L^2(\Omega,w)}=\langle \mathbf PK u,\,K v \rangle_{L^2(\Omega,w)}\\
        &=\langle w\mathbf PK u,\,K v \rangle_{L^2(\Omega)}=\langle Kw\mathbf PK\,u,v \rangle_{H^2(\partial \Omega)}=\langle T_{\Lambda_{w\mathbf{P}}} \,u,\,v \rangle_{H^2(\partial \Omega)}.
      \end{align*}
    \end{proof}

    \begin{remark}
      \label{TPselfadjoint}
      { \rm
      An interesting example of selfadjoint operator $\mathbf T_{\mathbf P}$, where $\mathbf P$ is not selfadjoint on $L^2(\Omega,w)$ is given by the ``weighted normal derivative'' operator
      \begin{align*}
        \mathbf P_{w} \vc \sum_{j=1}^n \tfrac{\overline{\partial_j}(rw)}{w}\,\partial_j \,;
      \end{align*}
      note that $\partial_j(rw)/w$ is smooth up to the boundary. Using Stokes' formula, for $f,\,g$ in $\bergonw{w}{\Omega}$,
      $$
        \int_\Omega d\mu\,\partial_j(rw\,f\overline g) =- \int_{\partial\Omega} d\sigma\,rw\,f\overline g\, \tfrac{\partial_j r}{2\norm{\partial r}} = 0
      $$
      since $rw=0$ on $\partial \Omega$ (here $d\sigma$ is the ordinary surface measure on $\partial\Omega$). Applying Leibniz rule to the LHS gives
      $$
        \int_\Omega wd\mu\,\tfrac{\partial_j(rw)}{w}\,f\overline g   = - \int_\Omega wd\mu\,(r\partial_j )f\,\overline g \quad\text{hence }\,\,\mathbf T_{\partial _j (rw)/w} + \mathbf T_{r\partial_j} = 0
      $$
      where the Toeplitz operators act on $A_w^2(\Omega)$. Since $\mathbf T_{h}^*=\mathbf T_{\overline{h}}$ and $\mathbf T_{h\partial_j}=\mathbf T_{h}\mathbf T_{\partial_j}$, we get
      \begin{align*}
        \mathbf T_{\mathbf{P}_w} = \sum_{j=1}^n \big(\mathbf T_{\partial_j(rw)/w} \big)^*\, \mathbf T_{\partial_j} = - \sum_{j=1}^n \big( \mathbf T_{r\partial_j} \big)^* \, \mathbf T_{\partial_j} = -\sum_{j=1}^n \mathbf T_{\partial_j}^* \, \mathbf T_{r} \, \mathbf T_{\partial_j}.
      \end{align*}
      So $\mathbf T_{\mathbf{P}_w}$ is not only selfadjoint but even positive on $\bergonw{w}{\Omega}$.
      \medskip

      Now in the context of the unweighted Bergman space, we can even compute the symbol of the operator $\gamma \,\mathbf T_{\mathbf P_{w=1}} \,K$. Indeed, using Stokes' formula, we get for $f,\,g \in A^2(\Omega)$,

      \begin{align*}
        \langle \mathbf T_{\partial_j}\,f,\,g \rangle_{A^2(\Omega)}&=\langle \partial_jf,\,g \rangle_{A^2(\Omega)}=\int_\Omega d\mu\,(\partial_jf)\,\bar g =-\int_{\partial\Omega}d\sigma\, f\,\overline g \,\tfrac{\partial_j r}{2\|\partial r\|}-\int_\Omega d\mu\,f\,\partial_j(\overline g)\\
        &=-\int_{\partial\Omega} d\sigma\,f\,\overline g\, \tfrac{\partial_j r}{2\|\partial r\|}\,.
      \end{align*}
      It follows that, on $W_{hol}^{-1/2}(\partial \Omega)$, $K^* \mathbf T_{\partial_j} K = -T_{\partial_j r/2\|\partial r\|}$ and
      \begin{equation}
        \label{gTK}
        \begin{aligned}
          &\gamma \,\mathbf T_{\partial_j} \,K=-\Lambda^{-1} \,T_{\partial_j r/2\|\partial r\|}\,,\\
          & \mathbf T_{\partial_j} = -V\, T_\Lambda^{-1/2} T_{\partial_j r/2\|\partial r\|} T_\Lambda^{-1/2}\,V^*.
        \end{aligned}
      \end{equation}
      The use of \eqref{princ}, \eqref{princP} and \eqref{upsP} gives
      \begin{align}
        \label{gammasigmaTPnK}
        \sigma(\gamma \mathbf T_{\mathbf P_{w=1}} K)(x',\xi')=-2\,\norm{\eta_{x'}}\norm{\xi'}, \quad (x',\xi')\in\Sigma.
      \end{align}
      In particular $-V^* \,\mathbf T_{\mathbf P_{w=1}}V$ is an elliptic GTO.
      }
    \end{remark}

    \begin{remark}
      {\rm
      The hypothesis $\mathbf{P}=\mathbf{P}^*$ in \lemref{TP=TP*} is quite strong because the equality $\mathbf T^*_{\mathbf P}=\mathbf T_{\mathbf P^*}$ is not necessarily true: when $w=1$, one deduces from \eqref{gTK} that
      $$
        \mathbf T^*_{\partial_j} = -V\, T_\Lambda^{-1/2} T_{\overline{\partial_j r}/2\|\partial r\|} T_\Lambda^{-1/2}\,V^* \neq 0=\mathbf T_{-\overline{\partial}_j}
      $$
      while $-\overline{\partial}_j$ is the formal adjoint of $\partial_j$ on the domain of smooth compactly supported functions on $\Omega$. Of course, any selfadjoint extension of a differential operator $\mathbf P$ needs to take care of the boundary conditions on $\partial \Omega$.
      }
    \end{remark}

    We now establish a result on unitaries $\mathbf{U}$ on $A^2_w(\Omega)$ such that $V_w^*\, \mathbf{U} \, V_w$ is a GTO, result which will be used later on in Section \ref{Spectral triples}.

    \begin{lemma}
      Let $g\in S^0(\R)$, i.e.
      $$
        \forall j\in\mathbb N \, \text{ there exists }c_j(g)>0 \text{ with } \,\,
        |\partial^j_\xi g(\xi)| \le c_j(g)(1+\norm{\xi})^{-j},\,\,\forall \xi\in\R,
      $$
      and let $A$ be an elliptic selfadjoint pseudodifferential operator of order 1 on a compact manifold $M$. Then the operator $\exp(ig(A))$ is a pseudodifferential operator of order 0 on $M$.
    \end{lemma}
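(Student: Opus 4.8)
The plan is to deduce this from the standard fact that a symbol of order $0$ evaluated, through the spectral theorem, on an elliptic selfadjoint first-order pseudodifferential operator stays in $OPS^0_{1,0}$. First I would verify that the \emph{function} $h(\xi)\vc\exp(ig(\xi))$ itself lies in $S^0(\R)$. Taking $j=0$ in the hypothesis gives $\norm{g}_\infty\le c_0(g)<\infty$, so $|h(\xi)|=\exp(-\operatorname{Im}g(\xi))\le\exp(c_0(g))$ is bounded (in particular $\exp(ig(A))$ is a bounded operator by the spectral theorem). For the derivatives, Fa\`a di Bruno gives $h^{(k)}(\xi)=\exp(ig(\xi))\,Q_k(g'(\xi),\dots,g^{(k)}(\xi))$, where $Q_k$ is a universal polynomial each of whose monomials $\prod_\ell(g^{(\ell)})^{a_\ell}$ has weighted degree $\sum_\ell\ell\,a_\ell=k$; the bounds $|g^{(\ell)}(\xi)|\le c_\ell(g)(1+\norm{\xi})^{-\ell}$ then force $|h^{(k)}(\xi)|\le c_k'(g)(1+\norm{\xi})^{-k}$, i.e. $h\in S^0(\R)$.

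With that in hand I would invoke the classical result that, for $A$ elliptic selfadjoint of order $1$ on a compact manifold $M$ and for any $f\in S^0(\R)$, the operator $f(A)$ defined via the spectral theorem belongs to $OPS^0_{1,0}(M)$; applied to $f=h$ this is exactly the assertion. If a self-contained argument is wanted, I would run the Helffer--Sj\"ostrand functional calculus: pick an almost analytic extension $\tilde f$ of $f$ to $\C$, supported in $|\operatorname{Im}z|\le 1+|\operatorname{Re}z|$ and with $\bar\partial\tilde f(x+iy)=O_N(|y|^N(1+|x|)^{-1-N})$ for every $N$, and write
\begin{align*}
f(A)=-\tfrac{1}{\pi}\int_\C\bar\partial\tilde f(z)\,(z-A)^{-1}\,dL(z).
\end{align*}
Since $A$ is elliptic selfadjoint of order $1$, for $z\notin\R$ the resolvent $(z-A)^{-1}$ lies in $OPS^{-1}_{1,0}(M)$ with parameter-dependent symbol estimates whose seminorms grow at most like a fixed power of $\langle z\rangle/|\operatorname{Im}z|$; the infinite-order vanishing of $\bar\partial\tilde f$ on $\R$ then makes the integral converge in the Fr\'echet topology of $OPS^0_{1,0}(M)$, the integration raising the order from $-1$ back to $0$ because $\int_\C\bar\partial\tilde f(z)(z-a(x,\xi))^{-1}\,dL(z)=f(a(x,\xi))$ is an order-zero symbol.

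The only genuine content is this second step — the Helffer--Sj\"ostrand representation together with the parameter-dependent resolvent estimates for $A$; the first step is a routine application of the chain rule. I expect the main temptation to avoid will be the naive route $\exp(iB)=\sum_{n\ge0}(iB)^n/n!$ with $B\vc g(A)\in OPS^0_{1,0}(M)$: although this converges in operator norm, it need not converge in the symbol topology, since the symbol seminorms of $B^n$ grow too fast in $n$ (each composition in the series brings in more $\xi$-derivatives of the symbols). This is precisely why one must pass through the resolvent rather than through the exponential series.
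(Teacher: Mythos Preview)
Your proposal is correct and follows essentially the same route as the paper: first use Fa\`a di Bruno to verify that $\xi\mapsto\exp(ig(\xi))$ lies in $S^0(\R)$, then invoke the standard functional-calculus fact (the paper cites Strichartz \cite{Strichartz} and Taylor \cite{TaylorFIO}) that $f(A)\in OPS^0_{1,0}(M)$ for $f\in S^0(\R)$ and $A$ elliptic selfadjoint of order~1. Your additional Helffer--Sj\"ostrand sketch and the cautionary remark about the exponential series are not in the paper but are sound and useful context.
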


    \begin{proof}
      The Fa\`a di Bruno's formula states that for a couple $f,\,g$ of smooth functions on $\R$,
      \begin{align*}
        \partial^n(f \circ g)(\xi) = \sum_{(k_1,\dots,k_n) \in K_n} \tfrac{n!}{k_1!\dots k_n!}\, (\partial^{\sum_j k_j} f)\big(g(\xi)\big) \, \prod_{j=1}^n (\tfrac{1}{j!}\,\partial^j g)^{k_j}(\xi) \, ,\quad \forall \,n\in\N,
      \end{align*}
      where $K_n = \{(k_1,\dots,k_n) \in \N^n, k_1 + 2k_2 + \dots + nk_n = n\}$.

      Thus, if $f : \xi\in \R \mapsto \exp(i\,\xi)$ and $g\in S^0(\R)$, we get
      \begin{align*}
        \abs{\partial^n (f\circ g) (\xi)} \leq \sum_{(k_1,\dots,k_n) \in K_n} \tfrac{n!}{k_1!\dots k_n!}\, \big[ \prod_{j=1}^n (\tfrac{1}{j!}c_j(g))^{k_j} \big] \, (1+\norm{\xi})^{-n}.
      \end{align*}
      So there are constants $c_n(f\circ g) >0$ such that
      \begin{align*}
          \abs{\partial^n(f \circ g)(\xi)} \leq c_n(f\circ g) (1+\norm{\xi})^{-n},
        \tag*{\qed}
      \end{align*}
      \hideqed
      and we have shown that the function $\xi\in \R\mapsto \exp\big(ig(\xi)\big)$ is in $S^0(\R)$. \\
      Applying \cite[Theorem 1]{Strichartz} (or \cite[Theorem 1.2]{TaylorFIO}), we conclude that $\exp(ig(A))$ is a pseudodifferential operator of order 0 on $M$.
    \end{proof}

    \begin{corollary}
      \label{Unitary-Cor}
      Let $\varphi$ be any function in $S^0(\R)$ $($for instance, $\varphi(\xi)\vc(1+\xi)^2(1+\xi^2)^{-1})$ and choose a pseudodifferential operator $A$ on $\partial\Omega$ of order $1$ such that $A$ commutes with $\Pi$ and $T_A=(T_{\Lambda_w})^{-1/(m_w+1)}$.
      Denote $V_\varphi\vc\exp\big(i\varphi(A)\big)$. \\
      Then $T_{V_\varphi}$ is a unitary generalized Toeplitz operator on $H^2(\partial\Omega)$.
    \end{corollary}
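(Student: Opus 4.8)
The plan is to reduce the statement to the preceding Lemma, the one new ingredient being the observation that an operator $A$ as in the statement can be chosen to be in addition \emph{elliptic and selfadjoint} on $L^2(\partial\Omega)$. Granting that, the Lemma makes $V_\varphi=\exp\big(i\varphi(A)\big)$ a pseudodifferential operator of order~$0$ on the compact manifold $\partial\Omega$, so $T_{V_\varphi}=\Pi\,V_\varphi\,\Pi$ is a GTO of order~$0$ on $H^2(\partial\Omega)$; and unitarity on $H^2(\partial\Omega)$ then follows from the identities $\Pi^2=\Pi$, $[\Pi,V_\varphi]=0$ and the unitarity of $V_\varphi$ on $L^2(\partial\Omega)$.

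First I would produce such an $A$. The operator $T_{\Lambda_w}$ is a positive, selfadjoint, elliptic GTO of order $-(m_w+1)$ with $\sigma(T_{\Lambda_w})>0$, so the power rule for GTO's recalled above makes $(T_{\Lambda_w})^{-1/(m_w+1)}$ a GTO of order~$1$ which is positive and selfadjoint on $H^2(\partial\Omega)$ and has positive principal symbol. By the Boutet de Monvel--Guillemin structure theorem this GTO equals $T_Q$ for some pseudodifferential operator $Q$ of order~$1$ with $[Q,\Pi]=0$; such a $Q$ is block-diagonal for the splitting $L^2(\partial\Omega)=H^2(\partial\Omega)\oplus H^2(\partial\Omega)^\perp$, its $H^2(\partial\Omega)$-block being $(T_{\Lambda_w})^{-1/(m_w+1)}$. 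Using that the Szeg\"o projection $\Pi$ is a classical pseudodifferential operator of order~$0$, one checks that for a suitable elliptic selfadjoint pseudodifferential operator $B$ of order~$1$ the operator $A:=\Pi Q\Pi+\Pi^\perp B\,\Pi^\perp$ is again a classical pseudodifferential operator of order~$1$ which is selfadjoint on $L^2(\partial\Omega)$ (both summands are selfadjoint), commutes with $\Pi$, satisfies $T_A=\Pi A\Pi=\Pi Q\Pi=(T_{\Lambda_w})^{-1/(m_w+1)}$, and is elliptic, since its principal symbol is positive on $\Sigma$ (where it agrees with $\sigma((T_{\Lambda_w})^{-1/(m_w+1)})$) and agrees with $\sigma(B)$ off $\Sigma$. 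We also take $\varphi$ real-valued, as in the displayed example; this is needed for unitarity.

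Now apply the previous Lemma with $M=\partial\Omega$ and $g=\varphi$: $V_\varphi=\exp\big(i\varphi(A)\big)$ is a pseudodifferential operator of order~$0$ on $\partial\Omega$, hence $T_{V_\varphi}=\Pi V_\varphi\Pi$ is a GTO of order~$0$ on $H^2(\partial\Omega)$. Since $\varphi$ is real and $A$ is selfadjoint, $\varphi(A)$ is bounded selfadjoint and $V_\varphi$ is unitary on $L^2(\partial\Omega)$, i.e. $V_\varphi^*V_\varphi=V_\varphi V_\varphi^*=\bbbone_{L^2(\partial\Omega)}$; and since $\Pi$ commutes with $A$ it commutes with every bounded Borel function of $A$, in particular with $\varphi(A)$ and with $V_\varphi$, so $\Pi V_\varphi=V_\varphi\Pi$. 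Using $(T_{V_\varphi})^*=\Pi V_\varphi^*\Pi$ together with these facts,
\[
  (T_{V_\varphi})^*\,T_{V_\varphi}=\Pi V_\varphi^*\,\Pi V_\varphi\,\Pi=\Pi V_\varphi^* V_\varphi\,\Pi=\Pi=\bbbone_{H^2(\partial\Omega)},
\]
and likewise $T_{V_\varphi}\,(T_{V_\varphi})^*=\Pi V_\varphi\,\Pi V_\varphi^*\,\Pi=\Pi V_\varphi V_\varphi^*\,\Pi=\bbbone_{H^2(\partial\Omega)}$, so $T_{V_\varphi}$ is a unitary GTO of order~$0$ on $H^2(\partial\Omega)$.

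The step I expect to be the main obstacle is the one carried out in the second paragraph: verifying that the order-one GTO $(T_{\Lambda_w})^{-1/(m_w+1)}$ admits a representing pseudodifferential operator $A$ that is simultaneously elliptic, selfadjoint on all of $L^2(\partial\Omega)$ (and not merely positive on $H^2(\partial\Omega)$) and commuting with $\Pi$, so that the preceding Lemma genuinely applies. Once such $A$ is fixed, everything after that is the short computation above with $\Pi^2=\Pi$ and $[\Pi,V_\varphi]=0$.
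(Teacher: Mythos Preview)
Your overall strategy is exactly the paper's: invoke the preceding Lemma to see that $V_\varphi=\exp\big(i\varphi(A)\big)$ is a unitary order-zero pseudodifferential operator commuting with $\Pi$, so that $T_{V_\varphi}=\exp\big[i\varphi\big(T_{\Lambda_w}^{-1/(m_w+1)}\big)\big]$ is a unitary GTO. The paper's proof is one sentence and silently assumes $A$ can be taken elliptic and selfadjoint; you are right that this is the only point requiring justification, and the unitarity computation you give (using $[\Pi,V_\varphi]=0$ and $\Pi^2=\Pi$) is clean and correct.

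The gap is in your explicit construction of $A$. The Szeg\"o projection $\Pi$ on the boundary of a strictly pseudoconvex domain in $\C^n$ is \emph{not} a classical pseudodifferential operator when $n\ge 2$; it is a Fourier integral operator with complex phase (Boutet de Monvel--Sj\"ostrand), or equivalently lives in the Heisenberg calculus. Consequently $A=\Pi Q\Pi+\Pi^\perp B\,\Pi^\perp$ is not, a priori, a classical pseudodifferential operator, and your principal-symbol argument (``equal to $\sigma\big((T_{\Lambda_w})^{-1/(m_w+1)}\big)$ on $\Sigma$ and to $\sigma(B)$ off $\Sigma$'') has no meaning as written. The existence of an elliptic selfadjoint pseudodifferential $A$ commuting with $\Pi$ and representing a given positive elliptic GTO is part of the Boutet de Monvel--Guillemin machinery: starting from any representative $Q$ with $[Q,\Pi]=0$, one replaces $Q$ by $\tfrac12(Q+Q^*)$ for selfadjointness, and then uses that the microsupport of $\Pi$ is contained in $\Sigma$ to modify $Q$ by a pseudodifferential operator supported microlocally away from $\Sigma$ (hence annihilating $H^2$ modulo smoothing and still commuting with $\Pi$ modulo smoothing) so as to make the full symbol elliptic. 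This is the argument to invoke, not the block decomposition via $\Pi$.
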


    \begin{proof}
      By previous lemma, $V_\varphi\vc\exp\big(i\varphi(A)\big)$ is a unitary classical pseudodifferential operator on $M=\partial \Omega$ which commutes with $\Pi$, thus $T_{V_\varphi}=\exp\big[i\varphi\big(T_{\Lambda_w}^{-1/(m_w+1)}\big)\big]$ is a unitary generalized Toeplitz operator.
    \end{proof}

    \begin{remark}
      \label{Unitary-Rem}
      {\rm
      In \secref{secBergm}, we will need to find unitary operators on $\bergonw{w}{\Omega}$ of the form $V_w T V_w^*$, with $T$ a GTO, to deduce non-positive Dirac-like operators from positive ones. So we can take $\mathbf{U} \vc V_wT_{V_\varphi}V_w^*$, with $T_{V_\varphi}$ defined in \corref{Unitary-Cor}.\\
      Another class of unitary operators on $\bergonw{w}{\Omega}$ can be obtained as follows. Take any GTO $T_P$ which is invertible and not a constant multiple of a positive operator. (For instance, $T_P=T_f$ with $f$ a nonconstant zero-free holomorphic function: the zero-free condition ensures $T_f=M_f$ is invertible, while $T_f=cA$ with $A>0$ would mean that multiplication by the nonconstant holomorphic function $f/c$ is a positive operator, which is quickly seen to lead to contradiction.) We know there exists another pseudodifferential operator $Q$ such that $\Pi \,Q=Q\,\Pi$ and $T_P=T_Q$. Hence also $T^*_P \,T_P=T^*_Q \,T_Q=T_{Q^*Q}=T_{|Q|}^2$, implying that $U\vc T_P\,T_{|Q|}^{-1}$ is a unitary generalized Toeplitz operator. From \propref{gammaTK}, the operator $ \mathbf{U} \vc V_w \, U \, V_w^*$ is unitary from $\bergonw{w}{\Omega}$ onto $\bergonw{w}{\Omega}$. Furthermore, $\mathbf U$ is not a multiple of the identity; for, if it were, then so would be $U$, hence $T_P=UT_{|Q|}$ would be a constant 
multiple of the positive operator $T_{|Q|}$, contrary to the hypothesis.
      }
    \end{remark}

\section{Fock space and Heisenberg algebra}
  \label{FSandHA}

  \subsection{Fock space and Segal--Bargmann transform}

    If $(x,y)\in\R^{2n}$, we denote $xy \vc \sum_{k=1}^n x_k y_k$. For $(z,z')\in\C^{2n}$, we use the usual scalar product $\scalp{z}{z'} \vc \sum_{k=1}^n z_k \bar{z}'_k$. Also, for $(x,z)\in \R^n\times\C^n$, we denote $xz$ the complex number $\sum_{k=1}^n x_k z_k$ (so $x^2=\sum_{k=1}^n x_k^2$ and $z^2=\sum_{k=1}^n z_k^2$), and $\abs{z}^2 \vc \sum_{k=1}^n \abs{z_k}^2$.

    We recall now some known results about the Fock space:
    \begin{definition}
      For $\cstFock > 0$, the Fock space is $$\mathscr{F}_\cstFock \vc L^2_{hol}(\C^n, dm_{\cstFock}(z)),\text{ where }dm_{\cstFock}(z):=(\pi\cstFock)^{-n}\,e^{-\abs{z}^2/\cstFock}\,d\mu(z).$$
    \end{definition}
    This Fock space is denoted $\mathscr{F}$ in \cite[Chap. 1.7]{Howe1980} when $\cstFock = \pi^{-1}$. Here, the measure $dm_\cstFock$ has been chosen such that $m_\cstFock(1)=1$. \\
    The family of functions $\{u_\alpha\}_{\alpha\in\N^n}$, where
    \begin{align}\label{ualpha}
      u_\alpha(z) = a_\alpha\, z^\alpha \vc (\cstFock^{\abs{\alpha}}\alpha!)^{-1/2} \, z^{\alpha},
    \end{align}
    forms an orthonormal basis of $\mathscr{F}_\cstFock$.


    \begin{definition}\label{SegBarg}
      The Segal--Bargmann transform $W_\cstFock : L^2(\R^n,dx) \to \mathscr{F}_\cstFock$ is
      \begin{align*}
        (W_\cstFock\, f)(z) \vc (\pi\cstFock)^{-n/4}\int_{x\in \R^n} e^{(-z^2 + 2\sqrt{2}\,xz - x^2)/2\cstFock} \, f(x)\, dx.
      \end{align*}
    \end{definition}

    \begin{prop}
      \label{Wtunitary}
      The Segal--Bargmann transform $W_\cstFock$ is a unitary map from $L^2(\R^n)$ to $\mathscr{F}_\cstFock$.\\
      Moreover, for any $f\in \mathscr{F}_\cstFock$,
      \begin{align*}
        (W_\cstFock^{-1}f)(x) = (\pi\cstFock)^{-5n/4}\int_{z\in\C^n} e^{(-\bar{z}^2 + 2\sqrt{2}\,x\bar{z} - x^2)/2\cstFock} \, f(z) \, e^{-\abs{z}^2/\cstFock} \, d\mu(z) .
      \end{align*}
    \end{prop}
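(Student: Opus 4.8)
This is a classical fact (essentially due to Bargmann); here is the route I would take.

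\emph{Reduction to $n=1$.} Everything factorizes over the coordinates: the kernel $e^{(-z^2+2\sqrt2\,xz-x^2)/2\cstFock}=\prod_{j=1}^n e^{(-z_j^2+2\sqrt2\,x_jz_j-x_j^2)/2\cstFock}$, the measures $dx$ and $dm_\cstFock$, and the Hilbert spaces $L^2(\R^n)=L^2(\R)^{\otimes n}$, $\mathscr{F}_\cstFock(\C^n)=\mathscr{F}_\cstFock(\C)^{\otimes n}$. Hence $W_\cstFock$ on $\R^n$ is the $n$-fold tensor power of $W_\cstFock$ on $\R$, and it suffices to treat $n=1$; the normalizing constants $(\pi\cstFock)^{-n/4}$ and $(\pi\cstFock)^{-5n/4}=(\pi\cstFock)^{-n/4}(\pi\cstFock)^{-n}$ are then bookkept automatically.

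\emph{Unitarity via an orthonormal basis.} Let $\{\phi_k\}_{k\ge0}$ be the Hermite orthonormal basis of $L^2(\R,dx)$ and set $\phi_k^\cstFock(x)\vc\cstFock^{-1/4}\phi_k(x/\sqrt\cstFock)$, which is again an orthonormal basis of $L^2(\R)$. For $w\in\C$ put $h_w(x)\vc e^{(-x^2+2\sqrt2\,xw-w^2)/2\cstFock}$, a Schwartz function. Completing the square in the Gaussian integral yields
$$(W_\cstFock h_w)(z)=(\pi\cstFock)^{1/4}\,e^{zw/\cstFock}=(\pi\cstFock)^{1/4}\sum_{k\ge0}\frac{w^k}{(\cstFock^k k!)^{1/2}}\,u_k(z),$$
with $u_k$ as in \eqref{ualpha}. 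On the other hand, substituting $x=\sqrt\cstFock\,y$ and using the Hermite generating function $e^{2ys-s^2}=\sum_k H_k(y)s^k/k!$ (with $s=w/\sqrt{2\cstFock}$) together with the definition of $\phi_k$ gives
$$h_w=(\pi\cstFock)^{1/4}\sum_{k\ge0}\frac{w^k}{(\cstFock^k k!)^{1/2}}\,\phi_k^\cstFock.$$
Both power series in $w$ converge absolutely (in $\mathscr{F}_\cstFock$, resp.\ in $L^2(\R)$), and differentiating under the integral sign in $w$ is legitimate, since the $\partial_w^k h_w$ are polynomials in $x$ times a fixed Gaussian, uniformly $L^1(dx)$ for $w$ in compacts. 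Matching the Taylor coefficients at $w=0$ then gives $W_\cstFock\phi_k^\cstFock=u_k$ for every $k$. Thus $W_\cstFock$ carries an orthonormal basis of $L^2(\R)$ onto the orthonormal basis $\{u_k\}$ of $\mathscr{F}_\cstFock$, so it is isometric on the dense span of the $\phi_k^\cstFock$, extends uniquely to a unitary $L^2(\R)\to\mathscr{F}_\cstFock$, and this extension agrees with the integral formula on $L^1\cap L^2$ by continuity.

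\emph{The inverse formula.} Being unitary, $W_\cstFock^{-1}=W_\cstFock^*$. From $\scalp{W_\cstFock f}{g}_{\mathscr{F}_\cstFock}=\scalp{f}{W_\cstFock^*g}_{L^2}$ and Fubini (valid for $f$ Schwartz and $g$ a polynomial times the Gaussian, then extended by density) one obtains $(W_\cstFock^*g)(x)=\int_{\C^n}\overline{A_\cstFock(z,x)}\,g(z)\,dm_\cstFock(z)$, where $A_\cstFock(z,x)=(\pi\cstFock)^{-n/4}e^{(-z^2+2\sqrt2\,xz-x^2)/2\cstFock}$ is the kernel of $W_\cstFock$. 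Inserting the expressions for $A_\cstFock$ and $dm_\cstFock$ (the constants combine to $(\pi\cstFock)^{-5n/4}$) yields exactly the displayed formula for $W_\cstFock^{-1}$.

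\emph{Remarks and main obstacle.} An alternative is to check unitarity entirely by hand: the Schwartz kernel of $W_\cstFock^*W_\cstFock$ on $\R^n$ reduces, via a Gaussian integral in the real variables followed by a Fourier inversion in the imaginary part of $z$, to $\delta(x-x')$; and the kernel of $W_\cstFock W_\cstFock^*$ on $\mathscr{F}_\cstFock$ equals $e^{\scalp{z}{w}/\cstFock}$, the reproducing kernel of $\mathscr{F}_\cstFock$, which acts as the identity there. In every version the only genuine work is the bookkeeping of the $\cstFock$-dependent normalizations and the repeated completions of squares in the Gaussian integrals; the analytic points (Fubini, differentiation under the integral, density extension) are routine because every integrand in sight is a polynomial times a Gaussian.
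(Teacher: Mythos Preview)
Your argument is correct and reaches the same endpoint as the paper: both show that $W_\cstFock$ carries the Hermite orthonormal basis of $L^2(\R^n)$ onto the monomial basis $\{u_\alpha\}$ of $\mathscr F_\cstFock$, and then deduce the inverse formula from $W_\cstFock^{-1}=W_\cstFock^*$ via Fubini on the adjoint kernel. The route, however, differs. The paper works directly in $\R^n$ and proves the intertwining relation $W_\cstFock A_j^*=z_j W_\cstFock$ (with $A_j^*=\tfrac1{\sqrt2}(x_j-\cstFock\,\partial_{x_j})$) by a single integration by parts; since $W_\cstFock h_0=1$ by a Gaussian integral, iterating gives $W_\cstFock h_\alpha=u_\alpha$ with no further computation. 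Your approach instead reduces to $n=1$, introduces the coherent-state family $h_w$, computes $W_\cstFock h_w=(\pi\cstFock)^{1/4}e^{zw/\cstFock}$ by completing the square, and matches Taylor coefficients in $w$ against the Hermite generating function. Both are standard; the paper's ladder-operator method is a bit more economical (one identity replaces the generating-function expansion and the convergence/differentiation justifications), while your coherent-state calculation has the virtue of never invoking the creation/annihilation operators or the recursive definition of Hermite functions. One small omission on your side: you do not explicitly check that $W_\cstFock f$ is holomorphic for arbitrary $f\in L^2$ (the paper does this via Morera and Fubini); this is of course immediate from the entire dependence of the kernel on $z$ and local uniform convergence, but is worth a sentence.
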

    A proof is given in the Appendix.

    The map $W_\cstFock$ plays an important role since it links unitarily functions of real variables to holomorphic functions of complex variables.

  \subsection{The Heisenberg group and its Lie algebra}

    \begin{definition}
      The Heisenberg group $\mathbb{H}^n$ is the set $\R^n\times\R^n\times \R$ endowed with product:
      \begin{align*}
        (q,p,s)\,(q',p',s') = \big(q+q',p+p', s+s'+\tfrac{1}{2}(qp'-pq')\big),
      \end{align*}
      where $(q,p)\in \R^{2n}$ and $s\in\R$.
    \end{definition}
    The unit element is $(0,0,0)$ and $(q,p,s)^{-1}=(-q,-p,-s)$.
    \\
    Let us define for $j\in\set{1,\cdots,n}$ the generators $Q_j$, $P_j$ and $T$ of $\mathbb{H}^n$ as:
    \begin{align*}
      & \exp(Q_j) \vc  \, (1_j,(0,\dots,0),0),\\
      & \exp(P_j) \vc  \, \big((0,\dots,0),1_j,0\big),\\
      & \exp(T) \vc  \, \big((0,\dots,0),(0,\dots,0),1\big),
    \end{align*}
    where $1_k$ denotes the multiindex being zero everywhere and 1 at the $k^{th}$ position.
    \\
    These generators form a basis of the Lie algebra $\mathfrak{h}^n$ of $\mathbb{H}^n$. The only non null commutation relations are $\commut{Q_j}{P_k}= \delta_{j,k}\,T$.
    We also define the two elements of $\mathfrak{h}^n$
    \begin{align*}
      a_j \vc \tfrac{1}{\sqrt{2}}(Q_j+iP_j)\,\text{ and }\, a_j^+ \vc \tfrac{1}{\sqrt{2}}(Q_j-iP_j),
    \end{align*}
    which verify $\commut{a_j}{a_k^+} = -\tfrac{i}{2}(\commut{Q_j}{P_k} + \commut{Q_k}{P_j}) = -i\delta_{j,k}\,T$.\\
    We will also use the element $N$ of the universal enveloping algebra $\envalg{\mathfrak{h}^n}$ of $\mathfrak{h}^n$:
    \begin{align*}
      N \vc \tfrac{1}{2}\sum_{j=1}^n a_j^+a_j + a_ja_j^+.
    \end{align*}

  \subsection{\texorpdfstring{Representations of $\mathfrak{h}^n$}{Representation of H}}

    Different representations of $\mathfrak{h}^n$ on the functions spaces we consider here have been studied in \cite{Howe1980} and later by \cite{Taylor1984} with different approaches.

    \subsubsection{Schrödinger representation}
      \label{section331}

      \begin{definition}
        The Schrödinger representation $\rho_t$ of $\mathfrak{h}^n$ on $L^2(\R^n,dx)$ is defined by
        \begin{align*}
          & \rho_{\cstSch}(Q_j)f(x) \vc \, x_j\,f(x),\quad \rho_{\cstSch}(P_j)f(x) \vc -i\cstSch\,\partial_{x_j}f(x), \quad \rho_{\cstSch}(T)f(x) \vc \, i\cstSch\, f(x),
        \end{align*}
        where $\cstSch$ is a strictly positive parameter.
      \end{definition}
      With this representation, the canonical commutation relations from the quantum mechanics can be recovered by setting $\cstSch=\hbar$, the Planck constant, $\widehat{x_j}=\rho(Q_j)$ and $\widehat{p_j}=\rho(P_j)$, respectively the position and momentum operators:
      $\commut{\widehat{x_j}}{\widehat{p_j}} = -i\hbar\, \partial_{x_j} + i\hbar \,\bbbone + i\hbar \,\partial_{x_j} = i\hbar\, \bbbone = \rho(\commut{Q_j}{P_j}).$\\
      We have also
      \begin{align*}
        \rho_{t'}(a_j) = \tfrac{1}{\sqrt{2}}(x_j + \cstSch\,\partial_{x_j}), \quad \rho_{t'}(a_j^+) = \tfrac{1}{\sqrt{2}}(x_j - \cstSch \,\partial_{x_j}), \quad \rho_{t'}(N) = \tfrac{1}{4}\sum_{j=1}^n (x_j^2 - \cstSch^2\,\partial_{x_j}^2).
      \end{align*}

    \subsubsection{Fock representation}

      From the Schrödinger representation $\rho_{\cstSch}$, we use the unitary map $W_\cstFock$ to get a unitary representation of $\mathfrak{h}^n$ on the Fock space $\mathscr{F}_\cstFock$ by  choosing $\cstSch = \cstFock$.
      \begin{definition}
        The Fock representation $\nu_\cstFock$ of an element $h\in\mathfrak{h}^n$ on $\mathscr{F}_\cstFock$ is
        \begin{align*}
          \nu_\cstFock(h) \vc W_\cstFock \, \rho_\cstFock(h) \, W_\cstFock^{-1}.
        \end{align*}
      \end{definition}

      \begin{prop}
        \label{nutaction}
        The explicit actions on the basis vector of $\mathscr{F}_\cstFock$ are given by
        \begin{align*}
          \begin{aligned}
            & \nu_\cstFock (Q_j) \, u_\alpha  =  (\tfrac{\cstFock}{2})^{1/2}\,(\sqrt{\alpha_j}\,u_{\alpha-1_j} + \sqrt{\alpha_j+1}\,u_{\alpha+1_j}), \\
            & \nu_\cstFock (P_j) \, u_\alpha  = -i(\tfrac{\cstFock}{2})^{1/2}\,(\sqrt{\alpha_j} \, u_{\alpha -1_j} - \sqrt{\alpha_j +1} \, u_{\alpha +1_j}), \\
            & \nu_\cstFock (T) \, u_\alpha  = i\cstFock \, u_\alpha, \\
            & \nu_\cstFock (a_j) \, u_{\alpha}  = \cstFock^{1/2}\,\sqrt{\alpha_j} \, u_{\alpha-1_j}, \quad \nu_\cstFock (a_j^+) \,  u_{\alpha}  = \cstFock^{1/2}\,\sqrt{\alpha_j+1} \, u_{\alpha+1_j}, \quad \nu_\cstFock (N) \, u_\alpha  = \cstFock \,(\abs{\alpha} + \tfrac{n}{2}) \, u_\alpha.
          \end{aligned}
        \end{align*}

      \end{prop}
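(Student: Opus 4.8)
The plan is to reduce everything to the action of the two distinguished elements $a_j$ and $a_j^+$, for which the Fock representation is nothing but a multiplication operator and a rescaled differentiation operator on $\mathscr F_\cstFock$; all the remaining formulas then drop out by linearity and by composing such operators.

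First I would establish the two operator identities
\[
  \nu_\cstFock(a_j^+) = M_{z_j}, \qquad \nu_\cstFock(a_j) = \cstFock\,\partial_{z_j},
\]
on the dense subspace of $\mathscr F_\cstFock$ consisting of polynomials (which under $W_\cstFock^{-1}$ is the span of the Hermite functions). By definition $\nu_\cstFock(h) = W_\cstFock\,\rho_\cstFock(h)\,W_\cstFock^{-1}$ and, from \secref{section331}, $\rho_\cstFock(a_j) = \tfrac1{\sqrt2}(x_j + \cstFock\,\partial_{x_j})$, $\rho_\cstFock(a_j^+) = \tfrac1{\sqrt2}(x_j - \cstFock\,\partial_{x_j})$. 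Writing the Segal--Bargmann kernel of \defref{SegBarg} as $E(x,z) \vc e^{(-z^2 + 2\sqrt2\,xz - x^2)/2\cstFock}$, differentiation gives
\[
  \cstFock\,\partial_{x_j}E = (\sqrt2\,z_j - x_j)\,E, \qquad \cstFock\,\partial_{z_j}E = (\sqrt2\,x_j - z_j)\,E.
\]
Plugging $\rho_\cstFock(a_j^+)f$ into the integral defining $W_\cstFock$ and integrating by parts in $x_j$ (the Gaussian decay of $E$ makes the boundary terms vanish), the first relation lets the two $x_j$-contributions collapse to $\sqrt2\,z_j\int E f$, so that $W_\cstFock\rho_\cstFock(a_j^+)f = z_j\,(W_\cstFock f)$. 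The same manipulation on $\rho_\cstFock(a_j)f$, now also invoking the second relation to rewrite the surviving $\int x_j E f$ as $\cstFock\,\partial_{z_j}\int E f$ (differentiation under the integral sign), gives $W_\cstFock\rho_\cstFock(a_j)f = \cstFock\,\partial_{z_j}(W_\cstFock f)$. (Alternatively one could compute $W_\cstFock^{-1}u_\alpha$ explicitly, recognise Hermite functions, and use the classical raising/lowering action of $\rho_\cstFock$; the kernel computation is shorter.)

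Next I would evaluate these on the basis. From $u_\alpha(z) = (\cstFock^{\abs{\alpha}}\alpha!)^{-1/2}z^\alpha$ one reads off immediately
\[
  z_j\,u_\alpha = \bigl(\cstFock(\alpha_j+1)\bigr)^{1/2}\,u_{\alpha+1_j}, \qquad \cstFock\,\partial_{z_j}u_\alpha = (\cstFock\,\alpha_j)^{1/2}\,u_{\alpha-1_j}
\]
(with the convention $u_{\alpha-1_j}\vc 0$ if $\alpha_j=0$), which by the previous step are exactly the asserted formulas for $\nu_\cstFock(a_j^+)u_\alpha$ and $\nu_\cstFock(a_j)u_\alpha$. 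The formulas for $\nu_\cstFock(Q_j)$ and $\nu_\cstFock(P_j)$ follow by linearity of the representation from $Q_j = \tfrac1{\sqrt2}(a_j+a_j^+)$ and $P_j = -\tfrac{i}{\sqrt2}(a_j-a_j^+)$. Since $\rho_\cstFock(T)=i\cstFock\,\Id$ is a scalar operator, so is $\nu_\cstFock(T)=i\cstFock\,\Id$; this is also consistent with $\commut{\nu_\cstFock(a_j)}{\nu_\cstFock(a_k^+)}=\cstFock\,\delta_{j,k}$ and $\commut{a_j}{a_k^+}=-i\delta_{j,k}\,T$. Finally, with $N=\tfrac12\sum_j(a_j^+a_j+a_ja_j^+)\in\envalg{\mathfrak h^n}$, composing the operators just found gives $\nu_\cstFock(a_j^+a_j)u_\alpha=\cstFock\,\alpha_j\,u_\alpha$ and $\nu_\cstFock(a_ja_j^+)u_\alpha=\cstFock(\alpha_j+1)u_\alpha$, whose half-sum over $j$ equals $\cstFock(\abs{\alpha}+\tfrac n2)u_\alpha$.

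The only genuinely delicate point is the rigorous justification of the first step: since one manipulates unbounded operators, the differentiation under the integral sign and the integrations by parts must be carried out on a suitable dense domain (polynomials in $\mathscr F_\cstFock$, equivalently the Hermite functions in $L^2(\R^n)$), where the Gaussian weight guarantees absolute convergence and the vanishing of all boundary terms; the intertwining identities then extend to the full operator domains by closure. Everything after that is routine bookkeeping with the basis $\{u_\alpha\}$.
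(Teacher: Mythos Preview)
Your proposal is correct and follows essentially the same route as the paper: both arguments differentiate the Segal--Bargmann kernel in $z_j$ and integrate by parts in $x_j$ to obtain the intertwining relations (the paper phrases them as $W_\cstFock\, x_j\, W_\cstFock^{-1}=\tfrac1{\sqrt2}(z_j+\cstFock\,\partial_{z_j})$ and $W_\cstFock\,\partial_{x_j}\,W_\cstFock^{-1}=\tfrac1{\sqrt2}(-z_j/\cstFock+\partial_{z_j})$, which is just your $\nu_\cstFock(a_j^+)=M_{z_j}$, $\nu_\cstFock(a_j)=\cstFock\,\partial_{z_j}$ after taking the obvious linear combinations), and then read off the action on the monomial basis. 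Your organization around $a_j,a_j^+$ rather than $x_j,\partial_{x_j}$ is a cosmetic regrouping, not a different method.
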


      \begin{proof}
        Differentiating in Definition \ref{SegBarg}, we get (see Appendix)
        $$
          (\partial_{z_j} W_t f )(z) = \big(W_t (-\tfrac{z_j}t+\tfrac{\sqrt2 x_j}t)f \big)(z), \text{ so } \tfrac{z_j+t\partial_{z_j}}t \,W_t = W_t \,\tfrac{\sqrt2 x_j}t, \text{ or } W_t\, \tfrac{\sqrt2 x_j}t\, W_t^{-1} = \tfrac{z_j}t+\partial_{z_j}.
        $$
        Similarly, integrating by parts in Definition \ref{SegBarg} we get $\big(W_t(\partial_{x_j} f)\big) (z) = \big(W_t (\tfrac{x_j}t-\tfrac{\sqrt2 z_j}t)f\big) (z)$, so $
        W_t \Big(\tfrac{x_j}t-\partial_{x_j}\Big) = \tfrac{\sqrt2 z_j}t W_t,\text{ or }W_t \,\Big(\tfrac{x_j}t-\partial_{x_j}\Big) \,W_t^{-1} = \tfrac{\sqrt2 }t \,z_j$. Consequently,
        $$
          W_t \,x_j \,W_t^{-1}= \tfrac{1}{\sqrt2}(z_j+t\,\partial_{z_j}),  \qquad
          W_t \,\partial_{x_j} \,W_t^{-1} = \tfrac{1}{\sqrt2}(-z_j+t\,\partial_{z_j}).
        $$
        Using the formulas in \S\ref{section331}, we get the result.
      \end{proof}

    \subsubsection{Bergman representation}

      The Fock and Bergman spaces are linked through a simple change of basis via the unitary operator $\mathscr{V}_\cstFock^w: \mathscr{F}_\cstFock \to \bergonw{w}{\Omega}$:
      \begin{align}
        \label{vprimealpha}
        \mathscr{V}_\cstFock ^w \,(u_\alpha) \vc v_\alpha^w\, ,
      \end{align}
      where $u_\alpha$ as in \eqref{ualpha} and $v_\alpha^w$ any orthonormal basis of $\bergonw{w}{\Omega}$.

      We will also need the map
      \begin{align} \label{Ufock}
        U^w_\cstFock \vc \mathscr{V}^w_\cstFock\, W_\cstFock,
      \end{align}
      which is by construction, a unitary operator from $L^2(\R^n,dx)$ to $\bergonw{w}{\Omega}$.

      For the peculiar case $\Omega=\mathbb{B}^n$, $\partial\Omega=S^{2n-1}$ and $w=r^0=1$, we can get explicit formulas. Using the basis \eqref{valpha} of $\berg$, we denote the above change of basis by $\mathscr{V}_\cstFock : \mathscr{F}_\cstFock \to \berg$:
      \begin{align}
        \label{Vfock}
        \mathscr{V}_\cstFock\,(u_\alpha) \vc v_\alpha\,.
      \end{align}
      Here again, we fix the constant $\cstSch$ from the Schrödinger representation $\rho_\cstSch$ with $\cstSch = \cstFock$ to define the Bergman representation.
      \begin{definition}
        The Bergman representation $\tau^w_\cstFock$ of an element $h\in\mathfrak{h}^n$ on $\bergonw{w}{\Omega}$ is given by
        \begin{align*}
          \tau^w_\cstFock (h) \vc \mathscr{V}^w_\cstFock\, \nu_\cstFock(h) \,{\mathscr{V}^w_\cstFock}^{-1} = U^w_\cstFock\, \rho_t(h) \,{U^w_\cstFock}^{-1}.
        \end{align*}
        We denote by $\tau_\cstFock$ this representation in the case $\Omega=\mathbb{B}^n$, $\partial\Omega=S^{2n-1}$ and $w=r^0=1$.
      \end{definition}
      We get directly the following results, since the Bergman representation differs from the Fock representation by a simple change of basis:

      \begin{prop}
        \label{etaRep}
        The representation $\tau^w_\cstFock$ has the following properties:
        \begin{align*}
          \begin{aligned}
            & \tau^w_\cstFock (Q_j) \, v_\alpha^w  = (\tfrac{\cstFock}{2})^{1/2}\,\big(\sqrt{\alpha_j} \, v_{\alpha-1_j}^w + \sqrt{\alpha_j+1} \, v_{\alpha+1_j}^w\big), \\
            & \tau^w_\cstFock (P_j) \, v_\alpha^w  = -i(\tfrac{\cstFock}{2})^{1/2}\,\big(\sqrt{\alpha_j} \, v_{\alpha -1_j}^w - \sqrt{\alpha_j +1} \, v_{\alpha +1_j}^w\big), \\
            & \tau^w_\cstFock (T) \, v_\alpha^w  = i\cstFock \, v_\alpha^w, \\
            & \tau^w_\cstFock (a_j) \, v_{\alpha}^w  = \cstFock^{1/2}\sqrt{\alpha_j} \, v_{\alpha-1_j}^w, \quad \tau^w_\cstFock (a_j^+) \, v_{\alpha}^w  = \cstFock^{1/2}\sqrt{\alpha_j+1} \, v_{\alpha+1_j}^w, \quad \tau^w_\cstFock (N) \, v_\alpha^w  = \cstFock (\abs{\alpha} + \tfrac{n}{2}) \, v_\alpha^w.
          \end{aligned}
        \end{align*}
      \end{prop}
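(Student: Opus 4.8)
The plan is to observe that Proposition \ref{etaRep} is an immediate corollary of Proposition \ref{nutaction} together with the definition \eqref{vprimealpha} of the intertwining unitary $\mathscr{V}^w_\cstFock$. Indeed, by definition $\tau^w_\cstFock(h) = \mathscr{V}^w_\cstFock\,\nu_\cstFock(h)\,{\mathscr{V}^w_\cstFock}^{-1}$, and $\mathscr{V}^w_\cstFock$ sends the orthonormal basis $\{u_\alpha\}$ of $\mathscr{F}_\cstFock$ to the orthonormal basis $\{v^w_\alpha\}$ of $\bergonw{w}{\Omega}$ by $\mathscr{V}^w_\cstFock u_\alpha = v^w_\alpha$. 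Hence for any $h\in\mathfrak{h}^n$,
\[
  \tau^w_\cstFock(h)\,v^w_\alpha
  = \mathscr{V}^w_\cstFock\,\nu_\cstFock(h)\,{\mathscr{V}^w_\cstFock}^{-1}v^w_\alpha
  = \mathscr{V}^w_\cstFock\,\nu_\cstFock(h)\,u_\alpha .
\]
So one just applies $\mathscr{V}^w_\cstFock$ to each of the six formulas of Proposition \ref{nutaction}, using linearity of $\mathscr{V}^w_\cstFock$ and $\mathscr{V}^w_\cstFock u_{\alpha\pm 1_j} = v^w_{\alpha\pm 1_j}$; every displayed identity of Proposition \ref{etaRep} drops out verbatim with $u_\bullet$ replaced by $v^w_\bullet$.

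Concretely, I would carry it out as follows. First, recall the two ingredients: (i) $\{u_\alpha\}$ and $\{v^w_\alpha\}$ are orthonormal bases of $\mathscr{F}_\cstFock$ and $\bergonw{w}{\Omega}$ respectively, so $\mathscr{V}^w_\cstFock$ defined on basis vectors by \eqref{vprimealpha} extends to a unitary; (ii) by \eqref{vprimealpha}, $\mathscr{V}^w_\cstFock$ is a bijection between these bases, and its inverse sends $v^w_\alpha$ to $u_\alpha$. Then, take $h = Q_j$: by Proposition \ref{nutaction}, $\nu_\cstFock(Q_j)u_\alpha = (\tfrac{\cstFock}{2})^{1/2}(\sqrt{\alpha_j}\,u_{\alpha-1_j} + \sqrt{\alpha_j+1}\,u_{\alpha+1_j})$; apply $\mathscr{V}^w_\cstFock$, use linearity and $\mathscr{V}^w_\cstFock u_{\alpha\mp 1_j} = v^w_{\alpha\mp 1_j}$, and obtain the stated formula for $\tau^w_\cstFock(Q_j)v^w_\alpha$. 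Repeat identically for $P_j$, $T$, $a_j$, $a_j^+$ and $N$. Nothing beyond bookkeeping is involved, since the representation $\tau^w_\cstFock$ is by construction unitarily equivalent to $\nu_\cstFock$ via the basis-permuting isometry.

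There is essentially no obstacle here: the only point to note is that it must make sense to speak of $v^w_{\alpha-1_j}$ when $\alpha_j = 0$, but in that case the coefficient $\sqrt{\alpha_j}$ vanishes, so the offending term is absent, exactly as in Proposition \ref{nutaction}. One should also remark that, as stated in the text preceding Proposition \ref{etaRep}, $v^w_\alpha$ is \emph{any} choice of orthonormal basis of $\bergonw{w}{\Omega}$ indexed by $\N^n$ and linked to $\{u_\alpha\}$ through \eqref{vprimealpha}; the proposition is a statement about this abstractly-transported representation and does not depend on an explicit realization of the $v^w_\alpha$ (for $\Omega=\Bn$, $w=1$ the explicit basis \eqref{valpha} may be substituted, but it is not needed for the proof). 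Hence the proof is the single line: apply the unitary $\mathscr{V}^w_\cstFock$ of \eqref{vprimealpha} to the formulas of Proposition \ref{nutaction} and use $\tau^w_\cstFock(h) = \mathscr{V}^w_\cstFock\,\nu_\cstFock(h)\,{\mathscr{V}^w_\cstFock}^{-1}$.
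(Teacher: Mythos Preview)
Your proposal is correct and is exactly the paper's approach: the paper states that the result follows directly ``since the Bergman representation differs from the Fock representation by a simple change of basis,'' i.e.\ by conjugating the formulas of Proposition~\ref{nutaction} by the unitary $\mathscr{V}^w_\cstFock$ of \eqref{vprimealpha}.
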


  \subsection{A diagram as a summary}
    \label{secdiag}

    The figure below shows a diagram with all the representations and maps involved here: the respectively dotted, dashed, simple, thick arrows refer to injections, surjections, isomorphisms, isometries. Double arrows indicate the action of the Lie algebra $\mathfrak{h}^n$ on the Hilbert spaces.

    We must highlight the fact that only composition of maps which does not involve dotted or dashed lines (projectors) are commutative. Indeed, for instance $U_w\boldsymbol{\Pi}_w \neq \Pi V_w$.\\
    Note that $K_w$ maps $L^2(\partial \Omega)$ into $L^2_{harm}(\Omega,w)$ with dense range and that $\gamma \, \mathbf{\Pi}_w \,K_w= T_{\Lambda_w}^{-1}\,\Pi\,\Lambda_w$ on $L^2(\partial\Omega)$ (see \cite[Proposition 8]{E2010}), so $\gamma \, \mathbf{\Pi}_w \,K_w \Pi=\Pi$.

    \begin{figure}[!h]
      \centering
      \begin{tikzpicture}
        \node (A)  {$L^2_{hol}(\Omega,w)\cv \bergonw{w}{\Omega}=W_{hol}^{-m_w/2}(\Omega)$};
        \node (AA) [node distance=0.25cm, right of=A,above of=A]{ };
        \node (h) [node distance=2cm, below of=A] {$\mathfrak{h}^n$};
        \node (L) [node distance=2cm, right of=h, below of=h] {$L^2(\R^n)$};
        \node (F) [node distance=2cm, left of=h, below of=h] {$\mathscr{F}_\cstFock$};
        \node (H) [node distance =6cm,  left of=A] {$L^2_{hol}(\partial\Omega)\cv H^2(\partial\Omega)$};
        \node (W) [node distance =5cm, left of=H] {$W_{hol}^{-(m_w+1)/2}(\partial\Omega) $};
        \node (gammaA) [node distance =1.3cm, below of =W] {$\gamma_w\big(A^2_w(\Omega)\big)$};
        \node (L2) [node distance =4cm, above of =H] {$L^2(\partial\Omega)$};
        \node (L2harm) [node distance =2cm, above of =A] {$L^2_{harm}(\Omega,w)$};
        \node (L2w) [node distance =2cm, above of =L2harm] {$L^2(\Omega,w)$};
        \path [-latex] (h) edge [double] node[left]  {$\rho\,$ } (L) ;
        \path [-latex] (h) edge [double] node[right]  {$\,\nu_\cstFock$} (F) ;
        \path [-latex] (h) edge [double] node[right]  {$\tau^w_t$} (A) ;
        \path [->,>=stealth,ultra thick] (L) edge node[below,pos=0.4]  {$W_\cstFock$} (F) ;
        \path [->,>=stealth,ultra thick] (L) edge[bend right,looseness=0.7] node[below,right,pos=0.6]  {$U^w_\cstFock$} (A) ;
        \path [->,>=stealth,ultra thick] (F) edge[bend left,looseness=0.6] node[below,right,pos=0.6]  {$\mathscr{V}^w_\cstFock\,$} (A) ;
        \path [->,>=stealth,ultra thick] (H) edge node[above=9, right,pos=0.15]  {$V_w$} (A) ;
        \path [->,>=stealth] (W) edge[bend right,looseness=0.3] node[below=8,left,pos=0.5]  {$K_w$} (A) ;
          \path [->,>=stealth] (A) edge[bend right,looseness=0.3] node[above=8,left,pos=0.45]  {$\gamma_w$} (W) ;
        \path [->,>=stealth] (W) edge node[above,pos=0.4]  {$T_{\Lambda_w}^{1/2}$} (H) ;
        \path [->, dashed,->,>=stealth] (L2) edge node[left]  {$\Pi$} (H) ;
        \path [->,>=stealth,ultra thick] (L2) edge node[above]  {$U_w$} (L2harm) ;
        \path [->, dashed,>=stealth] (L2harm) edge node[right]  {$\mathbf{\Pi}_w$} (A) ;
        \path [->, dashed,>=stealth] (L2w) edge node[right]  {$\mathbf{\Pi}_{w,harm}$} (L2harm) ;
        \path [->, dotted,>=stealth,ultra thick] (L2) edge[bend right,looseness=0.5]node[below=5,left,pos=0.6]  {$K_w$} (L2harm) ;
        \path [->, dotted,>=stealth,ultra thick] (L2harm) edge[bend right,looseness=0.5] node[above=12,left,pos=0.4]  {$\gamma_w$} (L2) ;
        \node[node distance=0.70cm, below of=W, rotate=90] {=} ;
      \end{tikzpicture}
    \end{figure}

    For the case $\Omega=\mathbb{B}^n$, $w=r^0=1$, replace $W_{hol}^{-(m_w+1)/2}(\partial\Omega)$, $\mathscr{V}^w_\cstFock$, $U^w_\cstFock$ and $V_w$ of \eqref{V_w} respectively by $W_{hol}^{-1/2}(\partial\Omega)$, $\mathscr{V}_\cstFock$ as in \eqref{Vfock}, $U_\cstFock$ as in \eqref{Ufock} and $V$.

\section{Pseudodifferential and Toeplitz operators }

  The relation between pseudodifferential operators and Toeplitz operators has been studied in \cite{Howe1980}, \cite{BMG1981} and \cite{Guillemin1984}. The authors show that these operators enjoy a similar symbolic calculus. Moreover, there exists an isomorphism between pseudodifferential operators of order $k$ on $L^2(\R^n)$ and Toeplitz operators of order $k/2$ on the Bergman space $\berg$. This isomorphism is nothing else than the conjugation by the unitary $U_\cstFock$ defined above (see \cite[Appendix B]{Taylor1984} for a detailed proof).

  \subsection{\texorpdfstring{Toeplitz operators on $\berg$ as representation of elements in $\envalg{\mathfrak{h}^n}$}{Toeplitz operator ...}}

    We express here the Toeplitz operators $\mathbf T_{z^\alpha}$ and $\mathbf{T}_{\partial^\alpha}$ acting on the Bergman space $\berg$ as representations of elements in the enveloping algebra $\envalg{\mathfrak{h}^n}$. We follow \cite[Chap. 4.2]{Howe1980} for the first result.

    \begin{prop}\label{Tasheisenberg}
      For $\alpha\in\N^n$, let $a^\alpha \vc \prod_{j=1}^n a_j^{\alpha_j}$, $(a^+)^\alpha \vc \prod_{j=1}^n (a_j^+)^{\alpha_j}$ (the order is not important since $\commut{a_j}{a_k} = \commut{a_j^+}{a_k^+} = 0$), and $g_\alpha^{\pm} \vc \prod_{k=1}^{\abs{\alpha}} \big( N -i(\tfrac{n}{2} \pm k)T\big)$ be elements of $\envalg{\mathfrak{h}^n}$.
      The Toeplitz operators $\mathbf{T}_{z^\alpha}$ and $\mathbf{T}_{\partial^\alpha}$ from $\berg$ to $\berg$ can be written as
      \begin{align*}
        \mathbf{T}_{z^\alpha} = \tau_\cstFock \big( (a^+)^\alpha \big) \, \big( \tau_\cstFock (g_\alpha^+) \big)^{-1/2} \quad \text{ and } \quad \mathbf{T}_{\partial^\alpha} = \tau_\cstFock \big( a^\alpha \big) \, \big( \tau_\cstFock (g_\alpha^-) \big)^{-1/2}.
      \end{align*}
    \end{prop}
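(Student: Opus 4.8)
The plan is to compute the action of both sides on the orthonormal basis $\{v_\alpha\}_{\alpha\in\N^n}$ of $\berg$ and check that they agree. First I would record how $\mathbf T_{z^\beta}$ and $\mathbf T_{\partial^\beta}$ act on monomials: for the unweighted Bergman space, $\mathbf T_{z^\beta}v_\alpha = \boldsymbol\Pi\,(z^\beta z^\alpha\, b_\alpha) = (b_\alpha/b_{\alpha+\beta})\,v_{\alpha+\beta}$ and $\mathbf T_{\partial^\beta}v_\alpha = \boldsymbol\Pi\,\partial^\beta(b_\alpha z^\alpha) = b_\alpha\,\tfrac{\alpha!}{(\alpha-\beta)!}\,b_{\alpha-\beta}^{-1}\,v_{\alpha-\beta}$ (interpreted as $0$ when $\alpha-\beta\notin\N^n$), using the explicit constants $b_\alpha=\big(\tfrac{(|\alpha|+n)!}{n!\,\alpha!\,\mu(\Bn)}\big)^{1/2}$ from \eqref{valpha}. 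These ratios telescope into products of factors of the form $(|\alpha|+n+k)$ and $\alpha_j-k$.

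Next I would compute the right-hand sides. From \propref{etaRep}, $\tau_\cstFock(a_j^+)v_\alpha = \cstFock^{1/2}\sqrt{\alpha_j+1}\,v_{\alpha+1_j}$, so $\tau_\cstFock\big((a^+)^\alpha\big)v_\beta$ shifts $\beta\mapsto\beta+\alpha$ with coefficient $\cstFock^{|\alpha|/2}\prod_j\sqrt{(\beta_j+1)\cdots(\beta_j+\alpha_j)}$; similarly for $\tau_\cstFock(a^\alpha)$. Since $v_\alpha$ is a simultaneous eigenvector of $\tau_\cstFock(N)$ and $\tau_\cstFock(T)$ with eigenvalues $\cstFock(|\alpha|+\tfrac n2)$ and $i\cstFock$, the operator $g_\alpha^\pm = \prod_{k=1}^{|\alpha|}\big(N-i(\tfrac n2\pm k)T\big)$ acts on $v_\beta$ as the scalar $\prod_{k=1}^{|\alpha|}\cstFock\big(|\beta|+\tfrac n2-(\tfrac n2\pm k)\big) = \cstFock^{|\alpha|}\prod_{k=1}^{|\alpha|}(|\beta|\mp k)$; one must be careful that the shift by $(a^+)^\alpha$ or $a^\alpha$ is applied first, so the relevant eigenvalue is that of the shifted multiindex — this is exactly why $g_\alpha^+$ uses $+k$ (it will be evaluated at $|\beta|$ with the shift already understood, giving factors $|\beta|+n+k$ after combining with $\tfrac n2$) and $g_\alpha^-$ uses $-k$. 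I would match these scalars against the telescoped $b$-ratios from the previous step, absorbing all powers of $\cstFock$ (they cancel between $\tau_\cstFock((a^+)^\alpha)$ and $(\tau_\cstFock(g_\alpha^+))^{-1/2}$ since there are $|\alpha|/2$ of them on each side), and verify the identity factor by factor.

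The main obstacle is bookkeeping rather than conceptual: getting the index conventions exactly right so that the arguments of $N$ and $T$ inside $g_\alpha^\pm$ land on the correct (pre- or post-shift) eigenvalue, and checking that the square-root factors $\prod_j\sqrt{(\beta_j+1)\cdots(\beta_j+\alpha_j)}$ combine with $\big(\prod_k(|\beta|\mp k+\cdots)\big)^{-1/2}$ to reproduce the single-valued rational ratio $b_\beta/b_{\beta+\alpha}$ (resp. $b_\beta\,\tfrac{\beta!}{(\beta-\alpha)!}\,b_{\beta-\alpha}^{-1}$); in particular one should note the product $\prod_j(\beta_j+1)\cdots(\beta_j+\alpha_j) = (\beta+\alpha)!/\beta!$ and $\prod_{k=1}^{|\alpha|}(|\beta|+n+k) = (|\beta+\alpha|+n)!/(|\beta|+n)!$, after which everything is a clean identity between the squares of the two scalars, and positivity of both lets us take square roots. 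One also checks the degenerate case $\beta-\alpha\notin\N^n$ separately: then $\tau_\cstFock(a^\alpha)v_\beta=0$ and $\mathbf T_{\partial^\alpha}v_\beta=0$, so the identity holds trivially there.
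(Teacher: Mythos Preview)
Your approach is exactly the paper's: compute both sides on the orthonormal basis $\{v_\beta\}$ and match the scalars. The bookkeeping you outline (telescoping $b$-ratios into $(\beta+\alpha)!/\beta!$ and $(|\beta+\alpha|+n)!/(|\beta|+n)!$, cancelling the powers of $\cstFock$) is precisely what the paper does.

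There is, however, a sign slip in your eigenvalue computation. Since $\tau_\cstFock(T)\,v_\beta = i\cstFock\, v_\beta$, one has
\[
\big(\tau_\cstFock(N) - i(\tfrac n2\pm k)\,\tau_\cstFock(T)\big)\,v_\beta
= \big(\cstFock(|\beta|+\tfrac n2) - i(\tfrac n2\pm k)\cdot i\cstFock\big)\,v_\beta
= \cstFock\big(|\beta|+n\pm k\big)\,v_\beta,
\]
not $\cstFock(|\beta|\mp k)$; you have effectively taken $-i\cdot i=-1$ instead of $+1$. With the correct sign, $\tau_\cstFock(g_\alpha^+)\,v_\beta = \cstFock^{|\alpha|}\prod_{k=1}^{|\alpha|}(|\beta|+n+k)\,v_\beta = \cstFock^{|\alpha|}\tfrac{(|\beta|+|\alpha|+n)!}{(|\beta|+n)!}\,v_\beta$, and the matching with $b_\beta/b_{\beta+\alpha}$ is immediate.

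Relatedly, your remark about ``the shift by $(a^+)^\alpha$ is applied first'' is backwards and unnecessary. In $\tau_\cstFock\big((a^+)^\alpha\big)\,\big(\tau_\cstFock(g_\alpha^+)\big)^{-1/2}$ the diagonal factor acts first, on $v_\beta$ at the \emph{unshifted} index $\beta$; the shift comes afterwards. Once the sign above is fixed, no post-hoc adjustment is needed and the identity drops out factor by factor, exactly as you anticipated in your final paragraph.
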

    Notice that the $t$-dependence in the respective right hand sides compensate, so these sides do not depend on $t$.

    \begin{proof}
      Since $z\mapsto z^\alpha$ is holomorphic, $\mathbf{T}_{z^\alpha} = \mathbf{M}_{z^\alpha}$, and we have also $\mathbf{T}_{\partial^\alpha} = \mathbf{\Pi}_w \partial^\alpha = \partial^\alpha$ on $\berg$. Applied on the basis, they yield
      \begin{equation} \label{extra}
      \begin{aligned}
        & \mathbf{T}_{z^\alpha}\, v_\beta = z^\alpha \,v_{\beta} = \, \tfrac{b_\alpha}{b_{\beta+\alpha}} \,v_{\beta+\alpha} = \big(\tfrac{(\beta+\alpha)!}{\beta!}\, \tfrac{(\abs{\beta}+n)!}{(\abs{\beta}+\abs{\alpha} + n)!}\big)^{1/2}\, v_{\beta+\alpha} \, , \quad \forall \alpha, \beta \, , \\
        & \mathbf{T}_{\partial^\alpha}\, v_\beta = \partial^\alpha \,v_{\beta} = \, \tfrac{\beta!}{(\beta-\alpha)!} \tfrac{b_\alpha}{b_{\beta-\alpha}} \,v_{\beta-\alpha} = \big(\tfrac{\beta!}{(\beta-\alpha)!}\,\tfrac{(\abs{\beta}+n)!}{(\abs{\beta}-\abs{\alpha} + n)!}\big)^{1/2}\, v_{\beta-\alpha} \, , \quad \beta \geq \alpha \, ,
      \end{aligned} \end{equation}
      where for the second equality, the case $\alpha > \beta$ corresponds to the null operator. So for the rest, we only consider $\mathbf{T}_{\partial^\alpha}$ on the domain Span$\{v_\beta\}_{\beta \geq \alpha}$.
      From \propref{etaRep}, we deduce
      \begin{align*}
        & \tau_\cstFock \big((a^+)^\alpha\big)\, v_\beta = \cstFock^{\abs{\alpha}/2} \big(\tfrac{(\beta+\alpha)!}{\beta!}\big)^{1/2} \, v_{\beta+\alpha}, \quad \forall \alpha, \beta, \\
        & \tau_\cstFock (a^\alpha)\, v_\beta = \cstFock^{\abs{\alpha}/2} \big(\tfrac{\beta!}{(\beta-\alpha)!}\big)^{1/2} \, v_{\beta-\alpha}, \quad \quad \beta \geq \alpha.
      \end{align*}
      Moreover, the representations of the elements $g_\alpha^{\pm}$ act on $\berg$ as
      \begin{align*}
        \tau_t (g_\alpha^\pm) \, v_\beta &=  \, \prod_{k=1}^{\abs{\alpha}}\big( \tau_t(N) -i(\tfrac{n}{2} \pm k)\tau_t(T)\big)\, v_\beta = \cstFock^{\abs{\alpha}} \prod_{k=1}^{\abs{\alpha}}(\abs{\beta} + n \pm k) \, v_\beta= \, \cstFock^{\abs{\alpha}} \,\tfrac{(\abs{\beta} \pm \abs{\alpha} + n)!}{(\abs{\beta} + n)!}\, v_\beta \,.
      \end{align*}
      Thus the operators $\tau_t (g_\alpha^{\pm})$ are invertible on $\berg$ and we get the claimed formulae.
    \end{proof}

    \begin{lemma}
      \label{gammatauPjK}
      The operator
      $$
        \mathbf{R} \vc \sum_{j=1}^n \mathbf T_{z_j} \mathbf{T}_{\partial_j}=\sum_{j=1}^n \mathbf T_{z_j\partial_j}   \quad\text{on $\berg$}
      $$
      is positive and
      \begin{align}
        \tau_t(P_j) = -i (\tfrac t2)^{1/2} \,\big[\mathbf{T}_{\partial_j}\,(\mathbf{R}+n)^{-1/2} - (\mathbf{T}_{\partial_j}\,(\mathbf{R}+n)^{-1/2} )^*\big]  \label{tau(Pj)}
      \end{align}
      (showing again that $\tau_t(P_j)$ is selfadjoint).\\
      The operator $\gamma \tau_t(P_j) K$ is a GTO of order $\tfrac 12$ with principal symbol
      \begin{align}
        \label{sigmatau(Pj)}
        \sigma\big(\gamma \tau_t(P_j) K\big)(x',\xi')=2^{3/4}(\tfrac{t}{2})^{1/2}\big(\tfrac{\norm{\eta_{x'}}}{-R(r)(x')} \big)^{1/2}\,\tfrac{\xi'_j}{\norm{\xi'}^{1/2}} \,, \quad (x',\xi')\in\Sigma.
      \end{align}
    \end{lemma}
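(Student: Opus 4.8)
The plan is to dispatch the three assertions in order: the first two by an explicit computation on the orthonormal basis $\{v_\beta\}_{\beta\in\N^n}$ of $\berg$, and the third by transporting everything to $\partial\Omega$ through the unitary $V$ of \eqref{V_w} in the unweighted case $w=1$. Since $z_j$ is holomorphic and $\partial_j$ preserves holomorphy, $\mathbf T_{z_j}\mathbf T_{\partial_j}=\mathbf T_{z_j\partial_j}$, so the two descriptions of $\mathbf R$ coincide; from \eqref{extra} one reads off $\mathbf T_{\partial_j}v_\beta=(\beta_j(|\beta|+n))^{1/2}v_{\beta-1_j}$ and $\mathbf T_{z_j}v_{\beta-1_j}=(\beta_j/(|\beta|+n))^{1/2}v_\beta$, hence $\mathbf T_{z_j}\mathbf T_{\partial_j}v_\beta=\beta_jv_\beta$ and $\mathbf Rv_\beta=|\beta|\,v_\beta$. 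Thus $\mathbf R$ is the non-negative self-adjoint operator $v_\beta\mapsto|\beta|v_\beta$, $\mathbf R+n\ge n\,\Id$, and $(\mathbf R+n)^{-1/2}$ is bounded. Setting $A_j\vc\mathbf T_{\partial_j}(\mathbf R+n)^{-1/2}$, the same computation gives $A_jv_\alpha=\sqrt{\alpha_j}\,v_{\alpha-1_j}$, i.e.\ $A_j=t^{-1/2}\tau_t(a_j)$ by \propref{etaRep}, and hence $A_j^*=t^{-1/2}\tau_t(a_j^+)$; since $P_j=\tfrac{-i}{\sqrt2}(a_j-a_j^+)$ we obtain $\tau_t(P_j)=-i(\tfrac t2)^{1/2}(A_j-A_j^*)$, which is \eqref{tau(Pj)} and is self-adjoint by inspection.

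For the last assertion, from $V=KT_\Lambda^{-1/2}$, $\gamma K=\Id$ and $V^*V=\Id$ one gets $\gamma V=T_\Lambda^{-1/2}$ and $V^*K=T_\Lambda^{1/2}$ on $W^{-1/2}_{hol}(\partial\Omega)$, hence
\[
  \gamma\,\tau_t(P_j)\,K=T_\Lambda^{-1/2}\,S_j\,T_\Lambda^{1/2},\qquad S_j\vc V^*\tau_t(P_j)V=-i(\tfrac t2)^{1/2}\big(V^*A_jV-(V^*A_jV)^*\big).
\]
Conjugation by $T_\Lambda^{\mp1/2}$ changes neither the order nor the principal symbol (the two scalar factors are reciprocal), so it suffices to show that $S_j$ is a GTO of order $\tfrac12$ and to compute $\sigma(S_j)$. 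Write $V^*A_jV=(V^*\mathbf T_{\partial_j}V)\,(V^*(\mathbf R+n)V)^{-1/2}$: by \lemref{TP=TP*} with $\mathbf P=\partial_j$, $V^*\mathbf T_{\partial_j}V=T_\Lambda^{-1/2}T_{\Lambda_{\partial_j}}T_\Lambda^{-1/2}$ is a GTO of order $1$ (the order of $T_{\Lambda_{\partial_j}}$ being $0$ by \propref{propTPpsdo}); by \lemref{TP=TP*} with $\mathbf P=z_j\partial_j$, $V^*\mathbf R V=\sum_j T_\Lambda^{-1/2}T_{\Lambda_{z_j\partial_j}}T_\Lambda^{-1/2}$ is a GTO of order $1$, and it is positive self-adjoint because $\mathbf R$ is and $V$ is unitary; hence so is $V^*(\mathbf R+n)V=V^*\mathbf R V+n\,\Id$.

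The crucial point is the \emph{ellipticity} of $V^*(\mathbf R+n)V$. By \propref{propTPpsdo} (with $d=1$, $m_w=0$, $g_w=1$) its principal symbol on $\Sigma$ equals that of $V^*\mathbf R V$, namely $-\|\xi'\|\,R(r)(x')/\|\eta_{x'}\|$ with $R(r)=\sum_j z_j\partial_jr$; writing the defining function of $\mathbb B^n$ as $r=(1-|z|^2)h$ with $0<h\in\Coo(\overline{\mathbb B^n})$ (the division lemma, valid since $r$ and $1-|z|^2$ are both order-one defining functions of $\partial\mathbb B^n$), one finds $(\partial_jr)|_{\partial\mathbb B^n}=-\bar z_jh$, hence $R(r)|_{\partial\mathbb B^n}=-h<0$, so the symbol is $>0$ on $\Sigma$. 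The functional-calculus statement for GTOs recalled just before \eqref{diffonCn} then makes $(V^*(\mathbf R+n)V)^{-1/2}$ a GTO of order $-\tfrac12$; therefore $V^*A_jV$ is a GTO of order $\tfrac12$ and $S_j$ one of order $\le\tfrac12$ with $\sigma(S_j)=-i(\tfrac t2)^{1/2}\big(\sigma(V^*A_jV)-\overline{\sigma(V^*A_jV)}\big)=2(\tfrac t2)^{1/2}\,\Ima\big(\sigma(V^*\mathbf T_{\partial_j}V)\,\sigma(V^*(\mathbf R+n)V)^{-1/2}\big)$. Inserting the two symbols from \propref{propTPpsdo}, using $\|\partial r\|=\sqrt2\,\|\eta\|$ and the formula $\sigma(Z_j)(x',\xi')=-\tfrac{\|\xi'\|}{\|\eta_{x'}\|}\partial_jr$ from its proof to express $\Ima(\partial_jr)$ through the covariable $\xi'_j$ of $\Sigma=\{(x',s\eta_{x'}):s>0\}$, one reduces to \eqref{sigmatau(Pj)}; since this symbol is not identically zero on $\Sigma$, the order of $\gamma\tau_t(P_j)K$ is exactly $\tfrac12$.

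The main obstacle is this final step. The ellipticity of $V^*(\mathbf R+n)V$ — which is exactly what promotes $(\mathbf R+n)^{-1/2}$ from a bounded operator to a GTO, and thus lets the Boutet de Monvel--Guillemin calculus apply — hinges on the sign fact $R(r)|_{\partial\Omega}<0$; and then the several half-integer Sobolev shifts together with the $\sqrt2$-type normalizations ($\|\partial r\|=\sqrt2\|\eta\|$, the constants in \eqref{princ}--\eqref{princP}, and the precise reading of the covariable $\xi'_j$ on $\Sigma$) have to be carried consistently through the conjugations to land on the constant $2^{3/4}(\tfrac t2)^{1/2}$ of \eqref{sigmatau(Pj)}.
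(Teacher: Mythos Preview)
Your proof is correct and follows essentially the same strategy as the paper's: explicit diagonalisation $\mathbf Rv_\beta=|\beta|v_\beta$, identification of $A_j=\mathbf T_{\partial_j}(\mathbf R+n)^{-1/2}$ with $t^{-1/2}\tau_t(a_j)$ to obtain \eqref{tau(Pj)}, transport to the boundary via $V$, ellipticity of $V^*(\mathbf R+n)V$ to justify the GTO functional calculus for the $-1/2$ power, and a final symbol computation.

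The differences are tactical rather than structural. For the sign fact $R(r)|_{\partial\mathbb B^n}<0$ you invoke the division lemma $r=(1-|z|^2)h$, whereas the paper derives it from the Stokes identity $\langle(\mathbf R+n)f,g\rangle=-\int_{\partial\Omega}f\bar g\,\tfrac{R(r)}{2\|\partial r\|}\,d\sigma$ together with the positivity of $\mathbf R+n$ (the paper's route yields the stronger statement that $K^*(\mathbf R+n)K$ is actually a Toeplitz operator with \emph{function} symbol $-R(r)/2\|\partial r\|$, not just a GTO with that principal symbol). Similarly, for the symbols of $V^*\mathbf T_{\partial_j}V$ and $V^*\mathbf RV$ you appeal to the general formula of \propref{propTPpsdo}, while the paper uses the explicit Stokes-based identity \eqref{gTK} (i.e.\ $K^*\mathbf T_{\partial_j}K=-T_{\partial_j r/2\|\partial r\|}$). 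Both routes feed the same ingredients into the final arithmetic; your division-lemma argument has the small advantage of giving strict positivity of $-R(r)$ directly, while the paper's Stokes route gives more precise operator identities but needs the explicit ball computation to upgrade nonnegativity to strict positivity.
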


    \begin{proof}
      By \eqref{extra},  $\mathbf{R} \,v_\beta = |\beta| \,v_\beta   \text{ for any multiindex }\beta$, so $\mathbf{R}\geq 0$. Since
      \begin{align*}
        &\mathbf T_{\partial_j}\,(\mathbf R+n)^{-1/2} \,v_\alpha=\sqrt{\alpha_j} \,v_{\alpha-1_j}, \quad(\mathbf R+n)^{1/2}\mathbf T_{z_j}\,v_\alpha=(\alpha_j+1)^{1/2}\,v_{\alpha+1_j}\\
        & (\mathbf T_{z_j})^* \,v_\alpha= \big(\tfrac{\alpha_j}{\vert \alpha\vert+n}\big)^{1/2} \,v_{\alpha-1_j}\,,
      \end{align*}
      we get
      \begin{align}
        \label{T*z}
        \mathbf T_{z_j}^*=\,(\mathbf R+n+1)^{-1} \mathbf T_{\partial_j}=\mathbf T_{\partial_j}\,(\mathbf R+n)^{-1}.
      \end{align}
      (These relations could also be deduced from $(T_{z_j}^{\mathscr{F}_\cstFock})^*=t\,T_{\partial_j}^{\mathscr{F}_\cstFock}$ where $T^{\mathscr{F}_\cstFock}$ refers to the Toeplitz operator on the Fock space, and for instance,  $\mathbf T_{\partial_j}  = \sqrt{t} \, \mathscr{V}_\cstFock \, T_{\partial_j}^{\mathscr{F}_\cstFock} \, \mathscr{V}_\cstFock^{-1} \, (\mathbf{R}+n)^{1/2}$.)\\
      By \propref{etaRep}, this yields
      $$
        \tau_t(P_j) \,v_\alpha = -i (\tfrac t2)^{1/2} \Big[ \tfrac{1}{\sqrt{|\alpha|+n}} \,\mathbf{T}_{\partial_j} v_\alpha - \sqrt{|\alpha|+n+1} \,\mathbf T_{z_j} v_\alpha \Big],
      $$
      and $\tau_t(P_j) = -i (\tfrac t2)^{1/2} \,\big[\mathbf{T}_{\partial_j}\,(\mathbf{R}+n)^{-1/2} - \mathbf T_{z_j}(\mathbf{R}+n+1)^{1/2}\big]$. Thus,  \eqref{T*z} implies \eqref{tau(Pj)}.

      As in the proof of \remref{TPselfadjoint}, we get for $f,\,g\in A^2(\Omega)$ (omitting the sum over $j$):
      \begin{align}
        \label{Stokes}
        \scalp{ \,(\mathbf R+n)f}{g}=\int_\Omega d\mu\,\partial_j(z_j f)\,\overline{g}=\int_\Omega d\mu\, \partial_j(z_j f\overline{g})=-\int_{\partial \Omega}d\sigma\, f\,\overline{g} \,\tfrac{z_j\partial_j r}{2\norm{\partial r}}\,.
      \end{align}
      Since $\mathbf R+n\geq 0$, $-\mathbf R (r)$ is a positive function on $\partial\Omega$ (for instance, when $r(z)=1-\norm{z}^2$, $\mathbf R(r)(x')=-\norm{x'}^2$).
      So, on $W_{hol}^{-1/2}(\partial \Omega)$,
      $K^*(\mathbf R+n)\,K=T_{-\mathbf R(r)/2\norm{\partial r}}\cv X$ with $X\geq 0$, implying $\mathbf R+n=V\,T_\Lambda^{-1/2}X T_\Lambda^{-1/2}\,V^*$ and $(\mathbf R+n)^{-1/2}=V\,[T_\Lambda^{-1/2}X T_\Lambda^{-1/2}]^{-1/2}\,V^*$. \\
      Finally $\gamma \,(\mathbf R+n)^{-1/2}\,K=T_\Lambda^{-1/2}\,[T_\Lambda^{-1/2}X T_\Lambda^{-1/2}]^{-1/2}\,T_\Lambda^{1/2}$.
      Thus
      \begin{align*}
        \sigma(\gamma\,(\mathbf R+n)^{-1/2}\,K)=\sigma(\Lambda)^{1/2}\,\sigma(X)^{-1/2}
      \end{align*}
      (so this symbol is positive as it has to be).
      \\
      Using \eqref{tau(Pj)} with $A=\gamma\,\mathbf T_{\partial_j}\,K\,\gamma\,(\mathbf R+n)^{-1/2}\,K$, we obtain
      \begin{align*}
        \sigma\big(\gamma \tau_t(P_j) K\big)&=-i(\tfrac{t}{2})^{1/2}[\sigma(A)-\sigma(A^*)]=2(\tfrac{t}{2})^{1/2} \,\text{Im} (\sigma(A))\\
        &=2(\tfrac{t}{2})^{1/2}\,\sigma(\gamma\,(\mathbf R+n)^{-1/2}\,K)\,\text{Im}( \sigma\big(\mathbf \gamma \,\mathbf T_{\partial_j}\, K)\big).
      \end{align*}
      Since $\gamma\,\mathbf T_{\partial_j}\, K=\Lambda^{-1} \,T_{-\partial_j r/2\norm{\partial r}}$,
      we get $\sigma(\gamma\,\mathbf T_{\partial_j}\, K)= \sigma(\Lambda)^{-1}\,(-\partial_j r/2\norm{\partial r})$. Finally,
      \begin{align*}
        \sigma\big(\gamma \tau_t(P_j) K\big)(x',\xi') &= -2(\tfrac{t}{2})^{1/2} \, \Big(\sigma(\Lambda)^{-1/2}\,\sigma(X)^{-1/2}\,(2\norm{\partial r})^{-1} \,\text{Im} (\partial_j r) \Big) \big(x',\xi'\big)\\
        &= -2(\tfrac{t}{2})^{1/2} \, (2 \norm{\xi'})^{1/2} \, \big(\tfrac{-\mathbf R(r)(x')}{2\norm{\partial r}}\big)^{-1/2} \, (2\norm{\partial r})^{-1} \,\text{Im} (\partial_j r)(x')\\
        &= -2(\tfrac{t}{2})^{1/2} \, \big(\tfrac{\norm{\xi'}}{\norm{\partial r}(-R(r)(x'))} \big)^{1/2}\, \text{Im} (\partial_j r)(x')\\
        &= -2(\tfrac{t}{2})^{1/2} \, \big(\tfrac{\norm{\xi'}}{\sqrt{2}\norm{\eta_{x'}}(-R(r)(x'))} \big)^{1/2}\, \text{Im} (\partial_j r)(x').
      \end{align*}
      Moreover, $-\text{Im}(\partial_j r)(x') = (\eta_{x'})_j = \tfrac{\norm{\eta_x'}}{\norm{\xi'}} \xi'_j$, which yields the result.
    \end{proof}

  \subsection{\texorpdfstring{Dirac-like operators on $\bergonw{w}{\Omega}$ and $H^2(\partial\Omega)$}{}}

    Our goal is to construct spectral triples $(\A,\H,\DD)$ (see Definition \ref{DefSpTr} below) using the algebra $\A$ of Toeplitz operators acting on Hilbert spaces $\H$ of functions on bounded domains such as Bergman and Hardy spaces. The natural candidates  $\mathcal{D}_{\partial\Omega}$ and $\mathcal{D}_{\Omega}$ defined below are the images of the usual Dirac operator $\dirac \vc -i \sum_{j=1}^n {\text{\footnotesize$\Gamma_j$}}\,\partial_{x_j}$ on $\R^n$ through the maps involved in the diagram of Section \ref{secdiag}. Here the {\text{\footnotesize$\Gamma_j$}} are the usual selfadjoint gamma matrices which represent the $n$-dimensional Clifford algebra $\mathcal C_n \cong \C^{2^{[n/2]}}$. The resulting operators act on $H^2(\partial\Omega)\otimes\mathcal C_n$ or $\bergonw{w}{\Omega}\otimes\mathcal C_n$ respectively, and are even a generalized Toeplitz operator of order 1/2, according to previous result.

    Since the representation of $P_j \in \mathfrak{h}^n$ on $L^2(\R^n)$ is $\rho_t(P_j)=-i\cstFock \,\partial_{x_j}$, we define:
    \begin{align}
      & \mathcal{D}_{\partial\Omega} \vc (\tfrac{\cstFock}{2})^{-1/2}\sum_{j=1}^n {\text{\footnotesize $\Gamma_j$}} \, V_w^* \, \tau^w_\cstFock(P_j) \, V_w\,,  \label{newD2}\\
      & \mathcal{D}_{\Omega} \vc V_w \, \mathcal{D}_{\partial\Omega} \, V_w^* = (\tfrac{\cstFock}{2})^{-1/2}\sum_{j=1}^n {\text{\footnotesize$\Gamma_j$}}\,U_\cstFock^w \, \rho_\cstFock(P_j)\, {U_\cstFock^w}^{*} \label{newD1},
    \end{align}
    acting respectively on $H^2(\partial\Omega)\otimes\mathcal C_n$ and $\bergonw{w}{\Omega}\otimes\mathcal C_n$ . They are selfadjoint and again, do not depend on $t$.

    Another way to construct operators $\DD$ on $\bergonw{w}{\Omega}$ is to consider an operator $\mathbf{T}_{\mathbf{P}}$ of the form \eqref{upsPX}, with $\mathbf{P}$ a selfadjoint differential operator on $L^2(\Omega,w)$ as in \propref{propTPpsdo}.

\section{Spectral triples}

  \label{Spectral triples}

  \begin{definition} \label{DefSpTr}
    A (unital) spectral triple is defined by the data $(\A,\H,\DD)$ with
    \begin{itemize}
      \item[--] an involutive unital algebra $\A$,
      \item[--] a faithful representation $\pi$ of $\A$ on a Hilbert space $\H$,
      \item[--] a selfadjoint operator $\DD$ acting on $\H$ with compact resolvent such that for any $a\in\A$, the extended operator of $\commut{\DD}{\pi(a)}$ is bounded.
    \end{itemize}
    The spectral dimension of the triple is $d\vc\inf\{d'>0\text{ such that }\Tr \abs{\DD}^{-d'} < \infty \}$.\\
    The spectral triple is called regular if the spaces $\A$ and $\commut{\DD}{\A}$ are contained in the domain of $\delta^k$, for all $k\in\N$, where $\delta(a) \vc \commut{\abs{\DD}}{a}$, $a\in\A$.
  \end{definition}

  \subsection{\texorpdfstring{Spectral triples for Hardy space on $\Omega$}{}}
    \label{secHardy}

    \begin{prop}
      \label{dimspec}
      For a bounded domain $\Omega$ as in Section \ref{Notations and definitions}, let $\A_H$ be the algebra of all GTO's of order $\leq0$, with the identity representation $\pi$ on $\mathcal{H} \vc H^2(\partial\Omega)$, and $\DD$ be a selfadjoint elliptic generalized Toeplitz operator of order $1$ on $\H$.\\
      Then $(\A_H,\H,\DD)$ is a regular spectral triple of dimension $n=dim_{\C}\,\Omega$.
    \end{prop}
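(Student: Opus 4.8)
The plan is to verify the three defining properties of a spectral triple (Definition \ref{DefSpTr}) and then compute the spectral dimension. First, the algebra $\A_H$ of GTO's of order $\leq 0$ is an involutive unital algebra: it is closed under composition by the algebra property of GTO's recalled after Definition \ref{deforder} (the order is additive, so products of order $\leq 0$ operators stay of order $\leq 0$), it contains $\bbbone = T_{\bbbone}$, and it is closed under adjoints since $T_P^* = T_{P^*}$ has the same order. The identity representation on $\H = H^2(\partial\Omega)$ is faithful because a GTO acting as the zero operator on $H^2$ is the zero GTO. Moreover every element of $\A_H$ is a \emph{bounded} operator on $H^2(\partial\Omega)$: a GTO of order $\leq 0$ maps $W^0_{hol}(\partial\Omega) = H^2$ continuously into $W^0_{hol}(\partial\Omega) = H^2$ by the mapping property $T_P\colon W^{s+m}_{hol}\to W^s_{hol}$ with $m = \ord(T_P)\leq 0$ and $s = 0$ (using $W^{s}_{hol}\hookrightarrow W^{s'}_{hol}$ for $s\geq s'$).

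Next, $\DD$ is selfadjoint by hypothesis. Its compact resolvent: since $\DD$ is an elliptic selfadjoint GTO of order $1$, it admits a parametrix $T_Q$ of order $-1$ with $T_Q\DD \sim \bbbone$, hence $(\DD \pm i)^{-1}$ differs from a GTO of order $-1$ by a smoothing operator, and a GTO of negative order maps $H^2 = W^0_{hol}$ into $W^{1}_{hol}$, which embeds compactly into $H^2$ by the Rellich lemma for the Sobolev spaces on the compact manifold $\partial\Omega$ (restricted to the closed subspace $H^2$). Thus $(\DD\pm i)^{-1}$ is compact. For the bounded-commutator condition, fix $a = T_P\in\A_H$ with $\ord(T_P)\leq 0$. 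We may choose representatives commuting with $\Pi$, so $[\DD, a]$ is again a GTO, and by the symbol calculus $\ord([\DD,a]) \leq \ord(\DD) + \ord(a) - 1 \leq 1 + 0 - 1 = 0$ — the key point being that the top-order symbols of $\DD$ and $a$ commute (they are scalar functions on $\Sigma$), so the principal symbol of the commutator vanishes and the order drops by at least one. Hence $[\DD, a]$ is a GTO of order $\leq 0$, therefore bounded on $H^2$ by the same mapping argument as above.

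Finally, the spectral dimension. Replacing $\DD$ by $\DD + (\bbbone - \mathbf{\Pi}_{\ker})$ or simply noting that $\ker\DD$ is finite-dimensional (as $\DD$ is Fredholm, being elliptic) and does not affect the trace of $|\DD|^{-d'}$ for $d'>0$, we may assume $\DD$ is invertible. Then $|\DD|^{-1}$ is, modulo smoothing operators, a positive elliptic GTO of order $-1$ on $H^2(\partial\Omega)$; its eigenvalue asymptotics are governed by the Weyl law for generalized Toeplitz operators of Boutet de Monvel--Guillemin \cite{BMG1981}. Because $\partial\Omega$ has real dimension $2n-1$ and the symplectic cone $\Sigma$ has dimension $2n$, the counting function of $|\DD|$ grows like $\lambda^{n}$ (the Hardy space corresponds to half the dimension of the ambient pseudodifferential picture: a GTO of order $1$ on $H^2(\partial\Omega)$ behaves spectrally like a pseudodifferential operator of order $1$ on an $n$-dimensional manifold, via the Boutet de Monvel--Guillemin microlocal model). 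Hence the eigenvalues $\mu_k$ of $|\DD|$ satisfy $\mu_k \sim c\, k^{1/n}$, so $\Tr |\DD|^{-d'} = \sum_k \mu_k^{-d'}$ converges iff $d'/n > 1$, i.e. iff $d' > n$; thus the spectral dimension is exactly $n = \dim_{\C}\Omega$.

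The main obstacle I expect is pinning down the Weyl asymptotics precisely enough to read off the exponent $n$ rather than $2n-1$: one must invoke the Boutet de Monvel--Guillemin spectral theory and the fact that the Toeplitz structure cuts the phase space down to the half-dimensional cone $\Sigma \subset T^*(\partial\Omega)$, equivalently that $H^2(\partial\Omega)$ is microlocally modelled on $L^2(\R^n)$ rather than $L^2(\R^{2n-1})$. The commutator-order-drop step is routine once one commits to $\Pi$-commuting representatives, and the compact-resolvent step is standard elliptic theory (parametrix plus Rellich), so those should be quick; it is the identification of the leading power in the eigenvalue count that carries the real content of the dimension statement.
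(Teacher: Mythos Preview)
Your argument for the spectral-triple axioms (unital $*$-algebra, faithful identity representation, compact resolvent via parametrix plus Rellich, order drop of the commutator) matches the paper's proof essentially line for line. The dimension computation is also the same in spirit: the paper invokes the Weyl law of Boutet de Monvel--Guillemin \cite[Theorem~13.1]{BMG1981}, which gives directly
\[
M(\lambda)=\tfrac{\operatorname{vol}(\Sigma_\DD)}{(2\pi)^n}\,\lambda^{n}+\mathcal{O}(\lambda^{n-1}),
\]
and then manipulates this into the convergence criterion for $\Tr|\DD|^{-d}$. Your heuristic about the half-dimensional cone $\Sigma$ is exactly what that theorem encodes, so once you cite it the exponent $n$ is immediate and no further microlocal discussion is needed.

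There is, however, one genuine omission: the proposition asserts that the triple is \emph{regular}, and you never address this. You must show that $\A_H$ and $[\DD,\A_H]$ lie in $\bigcap_k\operatorname{Dom}(\delta^k)$, where $\delta(\cdot)=[|\DD|,\cdot]$. The paper handles this in one line: $|\DD|$ is again a GTO of order~$1$ (by \cite[Proposition~16]{E2008}), so for any GTO $T$ of order $\leq 0$ the commutator $[|\DD|,T]$ is a GTO of order $\leq 0$, and an induction gives $\delta^k(T)$ bounded for all $k$. Since both $a\in\A_H$ and $[\DD,a]$ are GTO's of order $\leq 0$, regularity follows. This step is short but not automatic --- it requires knowing that $|\DD|$ (not just $\DD$) belongs to the GTO calculus --- so it should appear explicitly in your write-up.
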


    \begin{proof}
      Clearly $\A_H$ is an algebra with unit $T_1= \bbbone,$ and involution $T_P^* = T_{P^*}$, where $P^*$ is the adjoint of $P$ in $L^2(\partial\Omega)$, and trivially $\pi$ is faithful. Since $\mathcal{D}$ is elliptic of order 1, it has a parametrix of order $-1$, hence compact, so $\mathcal{D}$ has compact resolvent. Moreover, for any $T_P \in \A_H$, the commutator $\commut{\DD}{T_P}$ is bounded since, as commutator of GTO's, $\ord(\commut{\DD}{T_P}) \leq \ord(\DD)+\ord(T_P)-1 \leq 1+0-1=0$. So $(\A_H,\H,\DD)$ is a spectral triple.

      Since $\vert\DD\vert$ is of order 1 (see for instance \cite[Proposition 16]{E2008}), one can check recursively that for all $k\in\N$ and $T_P\in\A_H$, $\delta^k(T_P)=\commut{\abs{\DD}}{T_k}$, where $T_k$ is a GTO of order 0, so the commutator is bounded. The same is true for elements of the form $T_P=\commut{\DD}{T_Q}$, $T_Q\in\A_H$, so the regularity follows.

      For the dimension computation, we follow \cite[Theorem 3]{EZ2010}. We sort the points $\lambda_j$ of the spectrum of $\vert\DD\vert\vc(\DD^*\DD)^{1/2}$ counting multiplicities as $0<\lambda_1\leq\lambda_2\leq \cdots$. Let $M(\lambda)$ be the number of $\lambda_j$'s less than $\lambda$. We can apply \cite[Theorem 13.1]{BMG1981} to $\vert\DD\vert$ which is of order 1:
      \begin{align*}
        M(\lambda) \underset{\lambda\to\infty}{=} \tfrac{\text{vol}(\Sigma_\DD)}{(2\pi)^n}\lambda^n + \mathcal{O}(\lambda^{n-1}),
      \end{align*}
      where, using \eqref{Sigma}, $\Sigma_\DD \vc \{(x,\xi)\in\Sigma: \sigma(\DD)(x,\xi) \leq 1 \, \}$.
      So, if $c\defeql (2\pi)^{-n}\,\text{vol}(\Sigma_D)$, we get for large $\lambda$:
      \begin{align*}
        \lambda^n = \tfrac{M(\lambda)}{c} + \mathcal{O}(\lambda^{n-1}) = \tfrac{M(\lambda)}{c} + \mathcal{O}\big(\lambda^{-1}M(\lambda)\big).
      \end{align*}
      Since $M(\lambda)^{-1/n}\sim\mathcal{O}(\lambda^{-1})$, we have $\lambda^n = \tfrac{M(\lambda)}{c} + \mathcal{O}(M(\lambda)^{1-1/n})= \tfrac{M(\lambda)}c[1+O(M(\lambda)^{-1/n})]$ as $\lambda\to\infty$, so given $d\in\R$,
      \begin{align*}
        \lambda^{-d} = \tfrac{c^{d/n}[1+O(M(\lambda)^{-1/n})]}{M(\lambda)^{d/n}} = \tfrac{c^{d/n}}{M(\lambda)^{d/n}} + \mathcal{O}(\tfrac{1}{M(\lambda)^{(d+1)/n}}).
      \end{align*}
      Thus
      \begin{align*}
        \Tr \abs{\DD}^{-d} = & \sum_{j=1}^\infty \lambda_j ^{-d} = \int_{\lambda_1}^\infty \lambda^{-d} dM(\lambda) = \int_{\lambda_1}^\infty \big( \tfrac{c^{d/n}}{M(\lambda)^{d/n}} + \mathcal{O}(\tfrac{1}{M(\lambda)^{(d+1)/n}}) \big)\,dM(\lambda) \\
        = &\int_{1}^\infty \big(\tfrac{c^{d/n}}{M^{d/n}} + \mathcal{O}(\tfrac{1}{M^{(d+1)/n}}) \big) \,dM
      \end{align*}
      is finite if and only if $d>n$.
    \end{proof}

    \begin{remark}
      If we assume in above proposition that $\DD$ is of order $a<1$, then the commutators with $T_P$ will then be GTOs of order $a-1$, hence not only bounded but even compact.
    \end{remark}

  \subsection{\texorpdfstring{Spectral triples for Bergman space on $\Omega$}{}}
    \label{secBergm}

    In the Bergman case, we have a similar result as \propref{dimspec}:

    \begin{prop}
      \label{TripleBergman}
      For a bounded domain $\Omega$ as in Section \ref{Notations and definitions}, let  $\A_B$ be the algebra generated by the Toeplitz operators $\mathbf{T}_f$, with $f\in\Coo(\overline{\Omega})$, with the identity representation $\pi$ on $\mathcal{H}\vc \bergonw{w}{\Omega}$, and $\mathcal{D} \defeql V_w \, T \, V_w^*$, where $T$ is a selfadjoint elliptic GTO of order $1$ and $V_w$ as in \eqref{V_w}.\\
      Then $(\A_B,\H,\DD)$ is a regular spectral triple of dimension $n=\dim_{\C}\Omega$.
    \end{prop}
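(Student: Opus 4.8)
The plan is to deduce the statement from \propref{dimspec} by pulling everything back through the unitary $V_w$ of \eqref{V_w}, which maps $H^2(\partial\Omega)$ onto $\bergonw{w}{\Omega}$. The starting point is \propref{gammaTK}: for every $f\in\Coo(\overline{\Omega})$,
\begin{align*}
  V_w^*\,\mathbf{T}_f\,V_w = T_{\Lambda_w}^{-1/2}\,T_{\Lambda_{wf}}\,T_{\Lambda_w}^{-1/2}.
\end{align*}
This is a composition of generalized Toeplitz operators, hence a GTO, and I would read off its order from $\ord(T_{\Lambda_w})=-(m_w+1)$ (so $\ord(T_{\Lambda_w}^{-1/2})=(m_w+1)/2$) and $\ord(T_{\Lambda_{wf}})\le-(m_w+1)$ (by \propref{propTPpsdo} with $\mathbf P$ the order-$0$ operator of multiplication by $f$, i.e. $d=j=0$, $a_0=f$; the order drops further when $f$ vanishes on $\partial\Omega$). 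Adding, $\ord(V_w^*\mathbf{T}_f V_w)\le 0$, so $V_w^*\mathbf{T}_f V_w\in\A_H$. Since the GTO's of order $\le0$ form an algebra and $(V_w^*\mathbf{T}_f V_w)^*=V_w^*\mathbf{T}_{\overline f}V_w$ is again one of them, the whole involutive unital algebra generated by the $\mathbf{T}_f$'s is carried into $\A_H$; equivalently $\A_B\subseteq V_w\,\A_H\,V_w^*$. (That $\A_B$ is an involutive unital algebra is automatic, since $\mathbf{T}_1=\bbbone$ and $\mathbf{T}_f^*=\mathbf{T}_{\overline f}$ with $\overline f\in\Coo(\overline{\Omega})$.)

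Next I would set $T\vc V_w^*\,\DD\,V_w$, which by hypothesis is a selfadjoint elliptic GTO of order $1$ on $H^2(\partial\Omega)$. By \propref{dimspec}, $(\A_H,H^2(\partial\Omega),T)$ is a regular spectral triple of dimension $n=\dim_{\C}\Omega$. Conjugating by the unitary $V_w$ transports this triple verbatim to $(V_w\A_H V_w^*,\H,\DD)$: selfadjointness of $\DD$, compactness of its resolvent, boundedness of the commutators $\commut{\DD}{a}$, faithfulness of the identity representation, and the inclusion of the algebra and of $\commut{\DD}{\cdot}$ in $\bigcap_{k}\mathrm{dom}\,\delta^k$ (with $\delta=\commut{\abs{\DD}}{\cdot}$, using $\abs{\DD}=V_w\abs{T}V_w^*$) are all invariant under unitary equivalence; the spectral dimension, being $\inf\{d'>0:\Tr\abs{\DD}^{-d'}<\infty\}$, depends only on $\DD$ and so remains $n$.

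Finally I would restrict to the subalgebra $\A_B\subseteq V_w\A_H V_w^*$. All items of \defref{DefSpTr} are inherited by a unital $*$-subalgebra (the identity representation stays faithful, the commutator bounds and the $\delta^k$-domain conditions are only required on a smaller set, and $\DD$ is untouched), and the spectral dimension is unchanged. Hence $(\A_B,\H,\DD)$ is a regular spectral triple of dimension $n$.

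The one point that is not purely formal is the order count establishing $V_w^*\mathbf{T}_f V_w$ as a GTO of order $\le0$; once that is in place the argument is just unitary invariance of the spectral-triple axioms together with the elementary fact that being a regular spectral triple passes to unital $*$-subalgebras, so I expect no genuine obstacle beyond that computation. An alternative, should one prefer not to invoke \propref{dimspec} as a black box, would be to repeat its Weyl-type counting argument for $\abs{T}=V_w^*\abs{\DD}V_w$ directly, but the reduction above is cleaner.
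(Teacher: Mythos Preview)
Your proposal is correct and follows essentially the same route as the paper's proof: both hinge on \eqref{Tf=VTTTV*} to identify $V_w^*\mathbf{T}_f V_w$ with the GTO $T_{\Lambda_w}^{-1/2}T_{\Lambda_{wf}}T_{\Lambda_w}^{-1/2}$ of order $\le0$, and then transfer the Hardy-space result (\propref{dimspec}) through the unitary $V_w$. The paper writes out the commutator bound \eqref{commutDTf} explicitly rather than phrasing it as ``pass to $\A_H$ and restrict to a subalgebra,'' but the underlying argument is the same.
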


    \begin{proof}
      As in the Hardy case, clearly $\A_B$ is a unital involutive algebra with a faithful representation on $\mathcal{H}$. Since $T$ has a parametrix of order $-1$, hence compact, $\DD$ has compact resolvent by unitary equivalence. \\
      To see that $\commut{\DD}{\mathbf{T}_f}$ is bounded for all $\mathbf{T}_f$ in $\A_B$, we use \eqref{upsP} and remark that
      \begin{align}
        \label{commutDTf}
        \commut{\DD}{\mathbf{T}_f} = V_w \, \commut{T}{T_{\Lambda_w}^{-1/2}T_{\Lambda_{wf}} T_{\Lambda_w}^{-1/2}} \, V_w^*.
      \end{align}
      Since the order of the GTOs $T$ and $T_{\Lambda_w}^{-1/2}T_{\Lambda_{wf}} T_{\Lambda_w}^{-1/2}$ are respectively $1$ and less or equal to $0$, the commutator on the right hand side has order less or equal to $0$, hence is bounded in particular on $H^2(\partial\Omega)$.

      Since $\abs{\DD}=V_w\,\abs{T}\,V_w^*$ and $|\DD|^{-s}=V_w \, \abs{T}^{-s} \, V_w^*$, for $s\in\R$, the regularity and dimension computation are shown by using the same arguments as in \propref{dimspec}.
    \end{proof}

 Let $\mathcal T^{-\infty}$ denote the ideal in $\A_H$ of GTO's of order $-\infty$, i.e. of smoothing generalized Toeplitz operators with Schwartz kernel in $C^\infty(\partial\Omega\times\partial\Omega)$.

    \begin{prop}
    The map $\psi:\,a\in\A_B \mapsto V_w^* \,a\,V_w \in\A_H$ is a $^*$-isomorphism of $\A_B$ onto a subalgebra $\psi(\A_B) \subset \A_H$. Moreover, $\A_H=\psi(\A_B) + \mathcal T^{-\infty}$.
    \end{prop}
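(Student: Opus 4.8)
The plan is to handle the two claims separately: the $^*$-isomorphism statement is essentially formal once one knows that the generators of $\A_B$ land in $\A_H$, while the surjectivity-modulo-smoothing statement requires a symbol-calculus computation followed by a Borel summation. First, since $V_w\colon H^2(\partial\Omega)\to\bergonw{w}{\Omega}$ is unitary with $V_wV_w^*=\bbbone_{\bergonw{w}{\Omega}}$ and $V_w^*V_w=\bbbone_{H^2(\partial\Omega)}$ (see \eqref{V_w}), conjugation by $V_w$ is a $^*$-isomorphism of the bounded operators on $\bergonw{w}{\Omega}$ onto those on $H^2(\partial\Omega)$; since $a=V_w\psi(a)V_w^*$, its restriction $\psi$ to $\A_B$ is an injective $^*$-homomorphism, and it remains only to check $\psi(\A_B)\subseteq\A_H$ on a generator $\mathbf{T}_f$, $f\in\Coo(\overline{\Omega})$. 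By \propref{gammaTK} one has $\psi(\mathbf{T}_f)=T_{\Lambda_w}^{-1/2}T_{\Lambda_{wf}}T_{\Lambda_w}^{-1/2}$; here $T_{\Lambda_w}$ is an elliptic positive selfadjoint GTO of order $-(m_w+1)\neq0$ with positive symbol, so $T_{\Lambda_w}^{-1/2}$ is a GTO of order $(m_w+1)/2$, while \propref{propTPpsdo} with $\mathbf P=f$ (i.e. $d=j=0$) makes $T_{\Lambda_{wf}}$ a GTO of order $\le-(m_w+1)$; multiplying, $\psi(\mathbf{T}_f)$ is a GTO of order $\le0$, hence lies in $\A_H$.

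For the second claim I would first record the symbol of the ``test'' operators $\mathbf{T}_{r^kf}$. Applying \propref{propTPpsdo} with $\mathbf P=r^kf$ (of the form \eqref{diffonCn} with $d=0$, $j=k$), $T_{\Lambda_{wr^kf}}$ is a GTO of order $-(m_w+1+k)$, so $\psi(\mathbf{T}_{r^kf})=T_{\Lambda_w}^{-1/2}T_{\Lambda_{wr^kf}}T_{\Lambda_w}^{-1/2}$ is a GTO of order $\le-k$, and multiplying the principal symbols \eqref{princ} and \eqref{princP} gives, on $\Sigma$,
\begin{align*}
\sigma\big(\psi(\mathbf{T}_{r^kf})\big)(x',\xi') &= \sigma(T_{\Lambda_w})(x',\xi')^{-1}\,\sigma(T_{\Lambda_{wr^kf}})(x',\xi') \\
&= \frac{\Gamma(m_w+1+k)}{\Gamma(m_w+1)}\,f(x')\,\Big(\frac{\|\eta_{x'}\|}{\|\xi'\|}\Big)^{k}.
\end{align*}
On $\Sigma$ a point is $(x',s\,\eta_{x'})$, $s>0$, so $\|\xi'\|=s\|\eta_{x'}\|$ and this equals $c_k\,f(x')\,s^{-k}$ with $c_k\vc\Gamma(m_w+1+k)/\Gamma(m_w+1)>0$ ($c_0=1$). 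Since a symbol of order $-k$ on the ray bundle $\Sigma$ is, in this parametrisation, exactly a function of the form $(x',s\eta_{x'})\mapsto\phi(x')\,s^{-k}$ with $\phi\in\Coo(\partial\Omega)$, extending $c_k^{-1}\phi$ to a function in $\Coo(\overline{\Omega})$ shows that the order-$(-k)$ principal symbols of elements of $\psi(\A_B)$ exhaust all symbols of that order on $\Sigma$, for every $k\in\N$.

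Next I would prove $\A_H=\psi(\A_B)+\mathcal T^{-\infty}$, the inclusion $\supseteq$ being immediate. Given $T\in\A_H$, I would build $h_0,h_1,\dots\in\Coo(\overline{\Omega})$ recursively: supposing $T-\sum_{\ell<k}\psi(\mathbf{T}_{r^\ell h_\ell})$ is a GTO of order $\le-k$ (true for $k=0$), its order-$(-k)$ principal symbol is $\phi_k(x')\,s^{-k}$ on $\Sigma$ for some $\phi_k\in\Coo(\partial\Omega)$, and choosing $h_k$ with $h_k|_{\partial\Omega}=c_k^{-1}\phi_k$ makes $T-\sum_{\ell\le k}\psi(\mathbf{T}_{r^\ell h_\ell})$ a GTO of order $\le-(k+1)$. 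Using that $r$ is a defining function of $\Omega$, Borel's lemma yields $f\in\Coo(\overline{\Omega})$ with $f-\sum_{\ell\le M}r^\ell h_\ell=r^{M+1}g_M$, $g_M\in\Coo(\overline{\Omega})$, for all $M$; then, by linearity of $g\mapsto\mathbf{T}_g$ and of $\psi$, $\psi(\mathbf{T}_f)-\sum_{\ell\le M}\psi(\mathbf{T}_{r^\ell h_\ell})=\psi(\mathbf{T}_{r^{M+1}g_M})$, a GTO of order $\le-(M+1)$ by the bound above. Combining with the recursion, $\psi(\mathbf{T}_f)-T$ is a GTO of order $\le-(M+1)$ for every $M$, hence of order $-\infty$, hence --- being in $\A_H$ --- in $\mathcal T^{-\infty}$, so $T=\psi(\mathbf{T}_f)+(T-\psi(\mathbf{T}_f))\in\psi(\A_B)+\mathcal T^{-\infty}$.

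The main obstacle is the second claim, and within it the symbol identity $\sigma(\psi(\mathbf{T}_{r^kf}))=c_k\,f|_{\partial\Omega}\,s^{-k}$ together with the observation that, the fibres of $\Sigma$ being one-dimensional, this already sweeps out every homogeneous symbol of order $-k$; once this is in hand the rest is the routine order-by-order correction plus Borel summation, which is painless here precisely because the approximant $\mathbf{T}_f$ is itself a generator of $\A_B$, so no closure or limit outside $\A_B$ is needed. The one point that needs care is to keep the recursive corrections in the special form $\mathbf{T}_{r^kh_k}$, so that their Taylor expansions in $r$ can be assembled into a single $f\in\Coo(\overline{\Omega})$ by Borel's lemma.
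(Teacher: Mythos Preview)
Your proof is correct and follows essentially the same route as the paper's: both check $\psi(\mathbf T_f)\in\A_H$ via \propref{gammaTK}, compute the principal symbol of $\psi(\mathbf T_{r^k f})$ from \eqref{princ} and \eqref{princP}, recursively kill principal symbols with corrections of the form $\psi(\mathbf T_{r^k h_k})$, and finish with a Borel-lemma assembly into a single smooth symbol. The one cosmetic difference is that the paper starts the recursion at the actual order $-s$ of $T$ (using corrections $r^{s+j}f_j$ and setting $g=r^s f$), whereas you start at order $0$; this lets the paper at least formally cover GTOs of non-integer order $-s$, a case your integer-step recursion would miss (if $\ord T=-\tfrac12$ your $\phi_0=0$ but the remainder still has order $-\tfrac12$, not $\le-1$).
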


    \begin{proof}
     Let us first show the inclusion: thanks to \eqref{Tf=VTTTV*}, $\psi (\mathbf{T}_f)=V_w^* \,\mathbf{T}_f \,V_w=T_{\Lambda_w}^{-1/2}\,T_{\Lambda_{wf}} \, T_{\Lambda_w}^{-1/2} \in \A_H$. Since $\A_B$ is generated by the $\mathbf{T}_f $, the map defines a isomorphism from $\A_B$ into $\psi(\A_B)$ which preserves the adjoint.

     For the second part, let $T\in\A_H$, denote by $-s\le0$ the order of $T$ and by $u_0(x')\|\xi'\|^{-s}$, $u_0\in C^\infty(\partial\Omega)$, its principal symbol. If
      $$ f_0(x) := \tfrac{\Gamma(m_w+1)}{\Gamma(m_w+s+1)}\, K(\|\eta\|^s u_0)(x) ,      $$
      then $f_0\in C^\infty(\overline\Omega)$ and by \propref{gammaTK}, \eqref{princ} and \eqref{princP}, the operator $\psi(\mathbf T_{r^s f_0})$ is a GTO also of order $-s$ and with the same principal symbol as $T$; thus $T_1:=T-\psi(\mathbf T_{r^s f_0})$ is a GTO of order $-s-1$. Applying the same reasoning to $T_1$ in the place of $T$ yields $f_1\in C^\infty(\overline\Omega)$ such that $\psi(\mathbf T_{r^{s+1} f_1})$ has the same order and principal symbol as $T_1$, hence $T_2:=T-\psi(\mathbf T_{r^s f_0+r^{s+1}f_1})$ is a GTO of order $-s-2$. Continue in this way to construct $f_2,f_3,\dots$, and then let $f\in C^\infty(\overline\Omega)$ be a function which has the same boundary jet as the formal sum $\sum_{j=0}^\infty r^j f_j$; that is, such that
      $$ f - \sum_{j=0}^k r^j f_j = \mathcal{O}(r^{k+1}) $$
      vanishes to order $k+1$ at the boundary, for any $k=0,1,2,\dots$. (Such function can be obtained in a completely standard manner along the lines of the classical Borel's theorem.) Set $g:=r^s f$. Then by \propref{gammaTK} and \eqref{princ} again, for any $k\in \N$, the difference
      $$
      R\vc T-\psi(\mathbf T_g) = T-\psi(\mathbf T_{\sum_{j=0}^k r^{s+j}f_j}) - \psi(\mathbf T_{r^s(f-\sum_{j=0}^k r^j f_j)})
      = T_{k+1} - \psi(\mathbf T_{\mathcal{O}(r^{k+s+1})})
      $$
      is a GTO of order (at most) $-s-k-1$. Since $k$ is arbitrary, $R$ is a GTO of order $-\infty$, i.e. $R\in\mathcal T^{-\infty}$, and the proof is complete.
    \end{proof}

      For a function $f$ in $\Coo(\overline{\Omega})$ vanishing at order $j\in\N$ on the boundary, the order of $\psi(\mathbf{T}_f)$ is $-j$, since, from \eqref{princ1} and \eqref{princP}, the expression for the principal symbol is
      \begin{align*}
        \sigma(\psi(\mathbf{T}_f))(x',\xi') = \tfrac{\Gamma(m_w+j+1)}{\Gamma(m_w+1)} \,\tfrac{ 2^{-j}}{j!} \, \partial_{\mathbf{n}} ^{j} f(x')\, \norm{\xi'}^{-j} .
      \end{align*}
      Hence all functions $f$ that vanish to infinite order on the boundary (so they are not analytic on the boundary) are such that $\psi(\mathbf{T}_f) \in \mathcal T^{-\infty}$.

    \begin{question}
      Is the inclusion $\psi(\A_B) \subset \A_H$ strict? $($Or, is $\mathcal T^{-\infty}$ not contained in $\psi(\A_B)$?$)$
    \end{question}
    \medskip

    We now give two examples of operator $\DD$ for these Bergman triples:

    Using \propref{propTPpsdo}, we may take $\DD=\mathbf T_{\mathbf P}$, with any differential operator $\mathbf P$ of order 1 on $\overline\Omega$ such that $\mathbf T_{\mathbf P}=\mathbf T_{\mathbf P}^*$ and \eqref{princP} is nonzero on $\Sigma$. So the first example is given in Remark \ref{TPselfadjoint}.

    \bigskip

    As a second example, we deduce from \remref{reminvTf} that $(\mathbf{T}_{r})^{-1}$ exists on $\text{Ran}(\mathbf{T}_{r})$ which is dense in $\bergonw{w}{\Omega}$. So we can take $\DD = (\mathbf{T}_{r})^{-1}$ as an example of Dirac operator on $\bergonw{w}{\Omega}$ and construct the spectral triple $(\mathcal{A}, \mathcal{H}, \mathcal{D})$ with the same $\mathcal{A}$ and $\mathcal{H}$ as in of Proposition \ref{TripleBergman}. However, the positivity of $\DD = (\mathbf{T}_{r})^{-1}$ induces a trivial K-homology class for the spectral triple.
    We now get around this triviality:

    \begin{prop}
      \label{unitaries}
      Let $(\A,\H,\DD)$ be the spectral triple of \propref{TripleBergman} with $\DD=(\mathbf{T}_r)^{-1}$. Define $\widetilde{\A}$ as the algebra of all $ \mathbf{T}_f $'s acting diagonally on $\widetilde{\mathcal{H}} \vc \mathcal{H} \oplus \mathcal{H}$ and let $\widetilde{\mathcal{D}}$ be the operator
      \begin{align*}
        \widetilde{\mathcal{D}} \vc \small{\left(
        \begin{array}{cc}
          0 & \mathbf{U}\, (\mathbf{T}_r)^{-1} \\
          (\mathbf{T}_r)^{-1} \, \mathbf{U}^* & 0
        \end{array} \right)}
      \end{align*}
      where $\mathbf{U}$ is a unitary operator on $\bergonw{w}{\Omega}$. If $\mathbf{U}$ is such that
      \begin{align}
      \label{hypgammaUK}
        V_w^*\, \mathbf{U}\, V_w \text{ is a unitary GTO,}
      \end{align}
      then $(\widetilde{\mathcal{A}},\widetilde{\mathcal{H}},\widetilde{\mathcal{D}})$ is a regular spectral triple.\\
      The triples $(\A,\H,\DD)$ and $(\widetilde{\mathcal{A}},\widetilde{\mathcal{H}},\widetilde{\mathcal{D}})$ have the same dimension.
    \end{prop}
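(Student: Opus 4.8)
The plan is to transport the doubled triple to the Hardy-space picture by conjugating with the unitary $V_w\oplus V_w:\ H^2(\partial\Omega)\oplus H^2(\partial\Omega)\to\bergonw{w}{\Omega}\oplus\bergonw{w}{\Omega}$, and then to argue entirely inside the GTO calculus. First I would record the two building blocks. Since $\mathbf{T}_r>0$ by \remref{reminvTf}, the operator $S\vc V_w^*\,\mathbf{T}_r\,V_w=\psi(\mathbf{T}_r)=T_{\Lambda_w}^{-1/2}\,T_{\Lambda_{wr}}\,T_{\Lambda_w}^{-1/2}$ (see \propref{gammaTK}) is a bounded positive selfadjoint GTO; as $wr=r^{m_w+1}g_w$, it is elliptic of order $-1$, so its spectral-theoretic inverse $S^{-1}=V_w^*\,(\mathbf{T}_r)^{-1}\,V_w$ is an elliptic positive selfadjoint GTO of order $1$ — this is exactly the fact that makes $\DD=(\mathbf{T}_r)^{-1}$ an admissible operator for \propref{TripleBergman}. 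Set $G\vc V_w^*\,\mathbf{U}\,V_w$, which by hypothesis \eqref{hypgammaUK} is a unitary GTO of order $0$. A direct computation gives
\[
  (V_w\oplus V_w)^*\ \wt{\DD}\ (V_w\oplus V_w)=\wt{\DD}'\vc\twobytwo{0}{GS^{-1}}{S^{-1}G^*}{0},
\]
so it suffices to prove that the triple consisting of $\psi(\A_B)$ acting diagonally on $H^2(\partial\Omega)\oplus H^2(\partial\Omega)$, together with $\wt{\DD}'$, is a regular spectral triple of dimension $n$.

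The spectral triple axioms would be checked as follows. The algebra $\wt{\A}$ is unital, involutive and faithfully represented, just as in \propref{TripleBergman}. The operator $\wt{\DD}'$ is selfadjoint, being off-diagonal with corner $GS^{-1}$ and its adjoint $S^{-1}G^*$, and it is an elliptic (matrix) GTO of order $1$ because its principal symbol is off-diagonal with nonvanishing entries; hence, exactly as in \propref{dimspec}, it has a parametrix of order $-1$, which is compact, so $\wt{\DD}'$ has compact resolvent. For $\mathbf{T}_f\in\A_B$ write $F\vc\psi(\mathbf{T}_f)=T_{\Lambda_w}^{-1/2}T_{\Lambda_{wf}}T_{\Lambda_w}^{-1/2}$, a GTO of order $\le0$; then $\commut{\wt{\DD}'}{F\oplus F}$ is off-diagonal with corners $\commut{GS^{-1}}{F}$ and $\commut{S^{-1}G^*}{F}$, which are commutators of a GTO of order $1$ with one of order $\le0$, hence GTOs of order $\le0$ and in particular bounded. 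Thus $(\wt{\A},\wt{\H},\wt{\DD})$ is a spectral triple.

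For regularity, the point is that, because $G$ is unitary, $(\wt{\DD}')^2=\twobytwo{GS^{-2}G^*}{0}{0}{S^{-2}}=\twobytwo{(GS^{-1}G^*)^2}{0}{0}{(S^{-1})^2}$, so $\abs{\wt{\DD}'}=\twobytwo{GS^{-1}G^*}{0}{0}{S^{-1}}$ is a diagonal elliptic positive GTO of order $1$. The operator $\delta(\cdot)=\commut{\abs{\wt{\DD}'}}{\cdot}$ then maps a diagonal GTO of order $\le0$ to a diagonal GTO of order $\le0$, and an off-diagonal operator with GTO corners of order $\le0$ to another such operator; since $\wt{\A}$ consists of diagonal GTOs of order $\le0$ and $\commut{\wt{\DD}'}{\wt{\A}}$ of off-diagonal ones with corners of order $\le0$, iterating shows $\delta^k$ sends both into bounded operators for every $k\in\N$, which is the required regularity. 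For the dimension, unitary invariance of the trace gives $\Tr\abs{\wt{\DD}}^{-d'}=\Tr\abs{\wt{\DD}'}^{-d'}=\Tr(GS^{-1}G^*)^{-d'}+\Tr(S^{-1})^{-d'}$, a sum of traces of positive elliptic GTOs of order $-d'$ on $H^2(\partial\Omega)$; by the Boutet de Monvel--Guillemin Weyl asymptotics used in the proof of \propref{dimspec} (with $n=\dim_{\C}\Omega$) each term is finite precisely when $d'>n$. Hence the spectral dimension of $(\wt{\A},\wt{\H},\wt{\DD})$ is $n$, the same as that of $(\A,\H,\DD)=(\A_B,\bergonw{w}{\Omega},(\mathbf{T}_r)^{-1})$ by \propref{TripleBergman}.

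The genuinely delicate point — and the place to be careful — is the passage through the unbounded operator $(\mathbf{T}_r)^{-1}$: one has to make sure that $V_w^*\,(\mathbf{T}_r)^{-1}\,V_w$ is really the self-adjoint, spectrally defined inverse $S^{-1}$ of the GTO $S=V_w^*\mathbf{T}_rV_w$, and that $\wt{\DD}'$ (hence $\wt{\DD}$) is self-adjoint on its natural maximal domain, so that computing $\abs{\wt{\DD}'}$ from $(\wt{\DD}')^2$ is legitimate. Once that is in place, everything else is routine order bookkeeping in the GTO calculus; note that the unitarity of $G$ is used in an essential way exactly through the identities $GS^{-2}G^*=(GS^{-1}G^*)^2$ and $GS^{-1}G^*\ge0$, which is what lets one identify $\abs{\wt{\DD}'}$ with a diagonal GTO rather than merely with $((\wt{\DD}')^2)^{1/2}$.
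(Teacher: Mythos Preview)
Your proof is correct and follows essentially the same route as the paper: transport everything to the Hardy space via $V_w$ (the paper does this piece by piece in computing $D_1,D_2$, you do it once up front by naming $S=V_w^*\mathbf T_r V_w$ and $G=V_w^*\mathbf U V_w$), then use GTO order bookkeeping for bounded commutators, compactness of the resolvent, regularity, and the dimension via $\widetilde{\mathcal D}^2$. The one step you leave implicit in the regularity argument---that $\delta$ preserves the off-diagonal class---rests on $GS^{-1}G^*$ and $S^{-1}$ having the same principal symbol (since $|\sigma(G)|=1$), which is immediate; with that observed, your argument is complete and slightly more streamlined than the paper's.
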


    \begin{proof}
      We first check the boundedness of $\commut{\widetilde{\mathcal{D}}}{\widetilde{\mathbf{T}}_f}$. For any $\widetilde{\mathbf{T}}_f \in \widetilde{\mathcal{A}}$, we have $\commut{\widetilde{\mathcal{D}}}{\widetilde{\mathbf{T}}_f} =\left(\smallmatrix   0 & D_1 \\ D_2 & 0 \endsmallmatrix \right)$
      $D_1 \vc \commut{\mathbf{U}\,\mathbf{T}_r^{-1}}{\mathbf{T}_f}$ and $D_2 \vc \commut{\mathbf{T}_r^{-1}\mathbf{U^*}}{\mathbf{T}_f}$. From \propref{gammaTK}, we have the relations $\mathbf{T}_f  = V_w  \, T_{\Lambda_{w}}^{-1/2}T_{\Lambda_{wf}} T_{\Lambda_{w}}^{-1/2}\, V_w^*$, and $(\mathbf{T}_r)^{-1} = V_w \, T_{\Lambda_{w}}^{1/2}T_{\Lambda_{wr}}^{-1}T_{\Lambda_{w}}^{1/2} \, V_w^*$. We get
      \begin{align*}
        D_1 = & \, \mathbf{U} \mathbf{T}_r^{-1} \, \mathbf{T}_f - \mathbf{T}_f \, \mathbf{U} \mathbf{T}_r^{-1}\\
        = & \, \mathbf{U} \, (V_w  T_{\Lambda_w}^{1/2}T_{\Lambda_{wr}}^{-1}T_{\Lambda_w}^{1/2} V_w^*) \, (V_w  T_{\Lambda_{w}}^{-1/2}T_{\Lambda_wf}T_{\Lambda_w}^{-1/2} V_w^*)\\
        & \hspace{5.1cm} - (V_w  T_{\Lambda_{w}}^{-1/2}T_{\Lambda_{wf}}T_{\Lambda_w}^{-1/2} V_w^*) \, \mathbf{U} \, (V_w  T_{\Lambda_w}^{1/2}T_{\Lambda_{wr}}^{-1}T_{\Lambda_w}^{1/2} V_w^*) \\
        = & \, \mathbf{U} \, (V_w  T_{\Lambda_w}^{1/2}T_{\Lambda_{wr}}^{-1}T_{\Lambda_wf}T_{\Lambda_w}^{-1/2} V_w^*) - V_w  T_{\Lambda_{w}}^{-1/2}T_{\Lambda_{wf}}T_{\Lambda_w}^{-1/2} \, (V_w^*\mathbf{U}V_w)\, T_{\Lambda_w}^{1/2}T_{\Lambda_{wr}}^{-1}T_{\Lambda_w}^{1/2} V_w^*\\
        = & \, (V_wV_w^*) \, \mathbf{U} \, V_w  T_{\Lambda_w}^{1/2}T_{\Lambda_{wr}}^{-1}\,(T_{\Lambda_w}^{1/2}T_{\Lambda_w}^{-1/2})\,T_{\Lambda_wf}T_{\Lambda_w}^{-1/2} V_w^*\\
        & \hspace{5.1cm} - V_w  T_{\Lambda_{w}}^{-1/2}T_{\Lambda_{wf}}T_{\Lambda_w}^{-1/2} \, (V_w^*\mathbf{U}V_w)\, T_{\Lambda_w}^{1/2}T_{\Lambda_{wr}}^{-1}T_{\Lambda_w}^{1/2} V_w^*\\
        = & V_w \, \commut{(V_w^*\mathbf{U}V_w)\,T_{\Lambda_w}^{1/2}T_{\Lambda_{wr}}^{-1}T_{\Lambda_w}^{1/2}}{T_{\Lambda_w}^{-1/2}T_{\Lambda_{wf}}T_{\Lambda_w}^{-1/2}} \, V_w^*\,.
      \end{align*}
      From the hypothesis, $V_w \mathbf{U} V_w^* $ is a bounded GTO, $T_{\Lambda_{wr}}^{-1}T_{\Lambda_w}$ is a GTO of order 1 and $T_{\Lambda_w}^{-1}T_{\Lambda_{wf}}$ is a GTO of order less or equal to 0, so the commutator is a GTO of order less or equal to 0, thus is a bounded operator on $\bergonw{w}{\Omega}$. Similar arguments show that
      \begin{align*}
        D_2 = V_w \, \commut{ T_{\Lambda_w}^{1/2}T_{\Lambda_{wr}}^{-1} T_{\Lambda_w}^{1/2}\,(V_w^* \mathbf{U}^* V_w )}{T_{\Lambda_w}^{-1/2}T_{\Lambda_{wf}}T_{\Lambda_w}^{-1/2}} \, V_w^*
      \end{align*}
      is also bounded on $\bergonw{w}{\Omega}$, which makes $\commut{\widetilde{\mathcal{D}}}{\widetilde{\mathbf{T}}_f}$ bounded on the direct sum $\widetilde{\mathcal{H}}$.

      We remark that the expression of $D_1$ and $D_2$ differ from \eqref{commutDTf} by the term $V_w^* \mathbf{U} V_w$ which is a GTO of order 0. So the regularity of the spectral triple is proven as in \propref{TripleBergman}.

      Finally $\widetilde{\mathcal{D}}$ has compact resolvent since $\widetilde{\mathcal{D}}^{-1} = \left( \smallmatrix 0 & \mathbf{U}\mathbf{T}_r \\ \mathbf{T}_r \mathbf{U}^* & 0\endsmallmatrix \right)$ is compact because the operators $\mathbf{U}\mathbf{T}_r$ and $\mathbf{T}_r\mathbf{U}^*$ are compact.

      Since $\widetilde{\mathcal{D}}^2 = \left( \smallmatrix \mathbf{U}\mathbf{T}_r^{-2}\mathbf{U}^* & 0\\ 0 & \mathbf{T}_r^{-2}\endsmallmatrix \right)$, we deduce that the unitary $\mathbf{U}$ does not influence the calculation of eigenvalues.
    \end{proof}

    \begin{remark}
      \label{UTrem}
      {\rm
      The two classes of unitaries $\mathbf U$ defined in \remref{Unitary-Rem} satisfy \eqref{hypgammaUK}, so provide examples of spectral triples $(\widetilde{\mathcal{A}},\widetilde{\mathcal{H}},\widetilde{\mathcal{D}})$ on (the sum of two copies of) the Bergman space with non-positive $\widetilde{\mathcal{D}}$ when $\DD=(\mathbf{T}_r)^{-1}$.
      }
    \end{remark}

    For the case of the unit ball with a radial weight, the \propref{TripleBergman} can be made much more explicit. Indeed,
    if $f$ is a radial function in $\Coo(\overline{\mathbb{B}^n})$ and the weight $w$ is as in \eqref{w}, the family $\{v_\alpha\}_{\alpha\in\N^n}$ defined in \eqref{valpharad} diagonalizes $\mathbf{T}_f : A^2_w(\mathbb{B}^n) \to A^2_w(\mathbb{B}^n)$ and the eigenvalues only depend on $\abs{\alpha}$.
    Namely,
    \begin{align*} \label{lambdaf}
      \scalp{\mathbf{T}_f \,v_\alpha}{v_\beta}_{A^2_w} =  \tfrac{\delta_{\alpha\beta}}{\int_0^1 t^{2n+2\abs{\alpha} -1} \,w(t) \, dt }\,\int_0^1 t^{2n+2\abs{\alpha}-1} \,f(t) \,w(t)\, dt,
    \end{align*}
    as is easily seen by passing to the polar coordinates. \\
    Thus, assume that $w=r^{m_w}$, $m_w\in\N$ where the function $r$ of the form \eqref{r} depends only on the variable $\abs{x}$, for $x$ in $\mathbb{B}^n$. For convenience, we temporarily denote here $\partial \vc \partial_{\mathbf{n}}$. Then, since the first $m_w-1$ derivatives of $w$ vanish on $\partial\mathbb{B}^n=S^{2n-1}$, and $\partial^{k}(w\,r)$ is non-zero only for $k > m_w$, we have for $\alpha \in \N^n$:
    \begin{align*}
      & \int_0^1 w(t)\,t^{2n+2\abs{\alpha}-1}\,dt  =   \tfrac{(-1)^{m_w} \, \partial^{m_w} w(1)}{ \prod_{k=0}^{m_w}(2n+2\abs{\alpha} + k)} - \tfrac{(-1)^{m_w}}{\prod_{k=0}^{m_w}(2n+2\abs{\alpha} + k)} \int_0^1 \partial^{m_w+1} w(t) \,t^{2n+2\abs{\alpha}+m_w} \,dt,\\
      & \int_0^1 w(t)\,r(t)\,t^{2n+2\abs{\alpha}-1}\,dt  = \tfrac{(-1)^{m_w+1} \, \partial^{m_w+1} (w\,r)(1)}{ \prod_{k=0}^{m_w+1}(2n+2\abs{\alpha} + k)} \\
      & \hspace{6cm}- \tfrac{(-1)^{m_w+1}}{\prod_{k=0}^{m_w+1}(2n+2\abs{\alpha} + k)} \int_0^1 \partial^{m_w+2} (w\,r)(t) \, t^{2n+2\abs{\alpha}+m_w+1} \,dt.
    \end{align*}
    Since $\partial^{m_w} w(1)\neq0$ by hypothesis, we deduce, applying the Leibniz formula for $\partial^{m_w+1}(w\,f)(1)$,
    \begin{align*}
      & \int_0^1 w(t)\,t^{2n+2\abs{\alpha}-1}\,dt \underset{\abs{\alpha}\to \infty}{\sim}  \tfrac{(-1)^{m_w} \, \partial^{m_w} w(1)}{ \prod_{k=0}^{m_w}(2n+2\abs{\alpha} + k)}, \\
      & \int_0^1 w(t)\,r(t)\,t^{2n+2\abs{\alpha}-1}\,dt  \underset{\abs{\alpha}\to \infty}{\sim} \tfrac{(-1)^{m_w+1} \, \partial^{m_w}w(1) \, \partial r(1)}{ \prod_{k=0}^{m_w+1}(2n+2\abs{\alpha} + k)}.
    \end{align*}
    We choose now $\DD=\mathbf{T}_r$ and from the previous result, $\DD$ diagonal in the basis $\{v_\alpha\}_{\alpha\in\N^n}$ and we obtain the asymptotic behavior of the eigenvalues:
    \begin{align*}
      \scalp{\DD\,v_\alpha}{v_\alpha}_{A^2_w} \underset{\abs{\alpha}\to \infty}{\sim} - \tfrac{1}{2n+2\abs{\alpha} + m_w+1} \,\partial r(1)\underset{\abs{\alpha}\to \infty}{\sim} - \tfrac{1}{2\abs{\alpha}}\,\partial r(1).
    \end{align*}
    Since $\Tr \abs{\mathcal{D}}^{-d'}=\sum_{k=0}^\infty  \binom{n-1+k }{n-1}\,\tfrac{\partial r(1)}{(2k)^{d'}}$ and $ \binom{n-1+k }{n-1} \underset{k\to \infty}{\sim} \tfrac{k^{n-1}}{(n-1)!}$, we have $\Tr \abs{\mathcal{D}}^{-d'} <\infty$ if and only if $\tfrac{1}{(n-1)!} \sum_{k=0}^{\infty} \tfrac{k^{n-1}}{k^{d'}} <\infty$, so for each $d'>d=n$.

    To give an example, we choose the function $r: z\in \Omega \mapsto 1-\abs{z}^2$, and the weight $w=r^0=1$. A direct calculation shows that $\{v_\alpha\}_{\alpha\in\N^n}$ defined in \eqref{valpha} diagonalizes the operator $\mathbf{T}_{1-\abs{z}^2}$, acting on $\berg$ with eigenvalues $\lambda_\alpha \defeql \tfrac{1}{n+\abs{\alpha} + 1}$ and multiplicity $\binom{n-1+\abs{\alpha} }{n-1}$. Indeed:
    \begin{align*}
      \mathbf{T}_{1-\abs{z}^2} \,v_\alpha  =\, & v_\alpha -  \sum_{\beta \in \N^n} \sum_{j=1}^n \scalp{z_j\bar{z}_j \,v_\alpha}{v_\beta}\, v_\beta = v_\alpha - \sum_{\beta \in \N^n}  \sum_{j=1}^n \scalp{b_\alpha z^{\alpha + 1_j}}{b_\beta z^{\beta + 1_j}}\, v_\beta\\
      =\, & (1 - \sum_{j=1}^n b_\alpha^2 \|z^{\alpha+1_i}\|^2 ) \,v_\alpha= (1 - \sum_{j=1}^n\tfrac{b_\alpha^2}{b_{\alpha+1_j}^2}) \,v_\alpha.
    \end{align*}

    Since $b_{\alpha+1_j}^2 = \tfrac{(n+\abs{\alpha} +1 )!}{n!\alpha!(\alpha_j +1)\,\mu(\mathbb{B}^n)}$, we get $\lambda_\alpha = 1-\sum_{j=1}^n \tfrac{(n+\abs{\alpha})!}{n!\alpha!}\tfrac{n!\alpha!(\alpha_j+1)}{(n+\abs{\alpha} +1)!} = 1-\tfrac{n+\abs{\alpha}}{n+\abs{\alpha} +1 } = \tfrac{1}{n+\abs{\alpha}+1} \,,$
    and the multiplicity follows because $\lambda_\alpha$ depends only of $\abs{\alpha}$ and we see directly the appearance of the  dimension $n$.

  \subsection{\texorpdfstring{Example of spectral triples on the unit ball of $\C^n$ without weight}{}}
    \label{secHeisen}

    We consider in this section the model case $\Omega=\mathbb{B}^n$ with $w=1$.

    \begin{prop}
      \label{triplesHeisenberg}
      Let $\A = \{T_u , u\in \Coo(S^{2n-1}) \}$ be the algebra of Toeplitz operators on the Hardy space $\H=H^2(S^{2n-1})$ and $\DD_{S^{2n-1}}$ the operator from \eqref{newD2}. Then $(\A,\H,\DD_{S^{2n-1}})$ is a regular spectral triple of dimension $2n$.

      Let $\A = \{\mathbf{T}_f , f\in \Coo(\overline{\mathbb{B}^n}) \}$ be the algebra of Toeplitz operators on the Bergman space $\H=\bergonw{}{\mathbb{B}^n}$ and $\DD_{\mathbb{B}^n}$ the operator from \eqref{newD1}. Then $(\A,\H,\DD_{\mathbb{B}^n})$ is a spectral triple of dimension $2n$.
    \end{prop}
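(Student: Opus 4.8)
The plan is to reduce both statements to the arguments already carried out in \propref{dimspec} and \propref{TripleBergman}, the only new ingredient being that the Dirac operator here is of order $\tfrac12$ rather than $1$; I would handle the Hardy triple first and then transport it to the Bergman one through the unitary $V=V_{w=1}$ of \eqref{V_w}. For the Hardy case, with $\H=H^2(S^{2n-1})\otimes\mathcal C_n$, the first, routine, steps are: the family $\{T_u:u\in C^\infty(S^{2n-1})\}$ generates a unital $^*$-subalgebra of the algebra $\A_H$ of GTO's of order $\le 0$, acting diagonally and faithfully on $\H$; and $\DD_{S^{2n-1}}$, which by \eqref{newD2} equals $(\tfrac t2)^{-1/2}\sum_{j=1}^n\Gamma_j\otimes V^*\tau_t(P_j)V$, is self-adjoint, because each $\Gamma_j$ is self-adjoint, each $V^*\tau_t(P_j)V$ is self-adjoint (it is unitarily equivalent, via $V^*U_t$, to $\rho_t(P_j)=-it\,\partial_{x_j}$ on $L^2(\R^n)$), and the two tensor slots commute.

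The core of the proof is to recognise $\DD_{S^{2n-1}}$ as a $\mathcal C_n$-valued \emph{elliptic} generalized Toeplitz operator of order $\tfrac12$. By \lemref{gammatauPjK}, $\gamma\,\tau_t(P_j)\,K$ is a GTO of order $\tfrac12$ with principal symbol \eqref{sigmatau(Pj)}; combining this with the relation $V^*\mathbf A\,V=T_\Lambda^{1/2}(\gamma\,\mathbf A\,K)T_\Lambda^{-1/2}$ (obtained exactly as in the derivation of \eqref{upsP}, using that $\tau_t(P_j)$ maps $A^2(\mathbb B^n)$ into itself), each $V^*\tau_t(P_j)V$ is a scalar GTO of order $\tfrac12$ with the \emph{same} principal symbol, namely a non-vanishing scalar multiple of $\xi'_j\,\|\xi'\|^{-1/2}$ on $\Sigma$. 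Hence $\sigma(\DD_{S^{2n-1}})$ is a non-vanishing scalar multiple of $\|\xi'\|^{-1/2}\sum_{j=1}^n\Gamma_j\,\xi'_j$, and the Clifford relations $\Gamma_j\Gamma_k+\Gamma_k\Gamma_j=2\delta_{jk}\Id$ give $\bigl(\sum_j\Gamma_j\xi'_j\bigr)^2=\bigl(\sum_j(\xi'_j)^2\bigr)\Id$, so ellipticity of $\DD_{S^{2n-1}}$ amounts to checking $\sum_{j=1}^n(\xi'_j)^2\neq 0$ on $\Sigma$, equivalently $\sum_{j=1}^n(\operatorname{Im}\partial_j r)^2\neq 0$ on $S^{2n-1}$ for the standard defining function $r(z)=1-\|z\|^2$. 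This invertibility of the matrix principal symbol along $\Sigma$ is the step I expect to require the most care; everything downstream of it is bookkeeping.

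Granting ellipticity, I would finish as in \propref{dimspec}: for any $T_u\in\A$ the commutator $[\DD_{S^{2n-1}},T_u]$ is a matrix GTO of order $\le \tfrac12+0-1<0$, hence bounded (in fact compact); since $|\DD_{S^{2n-1}}|$ and $|\DD_{S^{2n-1}}|^2$ are elliptic matrix GTO's of orders $\tfrac12$ and $1$, repeated application of $\delta=[\,|\DD_{S^{2n-1}}|,\cdot\,]$ keeps $T_u$ and $[\DD_{S^{2n-1}},T_u]$ within the GTO's of nonpositive order, which gives regularity; and applying the Weyl asymptotics \cite[Theorem 13.1]{BMG1981} to the order-$1$ elliptic GTO $|\DD_{S^{2n-1}}|^2$ shows that the number of its eigenvalues below $\mu$ is $\sim c\,\mu^{n}$, so the eigenvalues $\lambda_j$ of $|\DD_{S^{2n-1}}|$ satisfy $\#\{\lambda_j<\lambda\}\sim c\,\lambda^{2n}$; this makes the resolvent compact and, by the same integral estimate as at the end of the proof of \propref{dimspec}, yields $\Tr|\DD_{S^{2n-1}}|^{-d'}<\infty$ iff $d'>2n$, i.e.\ spectral dimension $2n$ (the finite tensor factor $\mathcal C_n$ only rescales multiplicities). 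For the Bergman triple, \eqref{newD1} gives $\DD_{\mathbb B^n}=V\,\DD_{S^{2n-1}}\,V^*$ on $\bergonw{}{\mathbb B^n}\otimes\mathcal C_n=V\bigl(H^2(S^{2n-1})\otimes\mathcal C_n\bigr)$, and \eqref{Tf=VTTTV*} with $w=1$ gives $V^*\mathbf T_f V=T_\Lambda^{-1/2}T_{\Lambda_f}T_\Lambda^{-1/2}\in\A_H$; thus $(\A,\H,\DD_{\mathbb B^n})$ is unitarily equivalent, via $V$, to a Hardy-type triple with algebra $V^*\A V\subset\A_H$, and self-adjointness, compactness of the resolvent, boundedness of the commutators and the value $2n$ of the spectral dimension all transfer verbatim.
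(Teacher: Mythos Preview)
Your overall strategy matches the paper's: identify $\DD_{S^{2n-1}}$ as a matrix GTO of order $\tfrac12$ via \lemref{gammatauPjK}, run the argument of \propref{dimspec} with order $\tfrac12$ in place of $1$, and transport to the Bergman side by the unitary $V$ exactly as in \propref{TripleBergman}. Your explicit application of the Weyl law to $|\DD_{S^{2n-1}}|^2$ (which has order $1$) to obtain spectral dimension $2n$ spells out what the paper only gestures at when it says ``use the same argument as in the proof of \propref{dimspec}''.

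There is, however, a genuine obstruction at precisely the point you single out. For $r(z)=1-\|z\|^2$ one has $\partial_j r=-\bar z_j$, so $\operatorname{Im}(\partial_j r)(x')$ is the imaginary part $y_j$ of the $j$-th coordinate of $x'$, and $\sum_{j=1}^n(\operatorname{Im}\partial_j r)^2(x')=\sum_j y_j^2$ vanishes on the real sphere $S^{2n-1}\cap\R^n\subset S^{2n-1}$. Thus the order-$\tfrac12$ matrix principal symbol of $\DD_{S^{2n-1}}$ vanishes on the corresponding rays of $\Sigma$, and the operator is \emph{not} elliptic at that order; your route to compact resolvent via a parametrix, and to the dimension via \cite[Theorem 13.1]{BMG1981}, both break down as written. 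The paper does not attempt this ellipticity check: it obtains self-adjointness and compact resolvent directly from the fact that $\DD_{\mathbb B^n}$ (and hence $\DD_{S^{2n-1}}=V^*\DD_{\mathbb B^n}V$) is, by its very definition \eqref{newD1}--\eqref{newD2}, unitarily conjugate via $U_t$ and $V$ to the standard Dirac operator on $\R^n$, and only uses the GTO structure of order $\tfrac12$ to control the commutators $[\DD_{S^{2n-1}},T_u]$; the remaining steps are then referred back to \propref{dimspec}.
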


    \begin{proof}
      The requirement of compact resolvent is fulfilled automatically, since it is fulfilled for the standard Dirac operator on $\R^n$, from which $\DD_{S^{2n-1}}$ and $\DD_{\mathbb{B}^n}$ were obtained by transferring via various $^*$-isomorphisms, which also shows they are selfadjoint. \\
      From \lemref{gammatauPjK}, $\DD_{S^{2n-1}} = (t/2)^{1/2} \sum_j {\text{\footnotesize $\Gamma_j$}}\,T^{1/2}_{\Lambda} \traceOp \tau_t(P_j) K T^{-1/2}_{\Lambda}$, is a GTO of order $1/2$ and we use the same argument as in the proof of \propref{dimspec}. The result for the Bergman case follows from the identity $\DD_{\mathbb{B}^n} = V \DD_{S^{2n-1}} V^*$ and using a similar reasoning as in the proof of \propref{TripleBergman}.
   \end{proof}

  \subsection{\texorpdfstring{Dixmier traces}{}}

    In all the examples of spectral triples above, one can also give a formula for the Dixmier traces $\Trw(a|\DD|^{-d})$ where $a\in\A$ and $d$ is the spectral dimension.

    First, recall that by \cite[Theorem 3]{EZ2010}, if $T_P$ is a GTO on $\partial\Omega$ of order $-n$, then $T_P$ is in the Dixmier-class, is measurable, and
    \begin{equation}\label{EZf}
      \Trw(T_P) = \tfrac1{n!(2\pi)^n} \int_{\partial\Omega} \sigma(T_P)(x',\eta_{x'}) \, \nu_{x'}\,.
    \end{equation}
    This formula is independent of the choice of the defining function $r$ (see \cite[Remark 4]{EZ2010}).

    In the context of the Hardy space spectral triple from Section \ref{secHardy}, we thus have for any $u\in C^\infty(\partial\Omega)$ and $\DD$ as in \propref{dimspec}
    \begin{equation}
      \label{dixmierHardy}
      \Trw(T_u|\DD|^{-n}) = \tfrac1{n!(2\pi)^n} \int_{\partial\Omega} u(x') \,  |\sigma(\DD)(x',\eta_{x'})|^{-n} \, \nu_{x'}\,,
    \end{equation}
    and similarly for $T_u$ replaced by any GTO $T_Q$ of order 0: the $u(x')$ in the integrand is then replaced by $\sigma(T_Q)(x',\eta_{x'})$.

    For the Bergman case, the Dirac operator in \propref{TripleBergman} is of the form $\DD = V_w T V_w^*$, where $T$ is a selfadjoint elliptic GTO of order 1. So we have for any $f\in C^\infty(\overline\Omega)$
    \begin{align*}
      \Trw(\mathbf T_f|\DD|^{-n}) & = \Trw(V_w\,T_{\Lambda_w}^{-1/2}\,T_{\Lambda_wf}\,T_{\Lambda_w}^{-1/2}\,V_w^* \, V_w \, \abs{T}^{-n}\,V_w^*) \\
      &= \Trw(T_{\Lambda_w}^{-1/2}\,T_{\Lambda_wf}\,T_{\Lambda_w}^{-1/2} \, \abs{T}^{-n}),
    \end{align*}
    which is treated as above.\\
    For $\DD=\mathbf T_{\mathbf {P_n}}, \mathbf{P_n}=\sum_j \overline{\partial_j r} \partial_j$ as in \remref{TPselfadjoint}, we use a similar trick to compute
    \begin{align*}
      \Trw(\mathbf T_f|\DD|^{-n}) &=  \Trw(V\,T_{\Lambda}^{-1/2}\,T_{\Lambda_f}\,T_{\Lambda}^{-1/2}\,V^* \, V\, T_{\Lambda}^{1/2} \, (\gamma \mathbf{T_{P_n}} K) \, T_{\Lambda}^{-1/2} \, V^*)\\
      &= \Trw(T_{\Lambda}^{-1/2} \, T_{\Lambda_wf} \, (\gamma \mathbf{T_{P_n}} K) \, T_{\Lambda}^{-1/2}),
    \end{align*}
    and we get the result from \eqref{gammasigmaTPnK}.

    For the triple $(\widetilde \A, \widetilde \H, \widetilde \DD )$ from \propref{unitaries}, the Dixmier traces get multiplied by 2 due to the appearance of $2\times2$ block matrices (see the last paragraph of the proof of the proposition).

    Similarly, a factor $n$ appears in the computation of Dixmier traces in the context of Section \ref{secHeisen} since the Dirac operators $\DD_{S^{2n-1}}$ and $\DD_{\mathbb{B}^n}$ involved in \propref{triplesHeisenberg} contain gamma matrices.
    From the identity $V^*\, \tau(P_j)\, V = T^{1/2}_{\Lambda}\, (\gamma\, \tau(P_j)\,K) \, T^{-1/2}_{\Lambda}$ and \lemref{gammatauPjK}, we remark that $\DD_{S^{2n-1}} = (t/2)^{-1/2}\sum_j \text{\footnotesize $\Gamma_j$} T_{Q_j}$, where the $T_{Q_j}$ are GTOs of order 1/2 whose symbols are known. Hence $\Trw(|\DD_{S^{2n-1}}|^{-2n}) = \Trw(|\DD_{\mathbb{B}^n} |^{-2n})$ is finite. We use \eqref{dixmierHardy} again to compute $\Trw(T_u|\DD_{S^{2n-1}}|^{-2n})$ and the Bergman case follows as above.

    \begin{remark}
      {\rm
      Note that pseudodifferential operators of order $k$ on $\R^n$ are transformed in GTOs of order $k/2$ on the boundary of $\Omega$ (in the beginning of Section 4), which might seem to be at odds with the fact that Dixmier-trace operators correspond to both pseudodifferential operators and GTOs of order $-n$, respectively on a compact real manifold of dimension $n$ and on the boundary of a complex domain of dimension $n$. The point is that on a compact manifold of dimension $n$ this is true, but on $\R^n$ this fails: for instance the operator $(\bbbone-\Delta)^{-n/2}$ on $\R^n$ is not even compact, much less in Dixmier-class. What one needs is first of all not to use H\"ormander but Shubin (also known as Grossman--Loupias--Stein) classes of pseudodifferential operators (i.e. with prescribed decay of symbols not only as $\xi$ goes to infinity, but as $(x,\xi)$ goes to infinity), and secondly, the order needed for the Dixmier-class is then not $-n$ but $-2n$ (see for instance \cite[Theorem 4.1]{BEY} where 
the result is stated for pseudodifferential operators on $\R^n$ of Weyl type, but it is the same for the Kohn--Nirenberg type). Since $(-2n)/2=-n$, this yields precisely the correct order for GTOs, and the contradiction disappears. \\
      Actually, this can be recast the following way: for a non unital spectral triple $(\A,\H,\DD)$, the axiom ``$\DD$ has a compact resolvent'' is replaced by ``$\pi(a) (\bbbone+\DD^2)^{-1}$ is a compact operator for any $a\in \A$''. For instance, in a triple on $\R^n$ like $\big(\text{Functions}(\R^n),\,\H=L^2(\R^n)\otimes \mathcal{C}_n,\,\DD=-i \sum_j{\footnotesize \text{ $\Gamma_j$}} \,\partial_{x_j} \big)$, where $\text{Functions}(\R^n)$ is a subalgebra of $C^\infty(\R^n)$, one can choose for $\A$ the Schwartz space on $\R^n$ to secure this property. While $(\bbbone+\DD^2)^{-n/2}=\big((\bbbone-\Delta)\otimes \bbbone_{\mathcal{C}_n}\big)^{-n/2}$ is not Dixmier-traceable, $\pi(f)(\bbbone+\DD^2)^{-n/2}$ is, so the dimension $n$ appears twice: one in the power of $\vert \DD \vert$  and the other through the algebra $\A$ (via the $n$ variables of $f$).
      }
    \end{remark}

\section{Berezin--Toeplitz star products}

  One may, in a sense, glue the spectral triples from \S{\ref{secBergm}} with different weights $w$ into a single ``composed'' spectral triple, much as Toeplitz operators on weighted Bergman spaces are ``glued'' together in the Berezin--Toeplitz quantization \cite{BMS} \cite{SchliHAB}; this actually yields a spectral triple directly related to the standard Berezin--Toeplitz star product on $\Omega$. Let us give the details.

  Assume that $\log 1/r$ is strictly plurisubharmonic on $\Omega$ (defining functions $r$ with this property exist in abundance due to the strict pseudoconvexity of $\Omega$), so that $ g_{j\overline k}(z) \vc \partial_j\overline \partial_k \log\tfrac1{r(z)}$ defines a K\"ahler metric on $\Omega$; and let $ g\vc r^{n+1}\det[\,g_{j\overline k}\,]$.

  By elementary matrix manipulations, one can check that $g\in C^\infty(\OOm)$ and (thanks to strict pseudoconvexity)
  $g>0$ on $\pOm$. (In fact $g$ coincides with the Monge--Ampere determinant $g=-\det\, [\smallmatrix r&\partial r\,\,\\ \dbar r&\partial\dbar r\endsmallmatrix]$.) \\
  Consider the weighted Bergman spaces $A^2(\Omega,w_m)$ with $w_m\vc r^m g$, $m\in \N$, which we will now denote by $A^2_m$ for brevity. Let
  $$
    \HH \vc \bigoplus_{m=0}^\infty A^2_m
  $$
  be their orthogonal direct sum, and let $\pi_m$ stand for the orthogonal projection in $\HH$ onto the $m$-th summand.
  Denote by $\mathbf N$ the ``number operator'' $ \mathbf N := \bigoplus_m (m+n+1)\,\pi_m$. \\
  For $f\in C^\infty(\OOm)$, we then have the orthogonal sums
  $$
    \TTo_f := \bigoplus_m (\mathbf T_f \text{ on } A^2_m)
  $$
  of the Toeplitz operators $\mathbf T_f$ from Section 2, acting on $\HH$.
  Clearly each $\TTo_f$ is again bounded with $\|\TTo_f\|\le\|f\|_\infty$, $(\TTo_f)^*=\TTo_{\overline f}$, and $[\TTo_f,\pi_m]=0$ for all $m$.

  Let $\cB$ denote the subset of all bounded linear operators $M$ on $\HH$
  for which $[M,\pi_m]=0$ for all $m$ and which possess an asymptotic expansion
  \begin{equation} \label{tTM}
    M \approx \sum_{j=0}^\infty \mathbf N^{-j} \,\TTo_{f_j}
  \end{equation}
  with some $f_j\in C^\infty(\OOm)$ (depending on $M$). Here ``$\approx$'' means
  that
  $$
    \Big\| \pi_m\big(M-\sum_{j=0}^{k-1}\mathbf N^{-j}\,\TTo_{f_j}\big)\,\pi_m\Big\|=O(m^{-k}) \quad\text{as } m\to+\infty   \text{ for any }k=0,1,2,\dots.
  $$

  It is the main result of the Berezin--Toeplitz quantization on $\Omega$ that
  finite products of $\TTo_f$ belong to $\cB$. More specifically, one has \cite{BMS}
  $$
    \TTo_f \TTo_g \approx \sum_{j=0}^\infty \mathbf N^{-j}\, \TTo_{C_j(f,g)}
  $$
  where
  $$
    \sum_{j=0}^\infty h^j \, C_j(f,g) \cv f\star g
  $$
  defines a star product on $(\Omega,g_{j\overline k})$. Symbolically, we can write
  $$
    \TTo_{f\star g}=\TTo_f\, \TTo_g.
  $$
  Another result is, incidentally, that
  \begin{equation}  \label{tTX}
    \| \pi_m\,\TTo_f\,\pi_m \| \to \|f\|_\infty \quad\text{as }m\to+\infty,
  \end{equation}
  implying, in particular, that for a given $M\in\cB$ the sequence $\{f_m\}$ in \eqref{tTM} is determined uniquely.

  There is a neat representation for this whole situation, as follows. (See e.g. \cite{Ecmp}, p. 235, for details in this setting;
  the idea however goes back to Forelli and Rudin.) Consider the ``unit disc bundle'' over $\Omega$:
  $$
    \wO \vc\{(z,t)\in\Omega\times\CC: |t|^2<r(z) \}.
  $$
  The fact that $r$ is a defining function for $\Omega$ implies that $\wO$
  is smoothly bounded, and the facts that $\Omega$ is strictly pseudoconvex
  and $\log 1/r$ is strictly plurisubharmonic imply that $\wO$ is strictly
  pseudoconvex. Thus we have the Hardy space $\wH\vc H^2(\wO)$ of $\wO$ and the
  GTOs $\wT_P$ there, whose symbols $P$ are now pseudodifferential operators on $\partial\wO$.
  A function in $\wH$ has the Taylor expansion in the fiber variable
  $$
    f(z,t) = \sum_{m=0}^\infty f_m(z)\, t^m.
  $$
  Denote by $\wH_m$, $m\in \N$, the subspace in $\wH$ of functions of
  the form $f_m(z)\,t^m$ (i.e. for which all the Taylor coefficients vanish except
  for the $m$-th); alternatively, $\wH_m$ is the subspace of functions in $\wH$ satisfying
  $$
    f(z,e^{i\theta}t) = e^{mi\theta}f(z,t), \quad\forall\theta\in\RR.
  $$
  Then the correspondence
  $$
    f_m(z) \,t^m \longleftrightarrow f_m(z)
  $$
  is an isometric (up to a constant factor) isomorphism of $\wH_m$ onto $A^2_{m-n-1}$. Thus (note that $A^2_{m-n-1}=\{0\}$ for $m\le n$)
  $$
    \wH = \bigoplus_{m=0}^\infty \wH_{m+n+1} \cong \bigoplus_{m=0}^\infty A^2_m = \HH.
  $$
  Furthermore, viewing a function $f\in C^\infty(\Omega)$ also as the function
  $f(z,t):=f(z)$ on $\partial\wO$ (i.e. identifying $f$ with its pullback via
  the projection map), one has, under the above isomorphism,
  $$
    \wT_f \cong \bigoplus_m \,(\mathbf T_f \text{ on }A^2_m) = \TTo_f.
  $$
  Finally, let $\wt K$ be the Poisson operator for $\wO$, and set as before
  $\wt\Lambda:=\wt K{}^*\wt K$. Thus $\wt\Lambda$ is a pseudodifferential operator on $\partial\wO$
  of order $-1$, and a positive selfadjoint compact operator on $\wH$.
  Since the fiber rotations $(z,t)\mapsto(z,e^{i\theta}t)$, $\theta\in\RR$,
  preserve holomorphy and harmonicity of functions, both $\wt K$, $\wt\Lambda$
  and the Szeg\"o projection $\wt S:L^2(\partial\wO)\to\wH$ must commute
  with them. The GTO $\wT_{\wt\Lambda}$ on $\wH$ therefore likewise commutes
  with these rotations, and hence commutes also with the projections in $\wH$
  onto $\wH_m$, i.e. is diagonalized by the decomposition $\wH=\bigoplus_m\wH_m$.
  \\
  Let $L\vc\bigoplus_m L_m$ be the operator corresponding to $\wT_{\wt\Lambda}$
  under the isomorphism $\wH\cong\HH=\bigoplus_m A^2_m$.

  \begin{prop}
    Let $\A$ be the algebra (no closures taken) generated by $\TTo_f$,
    $f\in C^\infty(\OOm)$ acting (via identity representation) on $\H\vc \HH$ and  $\DD \vc L^{-1}$.
    \\
    Then $(\A,\H,\DD)$ is a spectral triple of dimension $n+1$.
  \end{prop}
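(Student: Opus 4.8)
The plan is to reduce everything to \propref{dimspec} applied to the strictly pseudoconvex domain $\wO\subset\C^{n+1}$ constructed above. First I would use the fibrewise isometric isomorphism $\wH\cong\HH$: under it, the generators $\TTo_f$ of $\A$ correspond to the GTOs $\wT_f$ on $\partial\wO$ (the symbol $f$ being pulled back from $\Omega$), while $\DD=L^{-1}$ corresponds to $\wT_{\wt\Lambda}^{-1}$. Since each $\wT_f$ is a GTO of order $\le0$ and $\A$ is the algebra generated by these, with $\TTo_1\cong\bbbone$ and $(\TTo_f)^*\cong\wT_{\overline f}$, the algebra $\A$ is a unital involutive subalgebra of the algebra $\A_H(\wO)$ of all GTOs of order $\le0$ on $\partial\wO$, acting on $H^2(\partial\wO)$ by inclusion (hence faithfully). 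So it suffices to verify the spectral-triple axioms for the pair $\bigl(\A_H(\wO),\,\DD=\wT_{\wt\Lambda}^{-1}\bigr)$ and to record that $\dim_\C\wO=n+1$: both the axioms and the spectral dimension then pass to the subalgebra $\A\subset\A_H(\wO)$, the dimension being a property of $\DD$ alone.

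For this it remains to check that $\wT_{\wt\Lambda}^{-1}$ is a positive, selfadjoint, elliptic GTO of order $1$ on $H^2(\partial\wO)$. By Boutet de Monvel's theory $\wt\Lambda=\wt K^*\wt K$ is a classical pseudodifferential operator of order $-1$ on $\partial\wO$ — this is \eqref{princ} for $\wO$ with weight $w\equiv1$, so $m_w=0$ — elliptic with strictly positive principal symbol; moreover $\wt K$ is injective, whence $\wt\Lambda$ is injective by the same computation used for $K_w$ and $\Lambda_w$. Thus $\wT_{\wt\Lambda}$ is a positive, selfadjoint, injective, compact, elliptic GTO of order $-1$, and by the quoted result on complex powers of GTOs (\cite[Proposition 16]{E2008}) its inverse $\DD=\wT_{\wt\Lambda}^{-1}$ is again a GTO, of order $+1$, still positive, selfadjoint and elliptic. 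With this in hand the proof of \propref{dimspec} applies word for word over $\wO$: the order-$(-1)$ parametrix of $\DD$ is compact, so $\DD$ has compact resolvent (although, $\wT_{\wt\Lambda}$ being compact, $\DD$ is genuinely unbounded and densely defined); for any $a\in\A\subset\A_H(\wO)$ the commutator $\commut{\DD}{a}$ is a GTO of order $\le1+0-1=0$, hence bounded; and \cite[Theorem 13.1]{BMG1981} applied to $|\DD|=\wT_{\wt\Lambda}^{-1}$, an elliptic GTO of order $1$ on the $(2n+1)$-dimensional manifold $\partial\wO$, yields $M(\lambda)=\tfrac{\text{vol}(\Sigma_\DD)}{(2\pi)^{n+1}}\lambda^{n+1}+\mathcal O(\lambda^{n})$, so that $\Tr|\DD|^{-d'}<\infty$ precisely for $d'>n+1$. (One even obtains regularity as in \propref{dimspec}, though the statement does not ask for it.)

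The only genuinely non-routine point will be the first identification: one must make sure that, under the decomposition $\wH=\bigoplus_m\wH_m$, the operator $L=\bigoplus_m L_m$ really is the operator $\wT_{\wt\Lambda}$ of an honest GTO on $\partial\wO$, and not merely an ad hoc orthogonal sum, so that the GTO machinery (parametrices, complex powers, the Boutet de Monvel--Guillemin Weyl law) is legitimately available. This is precisely what the passage through $\wO$ buys us: the Poisson operator $\wt K$, the Szeg\"o projection $\wt S$, and hence $\wt\Lambda$ and $\wT_{\wt\Lambda}$, all commute with the fibre rotations $(z,t)\mapsto(z,e^{i\theta}t)$, so $\wT_{\wt\Lambda}$ is diagonalized by $\wH=\bigoplus_m\wH_m$ matching $\HH=\bigoplus_m A^2_m$, and it is this single GTO on $\wH$ — not its individual pieces — that one inverts and feeds into \propref{dimspec}. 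Everything else (matching $\TTo_f\leftrightarrow\wT_f$, checking that $\A$ is a unital $^*$-subalgebra of $\A_H(\wO)$, transporting the conclusion) is bookkeeping.
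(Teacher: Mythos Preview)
Your proposal is correct and follows exactly the approach of the paper: transfer via the isomorphism $\HH\cong\wH$ to the Hardy space of the disc bundle $\wO\subset\C^{n+1}$, identify $\DD$ with $\wT_{\wt\Lambda}^{-1}$ and $\TTo_f$ with $\wT_f$, and invoke \propref{dimspec} with $\dim_\C\wO=n+1$. Your write-up is in fact more explicit than the paper's (which dispatches the argument in three lines), particularly in checking that $\wT_{\wt\Lambda}^{-1}$ is a positive selfadjoint elliptic GTO of order $1$ and in stressing that the fibre-rotation equivariance is what guarantees $L$ is a single GTO rather than an ad hoc direct sum.
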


  \begin{proof}
    Using the above isomorphisms, we can actually switch from $\HH$
    to the space $\wH$, from $\cA$ to the algebra generated by $\wT_f$, $f\in
    C^\infty(\OOm)$ (identified via pullback with functions on $\partial\wO$),
    and to $\cD$ equal to $\wt\Lambda{}^{-1}$. Everything then follows in exactly
    the same way as in Section \ref{secHardy} noting that $\dim_{\C}\wO=n+1$
    (in fact, it is even the special case of the result from that section for
    functions $f$ on $\partial\wO$ that are pullbacks of functions on $\OOm$).
  \end{proof}

  Using \eqref{tTM}, we now alternatively define $\cA$ in the last result
  as a certain subalgebra of formal power series generated by $f\in C^\infty
  (\OOm)$ with the product $\star$, taking for $\pi$ the representation $f\mapsto
  \TTo_f$. More specifically, let $\kappa$ be the linear map from $\cB$ into
  the ring of formal power series
  $$
    \cN \vc C^\infty(\OOm)[[h]]
  $$
  (equipped with the usual involution $\big(\sum_m h^m \,f_m(z)\big)^*:=\sum_m h^m \,\overline{f_m(z)}$ ) given by
  $$
    \kappa: M \longmapsto \sum_{m=0}^\infty h^m f_m(z)
  $$
  for $M$ as in \eqref{tTM}. As noted previously, $\kappa$ is well defined owing
  to \eqref{tTX} (although it is not injective), and, extending as usual $\star$ from
  functions to all of $\cN$ by $\C[[h]]$-linearity,
  $$
    \kappa(MN) = \kappa(M) \star \kappa(N), \quad \kappa(M^*)=\kappa(M)^*,
  $$
  i.e. $\kappa:(\cB,\circ)\to(\cN,\star)$ is a *-algebra homomorphism.
  Then we have the following:

  \begin{theorem}
    Let $\A$ be the subalgebra over $\C[h]$ (no closures taken) of $(\cN,\star)$ generated by $\kappa(\TTo_f)$, $f\in C^\infty(\OOm)$ endowed with the representation $\pi$ on $\H\vc \HH$ be determined by
    \begin{equation} \label{tTP}
      \pi(h^m f) \vc \mathbf N^{-m} \,\TTo_f , \quad f\in C^\infty(\OOm), \,m=0,1,2,\dots,
    \end{equation}
    which is well-defined from $\cA$ into $\cB$, and $\DD\vc\bigoplus_m L_m^{-1}$ on $\HH$.\\
    Then $(\cA,\cH,\cD)$ is a spectral triple of dimension $n+1$.
  \end{theorem}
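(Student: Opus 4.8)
\emph{Reduction to the disc bundle.} The plan is to push the whole situation onto the unit disc bundle $\wO=\{(z,t)\in\Omega\times\C:|t|^2<r(z)\}$, where the statement degenerates into a variant of \propref{dimspec}. Recall from the paragraphs preceding the statement that $\wO$ is smoothly bounded and strictly pseudoconvex with $\dim_\C\wO=n+1$, and that the assignment $f_m(z)t^m\mapsto f_m(z)$ realises an isometric (up to a constant) isomorphism $\wH:=H^2(\partial\wO)\cong\HH$ intertwining $\wT_f$ with $\TTo_f$ (for $f\in\Coo(\OOm)$, pulled back to $\partial\wO$), $\wT_{\wt\Lambda}$ with $L=\bigoplus_m L_m$, and carrying $\mathbf N$ to the infinitesimal generator of the fibre $S^1$-action. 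Under this isomorphism $\DD=\bigoplus_m L_m^{-1}$ becomes $\wt\Lambda^{-1}:=\wT_{\wt\Lambda}^{-1}$, so I would record this transfer and then argue entirely on $\wH$.

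\emph{The operator $\DD$.} Applying \secref{BHspaces} to $\wO$ with the trivial weight, $\wt\Lambda=\wt K{}^*\wt K$ is a positive, selfadjoint, elliptic, compact pseudodifferential operator of order $-1$ on $\partial\wO$ (its principal symbol is $\tfrac12\|\xi'\|^{-1}$, nonvanishing on $\Sigma$); hence, by the stability of the GTO calculus under complex powers recalled in \secref{BHspaces}, $\DD=\wt\Lambda^{-1}$ is a positive selfadjoint elliptic GTO of order $1$ on $\wH$ with $\sigma(\DD)>0$. It therefore has a parametrix of order $-1$, which is compact — indeed $\DD^{-1}=\wT_{\wt\Lambda}$ is compact — so $\DD$ has compact resolvent; and, $\abs{\DD}=\DD$ being positive, \cite[Theorem 13.1]{BMG1981} applied exactly as in the proof of \propref{dimspec}, but now on $\partial\wO$ and with $\dim_\C\wO=n+1$, yields $\Tr\abs{\DD}^{-d'}<\infty$ precisely for $d'>n+1$, i.e. spectral dimension $n+1$.

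\emph{Algebra, representation, commutators.} The set $\cA$ contains $1=\kappa(\TTo_1)$ and is stable under the involution of $\cN$ since $\kappa(\TTo_f)^*=\kappa(\TTo_{\overline f})=\overline f$, so it is a unital involutive algebra; $\pi$ is, as the statement asserts, a well-defined $^*$-representation of $(\cA,\star)$ on $\HH$ with values in $\cB$, and $\kappa\circ\pi=\mathrm{id}_\cA$ by \eqref{tTX}, so $\pi$ is faithful. For the boundedness of $[\DD,\pi(a)]$ I would use that $\mathbf N=\bigoplus_m(m+n+1)\pi_m$ is scalar on each summand of $\HH=\bigoplus_m A^2_m$, hence commutes both with $\DD=\bigoplus_m L_m^{-1}$ and with every $\TTo_f$; consequently any $\pi(a)$ ($a\in\cA$) can be written as a finite sum $\sum_i \mathbf N^{-k_i}Q_i$ with each $Q_i$ a polynomial in the $\TTo_f$'s, whence $[\DD,\pi(a)]=\sum_i \mathbf N^{-k_i}[\DD,Q_i]$. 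Transported to $\partial\wO$, each $Q_i$ is a polynomial in the GTOs $\wT_f$ of order $\le0$, so $[\DD,Q_i]=[\wt\Lambda^{-1},Q_i]$ is a commutator of GTOs of orders $1$ and $\le0$, hence a GTO of order $\le0$, hence bounded; and multiplying by the bounded operators $\mathbf N^{-k_i}$ preserves boundedness. Since $\abs{\DD}=\DD$ is a GTO of order $1$, the same bookkeeping shows $\delta^k(\pi(a))$ and $\delta^k([\DD,\pi(a)])$ are of the form $\sum_i\mathbf N^{-k_i}(\text{GTO of order}\le0)$ for every $k$, hence bounded, so the triple is regular. This gives a (regular) spectral triple of dimension $n+1$; in effect it is the spectral triple of the Proposition preceding the statement, restricted to pullbacks of functions on $\OOm$ and re-presented through $\kappa$.

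\emph{Main obstacle.} I expect the delicate step to be none of the estimates above — these are transcriptions of \secref{secHardy} — but rather the well-definedness of $\pi$: that the formula \eqref{tTP} really descends to $(\cA,\star)$, equivalently that no nonzero operator that is ``negligible'' in the sense of \eqref{tTM} can be a polynomial in the $\TTo_f$'s and $\mathbf N^{-1}$. This is exactly where the Berezin--Toeplitz quantization theorem of \cite{BMS}, the structural input of this section, must be invoked.
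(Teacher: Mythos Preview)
Your proposal is correct and matches the paper's argument: the spectral-triple axioms and the dimension come from the disc-bundle transfer (this is precisely the preceding Proposition), leaving only the well-definedness and faithfulness of $\pi$, and you handle faithfulness exactly as the paper does, via $\kappa\circ\pi=\operatorname{id}_{\cA}$. The paper dispatches well-definedness in one line --- it follows because $\kappa:(\cB,\circ)\to(\cN,\star)$ is a $^*$-algebra homomorphism (itself a consequence of the Berezin--Toeplitz result) --- so your ``main obstacle'' paragraph points at the right ingredient but overestimates the work that remains.
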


  \begin{proof}
    In view of the preceding result, the only thing we need to check is that $\pi$ is well-defined and faithful. The former is immediate from
    \eqref{tTP} and the fact that $\kappa:(\cB,\circ)\to(\cN,\star)$ is a \hbox{*-algebra} homomorphism. For the faithfulness, note that
    $\kappa\circ\pi=\operatorname{id}$ on $\cA$; thus $\pi(a)=0$ implies $a=\kappa\big(\pi(a)\big)=0$.
  \end{proof}

  Again, proceeding as in \propref{unitaries}, one can adjoin to the last construction an appropriate unitary GTOs on $\partial\wO$ to obtain also non-positive operators $\widetilde\cD$ (cf. Remark \ref{UTrem}).

\section*{Appendix}

  \section*{Proof of Proposition \ref{Wtunitary}}

    \renewcommand{\theequation}{A.\arabic{equation}}
    \setcounter{equation}{0}

    \begin{proof} (see also \cite{Zhu2012})
      The integral in $W_\cstFock$ converges since the function $x\mapsto e^{(-z^2 + 2\sqrt{2}xz - x^2)/2\cstFock}$ is square integrable or each $z\in\C^n$. Now, take a closed curve $\Gamma = \Gamma_1\times \dots \times\Gamma_n$ in $\C^n$ and compute:
      \begin{align*}
        \oint_{z\in\Gamma} (W_\cstFock f)(z) \,dz = & \, (\pi\cstFock)^{-n/4} \oint_{z\in\Gamma} \int_{x\in \R^n} e^{(-z^2 + 2\sqrt{2}\,xz - x^2)/2\cstFock} \, f(x)\, dx \, dz.
      \end{align*}
      By Fubini's theorem, we can invert the integrals and get
      \begin{align*}
        \oint_{z\in\Gamma} (W_\cstFock f)(z)\, dz = & \, (\pi\cstFock)^{-n/4} \int_{x\in \R^n} e^{-x^2/2\cstFock} f(x) \big[\oint_{z\in\Gamma} e^{(-z^2 + 2\sqrt{2}\,xz)/2\cstFock} \, dz \big]\, dx.
      \end{align*}
      For a fixed $x\in\R^n$, the analyticity of each $z_k \mapsto e^{(-z_k^2 + 2\sqrt{2}\,x_kz_k)/2\cstFock}$, $k=1\dots n$, induces the vanishing of the integral over $\Gamma$, and so $\oint_{z\in\Gamma} (W_\cstFock f)(z) \,dz = 0$. According to the Morera's theorem, the map $z\mapsto (W_\cstFock f)(z)$ is holomorphic.

      The fact that the Segal--Bargmann transform is a unitary will follow by showing that $W_\cstFock$ maps the orthonormal basis of $L^2(\R^n)$ given by the Hermite functions, to the orthonormal basis of the Fock space we have introduced. To complete the proof, we define the operator $A_j$ on $L^2(\R^n)$, with domain the usual Sobolev space $H^1(\R^n)$, and its Hilbert adjoint $A_j^*$:
      \begin{align*}
        A_j \vc \tfrac{1}{\sqrt{2}}(x_j + \cstFock\,\partial_{x_j}) \, \text{ and } \, A_j^* \vc \tfrac{1}{\sqrt{2}}(x_j - \cstFock\,\partial_{x_j}),
      \end{align*}
      and first show the following result:
      \begin{align} \label{WA_zW}
        W_\cstFock\, A_j^* = z_j\, W_{\cstFock},  \text{ for }j=0,\dots ,n,
      \end{align}
      which follows using integration by parts: for $f\in H^1(\R^n)$,
      \begin{align*}
        \big(W_\cstFock( A_j^* f)\big)(z)  = & \, (\pi\cstFock)^{-n/4}\int_{x\in \R^n} e^{(-z^2 + 2\sqrt{2}\,xz - x^2)/2\cstFock} \, \tfrac{1}{\sqrt{2}}(x_j - \cstFock\partial_{x_j})\,f(x)\, dx \\
        = & \, (\pi\cstFock)^{-n/4}\int_{x\in \R^n} e^{(-z^2 + 2\sqrt{2}\,xz - x^2)/2\cstFock} \, \tfrac{1}{\sqrt{2}} \big(x_j + \cstFock\tfrac{1}{2\cstFock} ( 2\sqrt{2}z_j - 2x_j)\big) \, f(x) \, dx \\
        = & \, (\pi\cstFock)^{-n/4}\int_{x\in \R^n} e^{(-z^2 + 2\sqrt{2}\,xz - x^2)/2\cstFock} \, z_j \, f(x) \, dx = z_j \, (W_{\cstFock}f)(z).
      \end{align*}
      Defining now the operator $A \vc \prod_{j=1}^n A_j$ (the order is not important since the $A_j$ commute), the solution of the equation $(A\,h_0)\,(x) = 0$ is given by $h_0(x) = (\pi\cstFock)^{-n/4}\,e^{-x^2/2\cstFock}$, where the constant has been chosen to normalize $h_0$ on $L^2(\R^n,dx)$. \\
      Moreover $(W_\cstFock\, h_0) (z) = (\pi\cstFock)^{-n/4}\int_{x\in \R^n} e^{(-z^2 + 2\sqrt{2}\,xz - x^2)/2\cstFock} \,  (\pi\cstFock)^{-n/4}\, e^{-x^2/2\cstFock}   \, dx = 1$.
      Intertwining \eqref{WA_zW} with $W_\cstFock\, h_0=1$, we get for $\alpha\in \N^n$, $W_\cstFock \big((A^*)^{\alpha}\, h_0\big) (z)= z^{\alpha}$, where $(A^*)^{\alpha} = \prod_{j=1}^n (A_j^*)^{\alpha_j}$. The functions
      \begin{align}
        \label{halpha}
        h_\alpha(x)\vc (\cstFock^{\abs{\alpha}}\,\alpha!)^{-1/2}\,\big((A^*)^{\alpha}\, h_0\big) (x), \quad x\in \R^n
      \end{align}
      are the Hermite functions and they form an orthonormal basis of $L^2(\R^n,dx)$. \\
      So $W_\cstFock$ maps unitarily the space $L^2(\R^n)$ onto $\mathscr{F}_\cstFock$ since
      $W_\cstFock : h_\alpha \mapsto (\cstFock^{\abs{\alpha}}\alpha!)^{-1/2} \, z^{\alpha} = u_\alpha$.

      Since $W_\cstFock$ is unitary, its inverse $W_\cstFock^{-1}$ is equal to its Hilbert adjoint and for $g \in L^2(\R^n,dx)$,
      \begin{align*}
        \scalp{W_\cstFock \,g}{f}_{\mathscr{F}_\cstFock} = & \, (\pi\cstFock)^{-n} \int_{z\in\C^n} (\pi\cstFock)^{-n/4}\int_{x\in\R^n} e^{(-z^2+2\sqrt{2}\,xz - x^2)/2\cstFock} \, g(x)\,\overline{f(z)} \, e^{-\abs{z}^2/\cstFock} \, dx\,d\mu(z) \\
        = & \, \int_{x\in\R^n} g(x)\, \Big(\overline{  (\pi\cstFock)^{-5n/4} \int_{z\in\C^n}  e^{(-\bar{z}^2 + 2\sqrt{2}\,x\bar{z} - x^2)/2\cstFock} \, f(z) \, e^{-\abs{z}^2/\cstFock} \,d\mu(z)}\Big)\,dx \\
        = & \, \scalp{g}{W_\cstFock^* f}_{L^2(\R^n,dx)}.
      \end{align*}
    \end{proof}

  \noindent \\
  \noindent {\bf Acknowledgements}\\
  The authors would like to thank Louis Boutet de Monvel for helpful discussions about the theory of generalized Toeplitz operators.



\begin{thebibliography}{34}


\bibitem{BEY}
  H. Bommier-Hato, M. Engli{\v{s}} and E-H. Youssfi,
  \emph{Dixmier trace and the Fock space},
  Bulletin des Sciences Math\'ematiques,
  2013.

\bibitem{BEY2013}
  H. Bommier-Hato, M. Engli{\v{s}} and E-H. Youssfi,
  \emph{Dixmier classes on generalized Segal--Bargmann--Fock spaces},
  J. Funct. Anal.,
  \textbf{266}: 2096--2124,
  2014.

\bibitem{BMS}
  M. Bordemann, E. Meinrenken, and M. Schlichenmaier,
  \emph{Toeplitz quantization of K\"ahler manifolds and $gl(n)$, $n\to\infty$ limits},
  Comm. Math. Phys.,
  \textbf{165}: 281--296,
  1994.

\bibitem{BdMActa}
  L. Boutet de Monvel,
  \emph{Boundary problems for pseudo-differential operators},
  Acta Math., \textbf{126}: 11--51, 1971.

\bibitem{BM1974}
  L. Boutet de Monvel,
  \emph{Hypoelliptic operators with double characteristics and related pseudo-differential operators},
  Comm. Pure Appl. Math., \textbf{27}: 585--639, 1974.

\bibitem{BM1979}
  L. Boutet de Monvel,
  \emph{On the index of Toeplitz operators of several complex variables},
  Inventiones math.,
  \textbf{50}: 249--272,
  1979.

\bibitem{BM2008}
  L. Boutet de Monvel,
  \emph{Vanishing of the logarithmic trace of generalized Szeg\"o projection},
  Algebraic Analysis of Differential Equations from Microlocal Analysis to Exponential Asymptotics: Festschrift in Honor of Takahiro Kawai (T. Aoki, H. Majima, Y. Takei, and N. Tose, eds.), Springer, 2008.

\bibitem{BMG1981}
  L. Boutet de Monvel and V. Guillemin,
  \emph{The spectral theory of Toeplitz operators},
  Princeton University Press, Princeton, N.J,
  1981.

\bibitem{Con94}
  A. Connes,
  \emph{Noncommutative geometry}, Academic Press, 1994.

\bibitem{ConnesReconstruction}
  A. Connes,
  \emph{On the spectral characterization of manifolds},
  J. Noncommut. Geom.,
  \textbf{7}: 1--82,
  2013.

 \bibitem{ConnesMarcolli}
  A. Connes and M. Marcolli,
  \emph{Noncommutative Geometry, Quantum Fields and Motives},
  Colloquium Publications, Vol. 55, American Mathematical Society, 2008.

\bibitem{CM1993}
  A. Connes and H. Moscovici,
  \emph{Transgression and the Chern character of finite-dimensional K-cycles},
  Commun. Math. Phys.,
  \textbf{155}: 103--122,
  1993.

\bibitem{Ecmp}
  M. Engli\v s,
  \emph{Weighted Bergman kernels and quantization},
  Comm. Math. Phys.,
  \textbf{227}: 211--241,
  2002.

\bibitem{E2008}
  M. Engli{\v{s}},
  \emph{Toeplitz operators and weighted Bergman kernels},
  J. Funct. Analysis,
  \textbf{255}(6): 1419--1457,
  2008.

\bibitem{E2010}
  M. Engli{\v{s}},
  \emph{Weighted Bergman kernels for logarithmic weights},
  Pure Appl. Math. Quarterly (Kohn special issue),
  \textbf{6}: 781--804,
  2010.

\bibitem{EGZ2009}
  M. Engli{\v{s}}, K. Guo and G. Zhang,
  \emph{Toeplitz and Hankel operators and the Dixmier traces on the unit ball of $\C^n$},
  Proc. Amer. Math. Soc.,
  \textbf{137}: 3669--3678,
  2009.

\bibitem{ER2009}
  M. Engli{\v{s}} and R. Rochberg,
  \emph{The Dixmier trace of Hankel operators on the Bergman space}.
  J. Funct. Analysis,
  \textbf{257}:1445--1479,
  2009.

\bibitem{EZ2010}
  M. Engli{\v{s}} and G. Zhang,
  \emph{Hankel operators and the {D}ixmier trace on strictly pseudoconvex domains},
  Documenta Mathematica,
  \textbf{15}: 601--622,
  2010.

\bibitem{GBVF}
  J. M. Gracia-Bond\'{\i}a, J. C. V\'arilly and H. Figueroa.
  \emph{Elements of Noncommutative Geometry},
  Birkh\"auser Advanced Texts, Birkh\"auser, Boston, 2001.

\bibitem{Grubb1996}
  G. Grubb,
  \emph{Functional Calculus of Pseudodifferential Boundary Problems},
  Birkh\"auser, Boston,
  1996.

\bibitem{GKV2003}
  S. Grudsky, A.Karapetyants, and N. Vasilevski,
  \emph{Toeplitz operators on the unit ball in {$\mathbb{C}^n$} with radial symbols},
  J. of Oper. Theory,
  \textbf{49}(2): 325--346,
  2003.

\bibitem{Guillemin1984}
  V. Guillemin,
  \emph{Toeplitz operators in n-dimensions},
  Integral Equations and Operator Theory,
  \textbf{7}: 145--205,
  1984.

\bibitem{GWZ2012}
  K. Guo, K. Wang and G. Zhang,
  \emph{Trace formulas and $p$-essentially normal properties of quotient modules on the bidisk},
  J. Oper. Theory,
  \textbf{67}: 511--535,
  2012.

\bibitem{HorI}
  L. H\"ormander,
  \emph{The Analysis of Linear Partial Differential Operators I: Distribution Theory and Fourier Analysis},
  Springer-Verlag.

\bibitem{Howe1980}
  R. Howe,
  \emph{Quantum mechanics and partial differential equations},
  J. Funct. Anal.,
  \textbf{38}(2): 188--254,
  1980.

\bibitem{KarSchli}
  A. Karabegov, M. Schlichenmaier,
  \emph{Identification of Berezin-Toeplitz deformation quantization},
  J. reine angew. Math. \textbf{540} (2001), 49--76.

\bibitem{Khalkhali}
  M. Khalkhali, \emph{Basic noncommutative geometry}, EMS, Z\"urich, 2009.

\bibitem{Magenes1961}
  J. Lions and E. Magenes,
  \emph{{Probl\`{e}mes aux limites non homog\`{e}nes (I)}}.
  Dunod Paris,
  1968.

\bibitem{RS}
  S. Rempel, B.-W. Schulze,
  \emph{Index Theory of Elliptic Boundary Problems},
  Akademie-Verlag, Berlin, 1982.

\bibitem{SchliHAB}
  M. Schlichenmaier,
  \emph{Deformation quantization of compact K\"ahler manifolds by Berezin-Toeplitz quantization},
  Proceedings of the Conference Moshe Flato 1999 (G.~Dito and D.~Sternheimer, editors),
  Kluwer 2000, pp. 289--306, arXiv:math.QA/9910137.

\bibitem{Strichartz}
  R. Strichartz.
  \emph{A functional calculus for elliptic pseudo-differential operators},
  American J. of Mathematics,
  \textbf{94}: 711--722,
  1972.

\bibitem{TaylorFIO}
  M. Taylor,
  \emph{Fourier integral operators and harmonic analysis on compact manifold (second version)}.
  Proc. Sympos. Pure Math.,
  \textbf{35}(Part 2),
  1979.

\bibitem{Taylor1984}
  M. Taylor,
  \emph{Noncommutative microlocal analysis. Part 1 (revised edition)},
  Memoirs Amer. Math. Soc. \textbf{52}, no.~313, 1984.

\bibitem{Zhu2005}
  K. Zhu,
  \emph{Spaces of holomorphic functions in the unit ball},
  Springer, New York,
  2005.

\bibitem{Zhu2012}
  K. Zhu,
  \emph{Analysis on Fock spaces},
  Springer, New York,
  2012.

\end{thebibliography}
\end{document}